\documentclass[a4paper,11pt,oneside,reqno]{amsart}
\usepackage[utf8]{inputenc}
\usepackage{amsmath,amsthm,amsfonts,latexsym,amssymb,bm,enumerate}
\usepackage{ae}
\usepackage{cite}
\usepackage{float}
\usepackage{lmodern}
\usepackage[T1]{fontenc}

\usepackage{color}

\usepackage[colorlinks=true]{hyperref}

\definecolor{dark-red}{rgb}{.54,.0,.0}
\definecolor{dark-green}{rgb}{.0,.4,.0}
\definecolor{dark-blue}{rgb}{.04,.04,.4}

\hypersetup{linkcolor=dark-red, urlcolor=dark-blue, citecolor=dark-green}

\usepackage[dvips]{graphicx}
\usepackage{psfrag}
\DeclareGraphicsExtensions{.eps,.art,.ART,.ps}

\usepackage{rotating}

\newcounter{mnotecount}[section]

\usepackage[dvips]{graphicx}
\usepackage{psfrag}
\DeclareGraphicsExtensions{.eps,.art,.ART,.ps}

\DeclareFontFamily{OT1}{pzc}{}
\DeclareFontShape{OT1}{pzc}{m}{it}%
              {<-> s * pzcmi8t}{}
\DeclareMathAlphabet{\mathpzc}{OT1}{pzc}%
                                {m}{it}

\newtheorem{Thm}{Theorem}[section]

\newtheorem{Lem}[Thm]{Lemma}
\newtheorem{Cor}[Thm]{Corollary}
\newtheorem{Prop}[Thm]{Proposition}

\newtheorem{Rmk}[Thm]{Remark}
\newtheorem{Def}[Thm]{Definition}
\newtheorem{Problem}[Thm]{Problem}

\newcommand{\cal}{\mathcal}

\renewcommand{\Psi}{\rho}

\newcommand{\eps}{\varepsilon}

\newcommand{\vlinha}{\frac{e^2}{r}+\frac{\Lambda}{3}r^3-\omega}
\newcommand{\vlinhazz}{\frac{e^2}{r}+\frac{\Lambda}{3}r^3-\omega_0}
\newcommand{\vlinhaz}{\frac{e^2}{{r}_0(0)}+\frac{\Lambda}{3}{r}_0^3(0)-\omega_0}
\newcommand{\muu}{1-\frac{2\omega}{r}+\frac{e^2}{r^2}-\frac{\Lambda}{3}r^2}
\newcommand{\muuz}{1-\frac{2\omega_0}{\tilde{r}_0}+\frac{e^2}{\tilde{r}_0^2}-\frac{\Lambda}{3}\tilde{r}_0^2}
\newcommand{\truemu}{\frac{2\omega}{r}-\frac{e^2}{r^2}+\frac{\Lambda}{3}r^2}
\newcommand{\mysigma}{{\cal P}}
\renewcommand{\omega}{\varpi}
\newcommand{\myumax}{U}
\newcommand{\myU}{U'}
\newcommand{\Aw}{A_{\omega}}
\newcommand{\ru}{\partial_r(1-\mu)(r_+,\varpi_0)}
\newcommand{\ckrm}{\check r_-}
\newcommand{\ckrp}{\check r_+}

\newcommand{\cg}{\Gamma}
\newcommand{\vgr}{v_{\ckrm}}
\newcommand{\ugr}{u_{\ckrm}}
\newcommand{\ugrp}{u_{\ckrp}}
\newcommand{\vgrp}{v_{\ckrp}}
\newcommand{\ug}{u_r}
\newcommand{\vg}{v_r}
\newcommand{\ugs}{u_s}
\newcommand{\vgs}{v_s}
\newcommand{\vgsd}{v_{s_2}}
\newcommand{\ugz}{u_{r_0}}
\newcommand{\vgz}{v_{r_0}}
\newcommand{\ckr}{\check{r}}
\newcommand{\uckr}{u_{\ckr}}
\newcommand{\vckr}{v_{\ckr}}
\newcommand{\udz}{u_{r_+-\delta}}
\newcommand{\vdz}{v_{r_+-\delta}}
\newcommand{\udp}{u'_{r_+-\delta}}
\newcommand{\gam}{\gamma}
\newcommand{\ugam}{u_{\gam}}
\newcommand{\vgam}{v_{\gam}}

\begin{document}

\newcounter{enumii_saved}

\title[Global uniqueness with a cosmological constant - Part 1]{On the global uniqueness for the Einstein-Maxwell-scalar field system with a cosmological constant \\ \vspace{.2cm}
{\small Part 1. Well posedness and breakdown criterion}}

\author{Jo\~ao L.~Costa}
\author{Pedro M.~Gir\~ao}
\author{Jos\'{e} Nat\'{a}rio}
\author{Jorge Drumond Silva}

\address{Jo\~ao L.~Costa: 
ISCTE - Instituto Universitário de Lisboa, Portugal
and 
Center for Mathematical Analysis, Geometry and Dynamical Systems,
Instituto Superior T\'ecnico, Universidade de Lisboa, Portugal
}
\email{jlca@iscte.pt}

\address{Pedro M.~Gir\~ao, Jos\'{e} Nat\'{a}rio and Jorge Drumond Silva: 
Center for Mathematical Analysis, Geometry and Dynamical Systems,
Instituto Superior T\'ecnico, Universidade de Lisboa, Portugal}
%Instituto Superior T\'ecnico, Universidade de Lisboa, Av.\ Rovisco Pais, 1049-001 Lisboa, Portugal
\email{pgirao@math.ist.utl.pt}
\email{jnatar@math.ist.utl.pt}
\email{jsilva@math.ist.utl.pt}

\subjclass[2010]{Primary 83C05; Secondary 35Q76, 83C22, 83C57, 83C75}
\keywords{Einstein equations, black holes, strong cosmic censorship, Cauchy horizon, scalar field, spherical symmetry}
\thanks{Partially funded by FCT/Portugal through project PEst-OE/EEI/LA0009/2013.
P.~Girão and J.~Silva were also partially funded by FCT/Portugal through grants PTDC/MAT114397/2009 and UTA$\underline{\ }$CMU/MAT/0007/2009.}

\maketitle

\iffalse
$\ckr$\ \ \ ckr

$\ckrm$\ \ \ ckrm

$\ckrp$\ \ \ ckrp

$\cg$\ \ \ cg

$\ugr$\ \ \ ugr

$\vgr$\ \ \ vgr

$\ugrp$\ \ \ ugrp

$\vgrp$\ \ \ vgrp

$\ug$\ \ \ ug

$\vg$\ \ \ vg

$\ugs$\ \ \ ugs

$\vgs$\ \ \ vgs

$\vgsd$\ \ \ vgsd

$\ugz$\ \ \ ugz

$\vgz$\ \ \ vgz

$\uckr$\ \ \ uckr

$\vckr$\ \ \ vckr

$\udz$\ \ \ udz

$\vdz$\ \ \ vdz

$\udp$\ \ \ udp

$\gam$\ \ \ gam

$\ugam$\ \ \ ugam

$\vgam$\ \ \ vgam

$\vlinha$\ \ \ vlinha

$\vlinhaz$\ \ \ vlinhaz

$\vlinhazz$\ \ \ vlinhazz

$\muu$\ \ \ muu

$\muuz$\ \ \ muuz

$\truemu$\ \ \ truemu

$\omega$\ \ \ omega

$\ru$\ \ \ ru
\fi

% PART 1

%\vspace{-.3cm}

\begin{center}
{\bf Abstract}
\end{center}

This paper is the first part of a trilogy dedicated to the following problem: 
given spherically symmetric characteristic
initial data for the Einstein-Maxwell-scalar
field system with a cosmological constant $\Lambda$,
with the data on the outgoing initial null
hypersurface given by a subextremal Reissner-Nordstr\"{o}m black
hole event horizon, study the future extendibility of the corresponding
maximal globally hyperbolic development (MGHD) as a  ``suitably regular'' Lorentzian manifold. 

In this first part we establish well posedness of the Einstein equations for characteristic data satisfying the minimal regularity conditions
leading to classical solutions. We also identify the appropriate notion
of maximal solution, from which the construction of the corresponding MGHD follows, and determine breakdown criteria. This is the unavoidable starting point of the analysis;
our main results will depend on the detailed understanding of these fundamentals.

In the second part of this series~\cite{relIst2} we study the stability of the radius function at the Cauchy horizon. In the third 
and final paper~\cite{relIst3} we show that, depending on the decay rate of the initial data, mass inflation may or may not occur; in fact, it
is even possible to have (non-isometric) extensions of the spacetime across the Cauchy horizon as classical solutions of the Einstein equations. 

\newpage

{
\setcounter{tocdepth}{1}
\tableofcontents
}

\section{Introduction}
\subsection{The problem and its context in Mathematical Physics}
In general relativity, the question of determinism reduces to the mathematical
problem of global uniqueness for solutions of the Einstein equations. This is a consequence of
the central role played by the well posedness results for the Cauchy
problem, first established in the seminal work of Choquet-Bruhat~\cite{Choquet:1952} and Choquet-Bruhat
and Geroch~\cite{Choquet:1969}.

The global uniqueness issue lies beyond the primary difficulty concerning the diffeomorphism
invariance of the equations. It can be conveniently framed by introducing the concept of a
{\em maximal globally hyperbolic development} (MGHD), informally the largest
Lorentzian manifold $({\cal M},g)$ determined, via Einstein's equations, by the initial
data.\footnote{For the precise definitions and a modern account of the Cauchy problem in general
relativity, see~\cite{RingstromCauchy}.} Existence and uniqueness (up to diffeomorphism) of a MGHD
is then the outcome of the well posedness results mentioned above. Remarkably, the study of exact solutions
reveals the possibility of constructing non-isometric extensions of the MGHD. So,
we see that uniqueness of the MGHD is, in fact, a local result: a na\"\i ve version
of  global uniqueness, aiming for full generality, simply fails.

The boundary of the MGHD as a proper submanifold of a larger spacetime,
when such an embedding is possible, is known as the Cauchy horizon. An important class of
examples is provided by the three parameter
[$(M,4\pi e,\Lambda)=$ (mass, charge, cosmological constant)] family of solutions to the Einstein-Maxwell equations
known as the Reissner-Nordstr\"{o}m family.\footnote{Throughout this work we will simply use ``Reissner-Nordst\"om'' to mean any of the anti-de Sitter ($\Lambda<0$), the asymptotically flat ($\Lambda=0$), or the de
Sitter ($\Lambda>0$) Reissner-Nordstr\"{o}m solutions.}
Within a relevant parameter range, and given appropriate initial data, all solutions of this family contain a Cauchy
horizon inside the black hole region.
In fact, all observers (causal curves) entering the black hole region will reach the Cauchy horizon in finite proper
time and will ``safely'' cross such horizon. However, their future history beyond that point (which might continue indefinitely)
cannot be uniquely determined from the initial data (see Figure~\ref{fig23}).

\begin{figure}[h!]
\begin{center}
\psfrag{m}{\tiny MGHD}
\psfrag{e}{\tiny event horizon}
\psfrag{h}{\tiny Cauchy horizon}
\psfrag{s}{\tiny singularity $r=0$}
\psfrag{c}{\tiny causal curve}
\includegraphics[scale=.9]{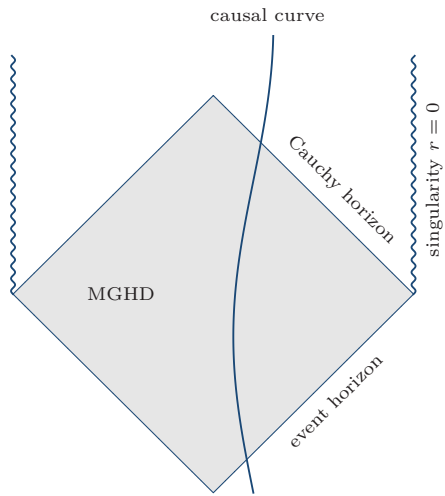}
\end{center}
\vspace{-1.5cm}
\caption{Penrose diagram for part of the Reissner-Nordstr\"{o}m solution.}\label{fig23}
\end{figure}

Some early important insights concerning global uniqueness came from an heuristic argument,
complemented with numerical experiments, by Simpson and Penrose~\cite{SimpsonInternal}, who provided
evidence in favor of the instability of the Cauchy horizon of some Reissner-Nordstr\"{o}m solutions with $\Lambda=0$. Later,
Israel and Poisson~\cite{IsraelPoisson} identified the blow up of a scalar invariant known as the Hawking mass as the source of the
instability, in a process known as {\em mass inflation}.
In view of these developments, the expectation became that generic perturbations of such solutions
should turn the Cauchy horizon into a singularity beyond which the spacetime could not be continued in any meaningful way.
This reinstated the belief in global uniqueness as a generic property of reasonable initial value problems
for the Einstein equations, an idea substantiated in Penrose's {\em strong cosmic censorship conjecture},\footnote{For
those wondering about the choice of such a ``baroque name''~\cite{ChruscielSCC}
and not familiar with its history, it might help to note that it is related to
a sibling conjecture, the {\em weak cosmic censorship}, which forbids the generic
existence of {\em naked singularities}.}
see~\cite{PenroseSingularities}, \cite{ChruscielSCC}, \cite{ChristodoulouGlobalnew} and~\cite{DafermosBlack}.

In this series of papers we will study the relation between Cauchy horizon stability and global uniqueness by
considering the full non-linear evolution of an appropriate Einstein-matter
system to the future of the event horizon (the boundary of the black hole region).
Arguably, the simplest formulation of the global uniqueness question within this framework takes the following form:

\begin{Problem}
\label{problema}
  Given spherically symmetric characteristic
  initial data for the Einstein-Maxwell-scalar
  field system with a cosmological constant $\Lambda$,
  with the data on the outgoing initial null hypersurface
  given by a (complete) subextremal Reissner-Nordstr\"{o}m black
  hole event horizon with non-vanishing charge $4\pi e$, and the remaining data otherwise free, study the 
  future extendibility of the corresponding MGHD as a  ``suitably regular'' Lorentzian manifold.
\end{Problem}

 Let us take a moment to discuss the choices made. We take characteristic initial data due to the null geometry of event horizons.
Spherical symmetry is compatible with any choice of sign for the cosmological constant, and at the same time allows us to reduce the Einstein
equations to a $1+1$ evolution problem, which is considerably simpler than its higher dimensional counterpart.
The matter model, a self-gravitating real massless scalar field, provides the simplest
non-pathological Einstein-matter system with dynamical degrees of freedom in spherical symmetry,
and admits the entire Reissner-Nordstr\"{o}m family as solutions. Moreover, it exhibits a wavelike behavior reminiscent of the
general Einstein vacuum equations. A non-vanishing charge parameter is needed to exclude the Schwarzschild subfamily, 
whose solutions do not contain a Cauchy horizon to start with.
 
A symmetric model is, a priori, non-generic. Nonetheless, there exists a relation between the spherically symmetric self-gravitating scalar field model
and the problem of vacuum collapse without symmetries, which
has been particularly enlightening and fruitful (see \cite{Christodoulou:2008}).
In fact, recently \cite{LukWeak} explored this relation to obtain the first 
promising steps towards the understanding of the 
stability of Cauchy horizons without symmetry assumptions. We also note that, in
our framework, the charge is topological, hence non-dynamical. 
We refer the reader to the discussion 
in~\cite{KommemiGlobal}, where preliminary results for a more realistic matter model are obtained.

The question of the regularity of the possible extensions is of paramount importance and  was
not precisely formulated above. Obtaining a precise formulation is, in fact, one of the challenges of the
problem. The definition of  a ``suitably regular'' extension should, of course, exclude artificial extensions, like, for instance,
taking the disconnected union of the MGHD with another Lorentzian manifold. Having done that, we are still left with several
possibilities. Let us start by considering standard regularity requirements. We will concentrate
on the regularity of the metric, but the regularity requirements for the matter fields,
in our case the scalar field, must also be discussed.

{\em Inextendibility of the metric in $C^2$}: This is motivated by the fact that the Einstein equations are
of second order. Moreover, causality theory (which is extensively used in the proof of the existence and uniqueness of a MGHD) usually assumes 
the metric to be $C^2$ (see~\cite{ChruscielGrant}). 
It was this criterion that was used
by Ringstr\"om in his seminal work on strong cosmic censorship 
for Gowdy symmetry \cite{RingstromGowdy}. Clearly, the existence of $C^2$ extensions would provide
strong evidence against any reasonable form of global uniqueness, unless, for some unlikely and mysterious motive,
the spacetime structure would turn out to be uniquely determined beyond the corresponding Cauchy horizons
(see the discussion in~\cite{ChruscielSCC}). On the other hand, $C^2$-inextendibility does not necessarily provide a
compelling argument in favor of global uniqueness, since there are relevant solutions of the Einstein equations whose regularity is
well below this threshold. For instance, the impulsive wave solutions of Luk and Rodnianski~\cite{LukLocal, LukLocal2},
generalizing the plane gravitational waves of Khan and Penrose~\cite{KhanPenrose} and Szekeres~\cite{SzekeresColliding}, admit a Dirac delta singularity in the curvature along
a null hypersurface, which in no way should be considered as a terminal boundary of spacetime. Another remarkable
example is provided
by  Ori's solution~\cite{OriInner}, where the effect of the pointwise blow-up of the Kretschmann scalar at the Cauchy
horizon, corresponding to infinite tidal forces there, does not necessarily lead to the ``destruction'' of an
observer crossing the horizon; more precisely, a double integral of the Kretschmann scalar remains finite.
Other solutions to the Einstein equations with lower regularity than $C^2$ have been studied in the literature (see for instance \cite{ChristodoulouBounded, LeFloch2007, LeFloch2014}
and references therein). In fact, we will see in this series of papers that there even exist classical solutions of the Einstein equations which
are not necessarily $C^2$.

{\em Inextendibility of the metric in $C^0$}: This was introduced by Christodoulou in his original formulation of the
strong cosmic censorship conjecture~\cite{ChristodoulouGlobalnew}. Its validity would provide overwhelming
evidence in favor of the deterministic nature of general relativity.
It holds for the Schwarzschild solution, and indeed for the generic asymptotically flat solutions of the Einstein-massless scalar field system studied by Christodoulou\footnote{Christodoulou showed that $r=0$ at the future boundary of the black hole region of a generic solution. This suffices to show that there are no spherically symmetric $C^0$ extensions, and it is widely believed (but we are not aware of a formal proof) that in fact there are no $C^0$ extensions whatsoever. In the present work we will only consider the question of existence of spherically symmetric extensions.} in \cite{ChristodoulouNaked}.

It turns out that the two regularity requirements above are insufficient
to capture the full richness of the subject.
In~\cite{Dafermos1}, Dafermos solved many instances of Problem~\ref{problema}
for the $\Lambda=0$ case. His findings were quite remarkable: first, under
the hypothesis of Problem~\ref{problema}, the MGHD can always be extended in a $C^0$ 
manner; second, if the initial data is sufficiently subextremal and if, in addition, we
assume an appropriate decay
for the free initial data (more precisely, an appropriate decay for the ingoing derivative of the scalar field),
then mass inflation occurs.
Since the Hawking mass is a geometric invariant involving first derivatives of the metric, its blow up excludes the
existence of (spherically symmetric) $C^1$ extensions. 

These results led Christodoulou~\cite{Christodoulou:2008} to drop his earlier $C^0$ formulation in favor
of a stronger regularity requirement: the natural inextendibility criterion, from the PDE point of view, would be to exclude extensions which
are solutions of the Einstein equations, as opposed to arbitrary extensions.

{\em Christodoulou-Chru\'sciel inextendibility criterion --
  inextendibility as a Lo\-ren\-tzi\-an manifold with Christoffel
symbols in $L^2_{\rm loc}$}:
This is enough to ensure that no extension satisfying the Einstein equations, even in a weak sense,
is possible.  A somewhat different kind of reasoning had already led Chru\'sciel, in~\cite{ChruscielSCC}, to consider
the (slightly stronger) regularity conditions $g\in H^1_{\rm loc}$  and $g^{-1}\in L^{\infty}_{\rm loc}$, as being sufficiently low for the well-posedness of the Einstein equations not to hold. 
Indeed, to our knowledge, no well posedness results exist, or are expected to exist, at this level of regularity: the state of the art concerning the Cauchy problem requires
square integrable curvature~\cite{KlainermanL2}. Therefore, the consequences
of a potential failure of the Christodoulou-Chru\'sciel
inextendibility criterion might not be as definite as the consequences of its success: such failure may just mean that the 
appropriate regularity for the extension criterion must be made higher.

From the techniques in~\cite{DafermosBlack}, one can easily conclude that, under the conditions
leading to mass inflation identified in~\cite{Dafermos1}, the Christodoulou-Chru\'sciel inextendibility criterion holds (in spherical symmetry).

\bigskip

We now motivate the introduction of a cosmological constant~$\Lambda$ in Problem~\ref{problema}.
It is well known that it plays a fundamental role in modern physics: in cosmology, for instance, a positive $\Lambda$ provides the simplest mechanism to model dark energy,
 while in high energy physics a great deal of attention has been devoted to solutions of
the Einstein equations with $\Lambda<0$. From a purely mathematical point of view, the introduction of this zeroth order term in the
Einstein equations leads
to a whole new range of rich geometrical structures and dynamical behaviors.

Nonetheless, the relevance of considering a cosmological constant in Problem~\ref{problema} is not obvious a priori.
This problem concerns what happens in the interior of a black hole region, whereas the effects of the
cosmological constant are known
to be specially relevant at large scales, in the exterior regions.  It turns out that the question is far more interesting than
originally anticipated. In fact, there was a considerable amount of activity during the 90s concerning the
strong cosmic censorship conjecture with a positive cosmological constant~\cite{BradyCauchy, BradyCauchyMass}. Based on heuristic arguments and perturbative analyses, the initial
expectation became that, close to extremality, mass inflation should fail, i.e.~the Hawking mass should remain bounded, and that
$C^2$ extensions of the spacetime beyond the corresponding Cauchy horizon should exist. In conclusion, the strong cosmic censorship conjecture
was believed to fail for $\Lambda>0$. Later, a perturbative analysis based on a more sophisticated model showed
that the previous results neglected the effects of backscattering; these, when properly taken into account,
would lead to the blow up of curvature scalars. The question of mass inflation was not addressed in the context of this new model.
The main thesis was clear from the title under which these results were presented:
``Cosmic censorship: As strong as ever''~\cite{BradyCosmic} (at least in its $C^2$ inextendibility formulation).

This remained the dominant point of view until, recently,
Dafermos suggested that the
expected curvature blow up might not be related to the blow up of the mass, which, close to extremality, should remain bounded.
The consequences of such {\em no mass inflation} scenario should be considerable and led
Dafermos to conjecture~\cite[Conjecture~4]{DafermosBlack} that
the Christodoulou-Chru\'sciel
inextendibility criterion would fail for positive $\Lambda$. We will show that this is in fact the case
in the context of Problem~\ref{problema}, for any sign of $\Lambda$ (including the $\Lambda=0$ case treated in \cite{Dafermos1}), provided that the free data decays sufficiently fast (see Part~3). It should be noted, however, that we are considering pure Reissner-Nordstr\"{o}m data on the event horizon, which in general will not arise from gravitational collapse; the case $\Lambda > 0$ is special in that this data is expected to be approached exponentially fast (see Section~\ref{section1.3}).

The question of strong cosmic censorship for $\Lambda < 0$ has received less attention in the literature, but it is expected that the mass inflation scenario should hold.

\bigskip

This paper is the first part of a trilogy devoted to the study of Problem~\ref{problema}.
We study the relation between the spherically symmetric
Einstein-Maxwell-scalar field equations with a cosmological constant and the first order PDE system~\eqref{r_u}$-$\eqref{kappa_at_u}, for the quantities~\eqref{nu_0}$-$\eqref{kappa_0}.
We establish its well posedness under the minimal regularity conditions
leading to classical solutions. We also identify the appropriate notion
of maximal solution, from which
the construction of the corresponding MGHD immediately follows, and determine breakdown criteria. In broad terms, these results are widely expected, although
we are unaware of a published proof.
They are also the unavoidable starting point of the analysis; 
our main results will depend on a detailed understanding of these fundamentals, and
seemingly minor improvements here will be of paramount importance later on.

In the second paper, we generalize the results of Dafermos concerning the stability of the radius function at the Cauchy horizon
with the inclusion of a cosmological constant (of any sign). More precisely, we show that the radius function is 
bounded away from zero. This has the remarkable consequence of allowing the construction of continuous extensions of the metric.
The introduction of the cosmological constant requires a considerable deviation from the original strategy developed by Dafermos,
the main difficulties stemming from the $\Lambda>0$ case. We develop a unified framework that avoids dividing the solution spacetime into red-shift
and blue-shift regions, but instead focuses on the level sets of the radius function, without having to distinguish between the different signs for the cosmological constant.

In the third paper of the series, we analyze the mass inflation scenarios. We start by revisiting Dafermos'
strategy to establish mass inflation, which naturally generalizes to the case of a non-vanishing cosmological constant.
Then we show that under appropriate circumstances (in particular close to extremality), the mass remains bounded. As a consequence,
the Christodoulou-Chru\'sciel inextendibility criterion fails, i.e.\ we can construct extensions of spacetime, across the Cauchy horizon, with continuous metric,
square integrable connection coefficients, and scalar field in $H^1_{\rm loc}$. Finally we prove that, under
slightly stronger conditions, we can bound the gradient of the scalar field. This allows us to construct
(non-isometric) $C^1$ extensions of the metric, which correspond to classical solutions of the Einstein-Maxwell-scalar field equations.
To the best of our knowledge, these are the first results where the generic existence of extensions as solutions is established.

\subsection{The main results}

We summarize the  main results of this trilogy in 
\begin{Thm}
\label{thmMain}
Consider, as a reference solution, a subextremal element of the Reissner-Nordstr\"{o}m family with mass $M$, non-vanishing charge $4\pi e$ and cosmological constant $\Lambda$, and let $r_+>0$ be the event horizon radius. Given $0 < U < r_+$, there exists a unique maximal development of the characteristic initial value problem for the spherically symmetric Einstein-Maxwell-scalar field system with metric
\[
g=-\Omega^2(u,v)\,dudv+r^2(u,v)\,\sigma_{\mathbb{S}^2} \label{metric-g}
\]
and scalar field $\phi$, defined on a past set ${\cal P} \subset [0,U] \times [0,+\infty[$, determined by Reissner-Nordstr\"{o}m initial data on the event horizon $\{u=0\}$ and free data on the incoming null direction $\{v=0\}$:
\begin{enumerate}[i)]
\item $r(0,v) = r_+$;
\item $\phi(0,v) = 0$;
\vspace{-3mm}
\item $\Omega^2(0,v) = 4 e^{-\,\frac{2}{r_+^2}\bigl(\frac{e^2}{{r_+}}+\frac{\Lambda}{3}r_+^3-M\bigr)\,v}$;
\item $r(u,0) = r_+-u$;
\item $\phi(\,\cdot\,,0)\in C^1([0,U])$ with $\partial_u \phi(0,0)=0$.
\end{enumerate}

Moreover:
\begin{enumerate}[{\rm (1)}]
\item  {\em Stability of the radius function at the Cauchy horizon.} There exists $U>0$ such that
$$[0,U]\times[0,\infty[\,\subset {\cal P},$$
and $r_0>0$ for which
$$r(u,v)>r_0,\ {\rm for\ all}\ (u,v)\in[0,U]\times[0,\infty[.$$
Consequently, $({\cal M},g,\phi)$ extends, across the Cauchy horizon $\{v=\infty\}$, to $(\hat {\cal M},\hat g,\hat \phi)$, with $\hat g$ and $\hat \phi$ in~$C^0$.
\end{enumerate}
Let\/ $1<\rho<\infty$ be the ratio between the surface gravities of the
Cauchy horizon and of the event horizon of the reference solution.
\begin{enumerate}[{\rm (1)}]\addtocounter{enumi}{+1}
\item {\em Mass inflation.} If $\rho>2$ and
 \begin{equation*}
\partial_u\phi(u,0)\geq cu^s,\ {\it for\ some}\/\
0<s<\frac\rho 2-1,
\end{equation*}
then the renormalized Hawking mass $\varpi$ (see \eqref{bar_rafaeli}) satisfies
\begin{equation*}
\lim_{v\rightarrow\infty}\varpi(u,v)=\infty,\ {\it for\ each}\/\  0<u\leq U.
\end{equation*}
In particular, no (spherically symmetric) $C^1$ extensions across the Cauchy horizon exist.
\item {\em No mass inflation.}  If 
\begin{equation*}
|\partial_u\phi(u,0)|\leq cu^s,\ {\rm for\ some}\ s>\frac{7\Psi}9-1,
\end{equation*}
then there exists $C>0$ such that
$$
\left|\varpi(u,v)\right|<C,\ {\rm for\ all}\ (u,v)\in[0,U]\times[0,\infty[.
$$
Consequently, the Christodoulou-Chru\'sciel inextendibility criterion fails, i.e.\
$({\cal M},g,\phi)$ extends across the Cauchy horizon to $(\hat {\cal M},\hat g,\hat \phi)$, with $\hat g$ and $\hat \phi$ in~$C^0$,
Christoffel symbols $\hat\Gamma$ in $L^2_{\rm loc}$, and $\hat\phi$ in $H^1_{\rm loc}$.
\item {\em Classical extensions.} If 
\begin{equation*}
|\partial_u\phi(u,0)|\leq cu^s,\ {\rm for\ some}\ s>\frac{13\Psi}9-1,
\end{equation*}
then $({\cal M},g,\phi)$ extends (in a non-unique way), across the Cauchy horizon,
to a spherically symmetric (classical) solution $(\hat {\cal M},\hat g,\hat \phi)$ of the Einstein-Maxwell-scalar field system with
cosmological constant $\Lambda$, with $\hat g$ and $\hat\phi$ in $C^1$.
The Kretschmann scalar is uniformly bounded.
\end{enumerate}
\end{Thm}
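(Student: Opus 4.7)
The plan is to combine the well-posedness and breakdown criteria developed here for the reduced first-order system \eqref{r_u}--\eqref{kappa_at_u} with the detailed interior estimates from \cite{relIst2} and \cite{relIst3}. Part~1 produces the maximal development $({\cal M},g,\phi)$ with past set ${\cal P} \subset [0,U]\times[0,+\infty[$ together with a diagnostic for extending the solution past any candidate breakdown point. Items (1)--(4) are then four statements about the behavior of this maximal development along the outgoing null direction $v\to\infty$.

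For item (1), the strategy is to choose $U$ small enough that a redshift estimate near the event horizon $\{u=0\}$ controls $|\partial_u\phi|$ and $\nu=\partial_u r$ uniformly in $v$. Combined with the monotonicity of $r$ in $u$ inside the black hole, this should yield a uniform lower bound $r(u,v)\geq r_0>0$ on the full rectangle $[0,U]\times[0,+\infty[$. The breakdown criterion from Part~1 then forces ${\cal P}\supset[0,U]\times[0,+\infty[$. Uniform continuity in $v$ of $r$, $\Omega^2$ and $\phi$ along outgoing rays, obtained by integrating the bounded derivatives, provides the $C^0$ extension across $\{v=\infty\}$.

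For item (2), I would transport the Dafermos blueshift argument to the cosmological-constant setting. Below the reference Cauchy-horizon radius $r_-$ the ODE structure amplifies $|\partial_v\phi|$ with a rate controlled by the surface-gravity ratio $\rho$; under the lower bound $\partial_u\phi(u,0)\geq cu^s$ with $s<\rho/2-1$, this amplification outpaces the geometric decay, and an identity of the form $\partial_v\varpi \sim r(\partial_v\phi)^2/\kappa$ forces $\varpi(u,v)\to\infty$ as $v\to\infty$ for each $u\in(0,U]$. Since $\varpi$ is a geometric invariant built from first derivatives of the metric, its blow up rules out spherically symmetric $C^1$ extensions.

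Items (3) and (4) are the hardest and rely crucially on the sharp lower bound for $r$ established in \cite{relIst2}. Under the upper bound $|\partial_u\phi(u,0)|\leq cu^s$ one would propagate weighted bounds on $\partial_u\phi$ and $\partial_v\phi$ through three successive regions: a redshift zone near $\{u=0\}$, an intermediate zone where $r$ is uniformly separated from both $r_+$ and $r_-$, and the blueshift zone $\{r<r_-\}$. The critical step occurs in the blueshift zone, where the exponential amplification of $|\partial_v\phi|$ driven by $\rho$ must be offset by the polynomial decay $u^s$ of the initial data; the thresholds $s>7\rho/9-1$ and $s>13\rho/9-1$ are precisely the borderlines at which the relevant weighted integrals remain finite and the resulting extension has, respectively, $L^2_{\rm loc}$ Christoffel symbols with $\phi\in H^1_{\rm loc}$, or $C^1$ metric and scalar field together with a bounded Kretschmann scalar. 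The main obstacle throughout is controlling the nonlinear coupling between $\phi$, $\varpi$ and the metric coefficients in the blueshift region, where a naive Gr\"onwall estimate is insufficient; the key technical device, developed across Parts~2 and~3, is to replace causal-past based estimates by estimates along the level sets of $r$, which permits a unified treatment for all signs of $\Lambda$.
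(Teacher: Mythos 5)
Your outline matches the paper's own treatment: Theorem~\ref{thmMain} is a summary statement whose proof is distributed across the trilogy, with the present paper supplying exactly what you assign to it (existence and uniqueness for the first-order system, the maximal past set ${\cal P}$, the breakdown criterion $r\to 0 \Leftrightarrow \varpi\to\infty$, and the equivalence with the Einstein equations under (h4)), while items (1)--(4) are established in Parts~2 and~3 along the lines you describe, with the stated thresholds in $\rho$ coming from the weighted estimates you identify. The only divergence is one of emphasis: for the $r$-stability and no-mass-inflation arguments the authors stress that they \emph{avoid} the Dafermos-style redshift/intermediate/blueshift decomposition you invoke, working instead with level sets of $r$ to treat all signs of $\Lambda$ uniformly --- a device you do correctly name as the key technical point at the end.
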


In case~(4), and when $\partial_u\phi(\,\cdot\,,0)$ is just $C^0$, the metric does not have to be $C^2$ 
(see Part~3).

In fact, we can say more about the behavior of the solution at the Cauchy horizon.
For example, suppose that $cu^s\leq\partial_u\phi(u,0)\leq Cu^s$.
Then the behavior of the solution  depends on the value of $s$ as described in Figure~\ref{fig24}.
 We refer to Part~3 for more details.

\newpage

\begin{figure}[h!]
\begin{center}
\psfrag{b}{\tiny $\!\frac 97$}
\psfrag{u}{\tiny $\!\frac 92$}
\psfrag{c}{\tiny $2$}
\psfrag{d}{\tiny $3$}
\psfrag{e}{\tiny $4$}
\psfrag{z}{\tiny $\!\!1$}
\psfrag{g}{\tiny $\!\!\!\frac 49$}
\psfrag{h}{\tiny $\!\!1$}
\psfrag{i}{\tiny $\!\!2$}
\psfrag{j}{\tiny $\!\!3$}
\psfrag{k}{\tiny $s=\frac \Psi 2-1$}
\psfrag{m}{\tiny $s=\Psi-1$}
\psfrag{l}{\tiny $s=\frac{7\Psi}9-1$}
\psfrag{n}{\tiny $s=\frac{13\Psi}9-1$}
\psfrag{r}{\tiny $\Psi$}
\psfrag{s}{\tiny $s$}
\psfrag{o}{\tiny $1$}
\psfrag{p}{\tiny $\!\!0$}
\includegraphics[scale=.7]{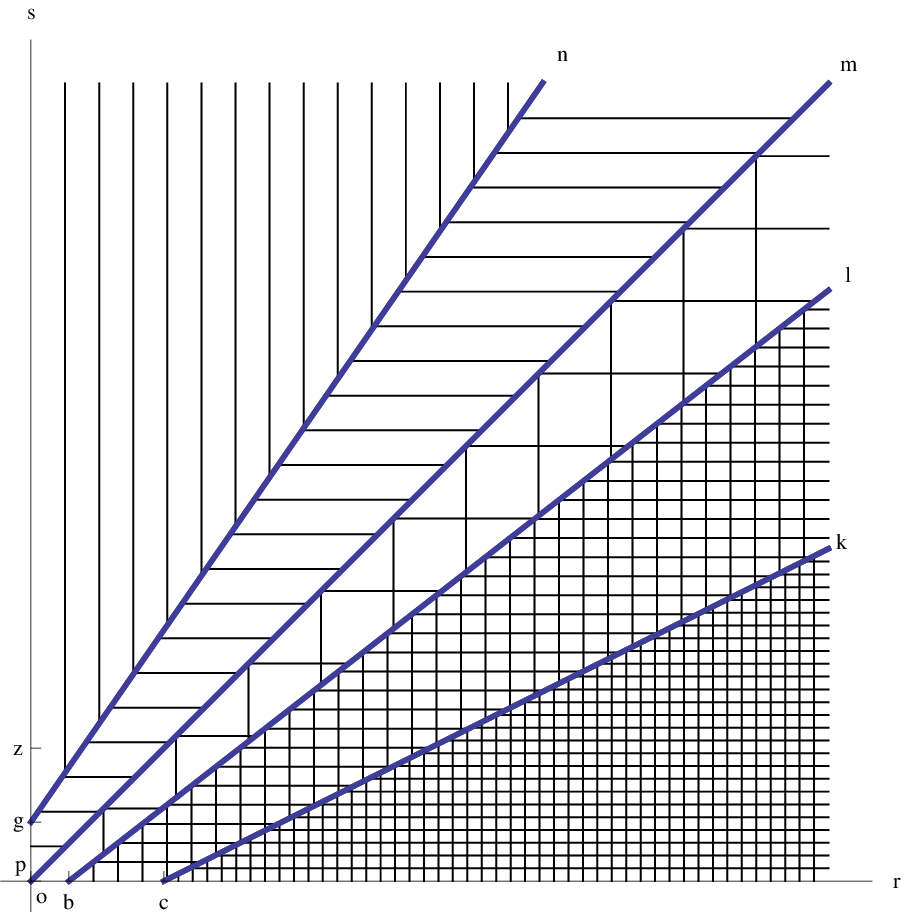}
\end{center}
\begin{center}
\psfrag{e}{\tiny \!\!\!\!\!\!\!\!\!\!\!\!\!\!\!\!\!mass inflation}
\psfrag{u}{\tiny \!\!\!\!\!\!\!\!\!\!\!\!\!\!\!\!\!\!\!\!\!mass inflation or}
\psfrag{b}{\tiny \!\!\!\!\!\!\!\!\!\!\!\!\!\!\!\!\!\!\!\!\!no mass inflation}
\psfrag{a}{\tiny \!\!\!\!\!\!\!\!\!\!\!\!\!\!\!\!\!\!\!\!\!no mass inflation}
\psfrag{c}{\tiny \!\!\!\!\!\!\!\!\!\!\!\!\!\!\!\!\!\!\!no mass inflation}
\psfrag{d}{\tiny \!\!\!\!\!\!\!\!\!\!\!\!\!\!\!\!\!\!\!\!\!\!\!\!\!\!\!\!Kretschmann bounded}
\psfrag{f}{\tiny \!\!\!\!\!\!\!\!\!\!\!\!\!\!\!\!\!\!\!\!\!\!\!\!\!\!\!\!\!\!\!\!Kretschmann unbounded}
\psfrag{v}{\tiny \!\!\!\!\!\!\!\!\!\!\!\!\!\!\!\!\!\!\!\!\!\!\!\!\!\!\!\!\!\!\!\!Kretschmann unbounded}
\psfrag{g}{\tiny \!\!\!\!\!\!\!\!\!\!\!\!\!\!\!\!\!\!\!\!\!smooth extension}
\psfrag{h}{\tiny \!\!\!\!\!\!\!\!\!\!\!\!\!\!\!\!\!\!\!\!\!\!\!\!\!\!\!\!\!beyond Cauchy horizon}
\includegraphics[scale=1]{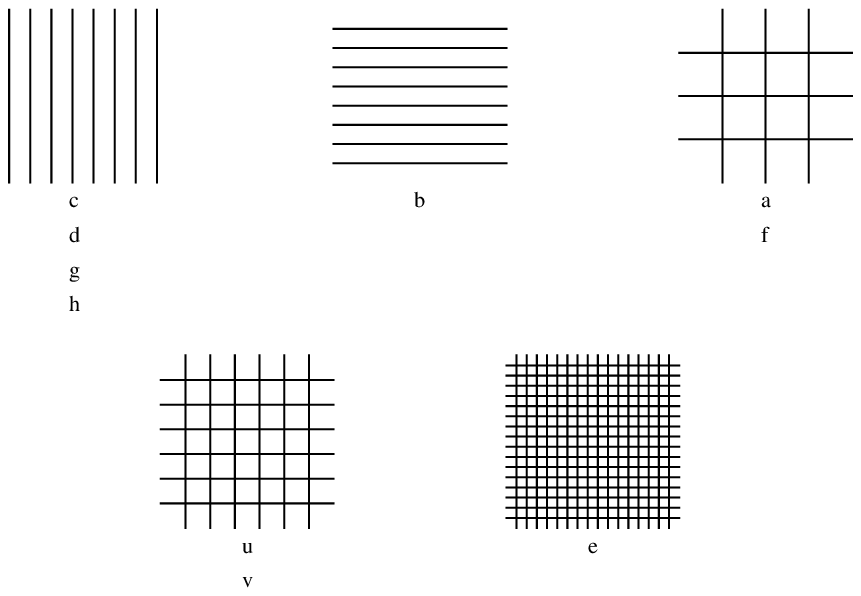}
\end{center}
\vspace{-.5cm}
\caption{Behavior of the solution at the Cauchy horizon.}\label{fig24}
\end{figure}

\subsection{The Strong Cosmic Censorship Conjecture}\label{section1.3}

Strictly speaking, our results do not apply directly to the strong cosmic censorship conjecture since
the data considered on the event horizon does not arise from the gravitational
collapse of generic spacelike initial data. This data is taken from a stationary black hole, and the dynamical features
of our solutions come from the ingoing free data $\phi(\,\cdot\,,0)$.

To strengthen the connection with the strong cosmic censorship conjecture in the asymptotically flat case $\Lambda=0$, 
Dafermos \cite{Dafermos2} extended his original analysis of \cite{Dafermos1} by considering
data along the event horizon with the widely expected scalar field behavior
prescribed by Price's law (see~\cite{PriceNonshperical}):
a polynomial decay, in the coordinate system of Theorem~\ref{thmMain}, for $\partial_v\phi$ 
along the event horizon.  Assuming such decay as an upper bound, he showed that the radius function remains
positive at the Cauchy horizon. Moreover, he showed that if the field satisfies an appropriate lower bound, consistent with Price's law,
then mass inflation occurs throughout the entire subextremal range $1<\Psi<\infty$, and the Christodoulou-Chru\'{s}ciel inextendibility criterion holds~\cite{DafermosBlack}.
The relation with the strong cosmic censorship conjecture was later reinforced in~\cite{DafermosProof}, where Dafermos and Rodnianski proved that the polynomial upper bound for the field indeed holds for
large classes of asymptotically flat data. However, the generic validity of the lower bound, under which mass inflation was established in~\cite{Dafermos2}, remains an open problem.

For $\Lambda>0$, it is widely expected (see~\cite{DafermosWavedeSitter, Vasy2014} and references therein) that
the corresponding Price law should guarantee exponential decay of the scalar field along the event horizon, 
that is, an exponential approach to the data considered here.
If we take into account the fact that near extremality $\Psi$ is approximately equal to $1$, and so the requirements in $(3)$ and $(4)$ of Theorem~\ref{thmMain}
are satisfied with very mild restriction on the behavior of the free data, then it seems plausible that the no mass inflation scenario, and the existence of
regular extensions, should remain valid when $\Psi$ is close to $1$, and appropriately decaying initial data is considered.
We will address this question in a forthcoming paper.

As mentioned above, for $\Lambda=0$ the expected polynomial decay to
stationarity is enough to exclude the no mass inflation scenario and its surprising consequences. When $\Lambda<0$,
the decay of the field along the event horizon is expected to be slower than polynomial\footnote{Interestingly, the
first spherical harmonic mode of the field decays exponentially along the event horizon; thus, strictly speaking, the decay
is exponential in spherical symmetry.}
 (see~\cite{HolzegelDecay} and~\cite{HolzegelQuasimodes}); moreover,
 the process of gravitational collapse is far less understood.
Thus, one may argue that,
although valid for all signs of cosmological constant, points $(3)$ and $(4)$ of
Theorem~\ref{thmMain} only provide evidence
for the failure of the strong cosmic censorship conjecture in the case where $\Lambda$ is positive.
Nonetheless, the techniques used in our work
and the results we obtain suggest that any difference concerning the
stability of the Cauchy horizon when the cosmological constant $\Lambda$ changes sign
should be due to the behavior of the field under the Price law along the event horizon.

\vspace{3mm}

\noindent {\bf Acknowledgments.} J.~Costa thanks P.~Chru\'{s}ciel and M.~Dafermos for useful discussions concerning the strong cosmic censorship conjecture.

\section{An overview of Part~1}

The structure of the present paper is as follows. We consider a double characteristic initial value problem for the spherically symmetric Einstein-Maxwell-scalar field equations with a cosmological constant $\Lambda$.
This consists of a system of four second order Einstein equations: a wave equation~\eqref{wave_r} for the physical radial coordinate~$r$, a wave equation~\eqref{wave_phi} for the field $\phi$, the
Raychaudhuri equation~\eqref{r_uu} in the incoming null direction $u$, and the Raychaudhuri equation~\eqref{r_vv} in the outgoing null direction $v$; and also a wave equation~\eqref{wave_Omega}
for the metric coefficient $\Omega$ (which is actually implied by the other four equations). 

We start by showing that the Einstein equations imply the first order system of PDE~\eqref{r_u}$-$\eqref{kappa_at_u}.
More precisely, this system consists of ten equations for the seven unknowns $r$, $\nu$, $\lambda$, $\varpi$, $\theta$, $\zeta$ and $\kappa$ (with $\nu=\partial_ur$, $\lambda=\partial_vr$,
$\theta=r\partial_v\phi$ and $\zeta=r\partial_u\phi$): two equations for the first partial derivatives of~$r$, two equations which correspond to the wave equation for~$r$,
two equations for the first partial derivatives of the renormalized mass $\varpi$, two equations which correspond to the wave equation for $\phi$, an equation for
a derivative of $\kappa$, and an algebraic restriction equation $\lambda=\kappa(\muu)$ (which can be thought of as the definition of $\kappa$).
The introduction of $\kappa$ allows us to avoid unpleasant denominators which a priori could vanish.
Since there are more equations than unknowns, we solve seven of these equations and treat the remaining three as restrictions, which we later check to be
preserved by the flow.

Our first task is to prove existence of solutions for the first order system (Theorem~\ref{existence}). 
The reason why we work with this system and not with the Einstein equations is that we need to prove existence of solutions defined in rectangles $[0,U]\times[0,V]$
for any given $V$ (and suitably small $U$), or any given $U$ (and suitably small $V$); it is not enough to prove such a result for both $U$ and $V$ small.
By working with the first order system we are in fact keeping track of more functions and their derivatives, providing a finer control of $r$, $\Omega$ and $\phi$.
The proof follows a standard Banach Fixed Point Theorem argument. The construction of the contracting operator and the corresponding complete metric space is somewhat subtle, and uses the specific form of our system.

After proving an appropriate uniqueness result, we show that any solution can be extended uniquely to a maximal past set ${\cal P}$ (Theorem~\ref{maximal}).
We then proceed to establish a breakdown criterion for the solution: along all sequences in ${\cal P}$ converging to the boundary of ${\cal P}$,
the radial coordinate $r$ goes to zero and the renormalized mass $\varpi$ goes to $+\infty$ (Theorem~\ref{breakdown}).

We finish Part~1 by showing that, under a stronger regularity condition on the initial data, a solution of the first order system corresponds, in fact, to a solution of the Einstein equations
(Proposition~\ref{einstein-2}). Furthermore, additional regularity for the initial data implies additional regularity for the solutions (Proposition~\ref{regularity} and Remark~\ref{rmk-reg}).

\section{Derivation of a first order system from the Einstein equations}
We consider the Einstein-Maxwell-real massless scalar field equations in the presence of a cosmological constant $\Lambda$ (in units for which $c=4\pi G=\varepsilon_0=1$):
\begin{align*}
& R_{\mu\nu} - \frac12 R g_{\mu\nu} + \Lambda g_{\mu\nu} = 2 T_{\mu\nu}; \\
& dF = d^{*}F = 0; \\
& \Box \phi = 0; \\
& T_{\mu\nu} = \partial_\mu \phi \, \partial_\nu \phi - \frac12 \partial_\alpha \phi \, \partial^\alpha \phi \, g_{\mu\nu} + F_{\mu\alpha} F_{\nu}^{\,\,\alpha} - \frac14 F_{\alpha\beta} F^{\alpha \beta} g_{\mu\nu}.
\end{align*}
These form a system of partial differential equations for the components of the spacetime metric $g$, the Faraday electromagnetic $2$-form $F$ and the real massless scalar field $\phi$; here $R_{\mu\nu}$ are the components of the Ricci tensor, $R$ is the scalar curvature, $^*$ is the Hodge star operator and $\Box$ is the d'Alembertian (all depending on $g$).

In the spherically symmetric case, we can write the metric in double null coordinates $(u,v)$ as 
\[
g=-\Omega^2(u,v)\,dudv+r^2(u,v)\,\sigma_{\mathbb{S}^2}, 
\]
where $\sigma_{\mathbb{S}^2}$ is the round metric on the $2$-sphere $\mathbb{S}^2$. In this case, the Maxwell equations decouple from the system, since they can be immediately solved to yield
\[
F = - \frac{e \, \Omega^2(u,v)}{2 \, r^2(u,v)} \, du \wedge dv.
\]
Here $e$ is a real constant, corresponding to a total electric charge $4 \pi e$, and we have assumed zero magnetic charge without loss of generality.

The remaining equations can then be written as follows (this is a straightforward modification of the equations in \cite{DafermosBlack} and \cite{HolzegelSelf}, see also \cite{ChristodoulouTwo}):
a wave equation for $r$,
\begin{equation}\label{wave_r} 
\partial_u\partial_vr=-\frac{\Omega^2}{4r} - \frac{\partial_ur\,\partial_vr}{r} + \frac{\Omega^2e^2}{4r^3} + \frac{\Omega^2 \Lambda r}{4},
\end{equation}
a wave equation for $\phi$,
\begin{equation}\label{wave_phi} 
\partial_u\partial_v\phi=-\,\frac{\partial_ur\,\partial_v\phi+\partial_vr\,\partial_u\phi}{r},
\end{equation}
the Raychaudhuri equation in the $u$ direction,
\begin{equation}\label{r_uu} 
\partial_u\left(\frac{\partial_ur}{\Omega^2}\right)=-r\frac{(\partial_u\phi)^2}{\Omega^2},
\end{equation}
the Raychaudhuri equation in the $v$ direction,
\begin{equation}\label{r_vv} 
\partial_v\left(\frac{\partial_vr}{\Omega^2}\right)=-r\frac{(\partial_v\phi)^2}{\Omega^2},
\end{equation}
and a wave equation for $\ln\Omega$,
 \begin{equation}\label{wave_Omega} 
\partial_v\partial_u\ln\Omega=-\partial_u\phi\,\partial_v\phi-\,\frac{\Omega^2e^2}{2r^4}+\frac{\Omega^2}{4r^2}+\frac{\partial_ur\,\partial_vr}{r^2}.
\end{equation}

To write the Einstein equations as a first order system of PDE we define the following quantities:
\begin{equation}\label{nu_0}
\nu:=\partial_u r,
\end{equation}
\begin{equation}\label{lambda_0}
\lambda:=\partial_v r,
\end{equation}
\begin{equation}\label{bar_rafaeli} 
\omega:=\frac{e^2}{2r}+\frac{r}{2}-\frac{\Lambda}{6}r^3+\frac{2r}{\Omega^2}\nu\lambda,
\end{equation}
\begin{equation}\label{mu_0} 
\mu:=\truemu,
\end{equation}
\begin{equation}\label{theta} 
\theta:=r\partial_v\phi,
\end{equation}
\begin{equation}\label{zeta} 
\zeta:=r\partial_u\phi
\end{equation}
and
\begin{equation}\label{kappa_0} 
 \kappa:=\frac{\lambda}{1-\mu}.
\end{equation}
Notice that we may rewrite~\eqref{bar_rafaeli} as
\begin{equation}\label{omega_sq}
\Omega^2=-\,\frac{4\nu\lambda}{1-\mu}=-4\nu\kappa.
\end{equation}
It is easy to see that
\[
1-\mu = g(\nabla r,\nabla r). 
\]
Therefore $\varpi$, like $r$, is a geometric quantity: it is called the renormalized Hawking mass.

\begin{Prop}\label{Erica}
The Einstein equations~\eqref{wave_r}$-$\eqref{r_vv} imply the following first order differential equations:
\begin{align*}
& \partial_u\lambda=\partial_v\nu=-2\nu\kappa\frac{1}{r^2}\left(\vlinha\right), \\
& \partial_u\omega=\frac 12\left(\muu\right)\left(\frac\zeta\nu\right)^2\nu, \\
& \partial_v\omega=\frac 12\frac{\theta^2}{\kappa}, \\
& \partial_u\theta=-\,\frac{\zeta\lambda}{r}, \\
& \partial_v\zeta=-\,\frac{\theta\nu}{r}, \\
& \partial_u\kappa=\kappa\nu\frac 1r\left(\frac{\zeta}{\nu}\right)^2.
\end{align*}
\end{Prop}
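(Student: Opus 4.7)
The plan is to derive the six first-order equations one at a time, using the second-order Einstein equations \eqref{wave_r}--\eqref{r_vv} together with the algebraic identities that follow from the definitions \eqref{nu_0}--\eqref{kappa_0}; in particular the crucial relations
\[
\Omega^2 = -4\nu\kappa, \qquad \lambda = \kappa(1-\mu), \qquad \frac{\nu}{\Omega^2} = -\frac{1}{4\kappa}, \qquad 1-\mu = -\frac{4\nu\lambda}{\Omega^2},
\]
all of which follow directly from \eqref{bar_rafaeli}--\eqref{kappa_0}. I will present the derivations in order of increasing algebraic complexity, saving the $\omega$-equations for last, since those are where the only real bookkeeping lives.

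First I would handle the scalar-field equations: since $\theta=r\partial_v\phi$ and $\zeta=r\partial_u\phi$, the Leibniz rule gives $\partial_u\theta=\nu\partial_v\phi+r\partial_u\partial_v\phi$, and substituting \eqref{wave_phi} for $\partial_u\partial_v\phi$ produces the stated cancellation down to $-\zeta\lambda/r$; $\partial_v\zeta$ is symmetric. Next I would derive the $\kappa$-equation by differentiating the identity $\nu/\Omega^2=-1/(4\kappa)$ in $u$ and invoking the Raychaudhuri equation \eqref{r_uu}, which immediately yields $\partial_u\kappa=\kappa\nu r^{-1}(\zeta/\nu)^2$. Then the equation for $\partial_u\lambda=\partial_v\nu=\partial_u\partial_v r$ is obtained by taking \eqref{wave_r} and substituting $\Omega^2=-4\nu\kappa$ and $\lambda=\kappa(1-\mu)$ in the right-hand side; one collects terms to factor out $\nu\kappa/r^2$, and then uses the explicit form of $\mu$ given by \eqref{mu_0} to cancel the stray terms and arrive at the stated expression $-2\nu\kappa r^{-2}(e^2/r+\Lambda r^3/3-\omega)$.

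The heart of the proof is the pair of equations for $\partial_u\omega$ and $\partial_v\omega$, which I would treat symmetrically. For $\partial_v\omega$, differentiate \eqref{bar_rafaeli} in $v$ and split the last summand as $2\partial_v(r\nu\cdot\lambda/\Omega^2)$; the Leibniz rule produces three terms, to which I apply \eqref{r_vv} to handle $\partial_v(\lambda/\Omega^2)$ and \eqref{wave_r} to eliminate the remaining $\partial_v\nu$. After these substitutions the three polynomial terms $-e^2\lambda/(2r^2)+\lambda/2-\Lambda r^2\lambda/2$ coming from differentiating the first three summands of $\omega$ cancel \emph{exactly} against the contributions from \eqref{wave_r}, leaving the single surviving term $-2\nu\theta^2/\Omega^2$, which simplifies to $\theta^2/(2\kappa)$ via $\Omega^2=-4\nu\kappa$. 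The derivation of $\partial_u\omega$ runs along the same lines, now using \eqref{r_uu} and the same substitution from \eqref{wave_r}; the analogous cancellations produce $-2\lambda\zeta^2/\Omega^2$, which rewrites as $\tfrac12(1-\mu)\nu(\zeta/\nu)^2$ using $\lambda=\kappa(1-\mu)$ and $\Omega^2=-4\nu\kappa$.

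The main obstacle is precisely this cancellation in the $\omega$-equations: a considerable number of polynomial-in-$r$ terms must combine to zero, which only happens because the coefficient $\frac{1}{2}$ in the definition \eqref{bar_rafaeli} of the renormalized Hawking mass is carefully chosen so that $2r\nu\lambda/\Omega^2=-r(1-\mu)/2$. Once this identity is internalized, the computation is routine; the derivation of $\partial_u\lambda$ also requires the same compatibility, but there the algebra is shorter. No new analytic input is needed beyond the pointwise identities and the four Einstein equations \eqref{wave_r}--\eqref{r_vv}, so the proof is entirely computational.
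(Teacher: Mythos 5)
Your proposal is correct and amounts to essentially the same computation as the paper's proof: both derive each equation by direct substitution of the identities $\Omega^2=-4\nu\kappa$ and $\lambda=\kappa(1-\mu)$ into the four Einstein equations, and your cancellation claims in the $\omega$-equations check out. The only (cosmetic) difference is that the paper first recasts the Raychaudhuri equations as evolution equations for $(1-\mu)/\lambda$ and $(1-\mu)/\nu$ and reads off $\partial_u\omega$, $\partial_v\omega$ and $\partial_u\kappa$ from those, whereas you differentiate the defining formula \eqref{bar_rafaeli} for $\omega$ and the identity $\nu/\Omega^2=-1/(4\kappa)$ directly; the two organizations are algebraically equivalent.
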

\begin{proof}
The wave equation~\eqref{wave_r} written in terms of $\lambda$ in~\eqref{lambda_0} is
\begin{equation}\label{l_u}
\partial_u\lambda=-2\nu\kappa\frac{1}{r^2}\left(\vlinha\right),
\end{equation}
and written in terms of $\nu$ in~\eqref{nu_0} is
\begin{equation}\label{n_v}
\partial_v\nu=-2\nu\kappa\frac{1}{r^2}\left(\vlinha\right),
\end{equation}
where we have used~\eqref{omega_sq}.

The wave equation~\eqref{wave_phi} written in terms of $\theta$ in~\eqref{theta} is
$$
 \partial_u\theta=-\partial_v r\partial_u\phi=-\,\frac{\zeta\lambda}{r}.
$$
For the last equality we have used~\eqref{lambda_0} and~\eqref{zeta}.
The wave equation~\eqref{wave_phi} written in terms of $\zeta$ in~\eqref{zeta} is
$$
 \partial_v\zeta=-\partial_u r\partial_v\phi=-\,\frac{\theta\nu}{r}.
$$
For the last equality we have used~\eqref{nu_0} and~\eqref{theta}.

Using \eqref{nu_0}, \eqref{zeta} and \eqref{omega_sq}, we may rewrite
the Raychaudhuri equation~\eqref{r_uu} as
\begin{equation}\label{ray_u} 
\partial_u\left(\frac{1-\mu}{\lambda}\right)=-\frac{\zeta^2}{r\nu\lambda}(1-\mu),
\end{equation}
and using \eqref{lambda_0}, \eqref{theta} and \eqref{omega_sq}, we may rewrite the Raychaudhuri equation~\eqref{r_vv} as
\begin{equation}\label{ray_v} 
\partial_v\left(\frac{1-\mu}{\nu}\right)=-\frac{\theta^2}{r\nu\lambda}(1-\mu).
\end{equation}
Expanding the left-hand sides of~\eqref{ray_u} and~\eqref{ray_v}, taking into account the definition of $\mu$ in~\eqref{mu_0},
and the values of the derivatives~\eqref{nu_0} and \eqref{l_u}, we obtain
$$
-\,\frac{2\partial_u\omega}{r\lambda}=-\frac{\zeta^2}{r\nu\lambda}(1-\mu)\ \Leftrightarrow\ \partial_u\omega=\frac 12(1-\mu)\frac{\zeta^2}{\nu},
$$
and taking into account the values of the derivatives~\eqref{lambda_0} and \eqref{n_v},
we obtain
$$
-\,\frac{2\partial_v\omega}{r\nu}=-\frac{\theta^2}{r\nu\lambda}(1-\mu)\ \Leftrightarrow\ \partial_v\omega=\frac 12(1-\mu)\frac{\theta^2}{\lambda}.
$$

Finally, from~\eqref{kappa_0} and \eqref{ray_u}, it follows that
\begin{eqnarray*}
 \partial_u\kappa&=&\partial_u\left(\frac{\lambda}{1-\mu}\right)\ =\ -\left(\frac{\lambda}{1-\mu}\right)^2\partial_u\left(\frac{1-\mu}{\lambda}\right)\
 =\ \kappa\frac 1r\frac{\zeta^2}{\nu}.
\end{eqnarray*}
\end{proof}

\section{Existence of solution of the first order system}

As mentioned in the introduction, we use the differential equations in Proposition~\ref{Erica} to set up the following closed first order system for the seven unknowns $r$, $\nu$, $\lambda$, $\varpi$, $\theta$, $\zeta$ and $\kappa$:
\begin{eqnarray} 
 \partial_ur&=&\nu\label{r_u},\\
 \partial_vr&=&\lambda\label{r_v},\\
 \partial_u\lambda&=&-2\nu\kappa\frac{1}{r^2}\left(\vlinha\right)\label{lambda_u},\\
 \partial_v\nu&=&-2\nu\kappa\frac{1}{r^2}\left(\vlinha\right),\label{nu_v}\\
 \partial_u\omega&=&\frac 12\left(\muu\right)\left(\frac\zeta\nu\right)^2\nu,\label{omega_u}\\
 \partial_v\omega&=&\frac 12\frac{\theta^2}{\kappa},\label{omega_v}\\
 \partial_u\theta&=&-\,\frac{\zeta\lambda}{r},\label{theta_u}\\
 \partial_v\zeta&=&-\,\frac{\theta\nu}{r},\label{zeta_v}\\
 \partial_u\kappa&=&\kappa\nu\frac 1r\left(\frac{\zeta}{\nu}\right)^2,\label{kappa_u}
\end{eqnarray}
subject to the algebraic constraint
\begin{equation}\label{kappa_at_u} 
\lambda=\kappa\left(\muu\right).
\end{equation}

Notice that this system is overdetermined, as there are two equations for $r$ and two equations for $\omega$, besides~\eqref{kappa_at_u}. We choose to regard equations~\eqref{r_u}, \eqref{omega_v} and~\eqref{kappa_at_u} as constraints on the initial data; we will then show that they are preserved by the evolution. Therefore, we solve equations~\eqref{r_v}$-$\eqref{omega_u} and \eqref{theta_u}$-$\eqref{kappa_u} only, with initial data satisfying appropriate restrictions. This will suffice to retrieve a solution of the full system \eqref{r_u}$-$\eqref{kappa_at_u}, which, under suitable regularity assumptions, yields a solution of the Einstein equations~\eqref{wave_r}$-$\eqref{wave_Omega} (see Section~\ref{Miranda}).

We take the domain of the solution to be a subset of $[0,U]\times[0,+\infty[$, with initial conditions prescribed as follows:
$$
({\rm I}_u)\qquad\left\{
\begin{array}{lcl}
 r(u,0)&=&r_0(u),\\
 \nu(u,0)&=&\nu_0(u),\\
 \zeta(u,0)&=&\zeta_0(u),
\end{array}
\right.\qquad{\rm for}\ u\in[0,U],
$$
$$
({\rm I}_v)\qquad\left\{
\begin{array}{lcl}
 \lambda(0,v)&=&\lambda_0(v),\\
 \omega(0,v)&=&\omega_0(v),\\
 \theta(0,v)&=&\theta_0(v),\\
 \kappa(0,v)&=&\kappa_0(v),
\end{array}
\right.\qquad{\rm for}\ v\in[0,\infty[.
$$
Let $$\tilde{r}_0(v)=r_0(0)+\int_0^v\lambda_0(v')\,dv',$$ for $v\in[0,+\infty[$.
We assume the regularity conditions:
\begin{align*}
\text{(h1)} \qquad &\text{the functions } \nu_0, \zeta_0, \lambda_0, \theta_0 \text{ and } \kappa_0 \text{ are continuous, and} \\
&\text{the functions } r_0 \text{ and } \omega_0 \text{ are continuously differentiable.}
\end{align*}
We assume the sign conditions:
$$
\hspace{-4cm}
\text{(h2)} \qquad 
\left\{
\begin{array}{ll}
r_0(u)>0&{\rm for}\ u\in[0,U],\\
\tilde{r}_0(v)>0&{\rm for}\ v\in[0,+\infty[,\\
\nu_0(u)<0&{\rm for}\ u\in[0,U],\\
\kappa_0(v)>0&{\rm for}\ v\in[0,\infty[.
\end{array}\right.
$$
Note that the condition on $\tilde{r}_0$ is actually a restriction on $\lambda_0$.
We assume the three compatibility conditions:
\begin{align} 
& r_0'=\nu_0,\label{initial_rp}\\
\text{(h3)} \qquad & \omega_0'=\frac 12\frac{\theta_0^2}{\kappa_0},\label{initial_v}\\
& \lambda_0=\kappa_0\left(\muuz\right),\label{kappa_at_zero} \hspace{3cm}
\end{align}
so that the initial data satisfy~(\ref{r_u}), (\ref{omega_v}) and (\ref{kappa_at_u}).

One way to guarantee that (h2)$-$(h3) are satisfied is the following. First, choose $r_0(0)>0$ and $\nu_0<0$, and use (\ref{initial_rp}) to compute $r_0$.
Next choose $\omega_0(0)$, $\theta_0$ and $\kappa_0>0$, and use (\ref{initial_v}) to compute $\omega_0$.
Finally, solve (\ref{kappa_at_zero}) as an ordinary differential equation for $\tilde{r}_0$, and then obtain $\lambda_0$ as its derivative.
In the end, one needs to check that the choices made lead to $r_0>0$ and $\tilde{r}_0>0$.

In the case where we take the data on the event horizon of the Reissner-Nordstr\"{o}m solution
with mass $M$ as the initial data on the $v$ axis, we may choose
\begin{equation}\label{RN1} 
\left\{
\begin{array}{lcl}
 r(u,0)&=&r_+-u,\\
 \nu(u,0)&=&-1,\\
 \zeta(u,0)&=&\zeta_0(u),
\end{array}
\right.\qquad{\rm for}\ u\in[0,U],
\end{equation}
and
\begin{equation}\label{RN2} 
\left\{
\begin{array}{lcl}
 \lambda(0,v)&=&0,\\
 \omega(0,v)&=&M,\\
 \theta(0,v)&=&0,\\
 \kappa(0,v)&=&1,
\end{array}
\right.\qquad{\rm for}\ v\in[0,\infty[.
\end{equation}
Here $\zeta_0(0)$ should be zero.
It is easily seen that (h2)$-$(h3) are satisfied in this situation if $U<r_+$.

The choices of $\lambda_0$ and $\nu_0$ correspond to fixing the coordinate system.
If $\lambda_0\neq 0$, the functions $\nu_0$ and $\lambda_0$ determine the coordinates $u$ and $v$, respectively, as functions of $r$, along the initial null rays.
If $\lambda_0=0$, it is the choice of $\kappa_0$ that fixes the coordinate $v$. 
Notice that the latter situation occurs for the Reissner-Nordstr\"{o}m initial data, which we will address in Parts~2 and~3. In fact, the additional unknown $\kappa$ was introduced precisely to overcome
the difficulties with the case $\lambda_0=0$, as $\frac{\lambda}{1-\mu}$ would otherwise be indeterminate at the event horizon.

\vspace{4mm}

Recall that ${\cal P}\subset [0,U]\times[0,+\infty[$ is a past set if
$J^-(u,v):=[0,u]\times[0,v]$ is a subset of ${\cal P}$ for all $(u,v)\in{\cal P}$.
\begin{Def}[Solution of the characteristic initial value problem]\label{solution} 
A solution
of the characteristic initial value problem is a set of continuous functions $r$, $\lambda$, $\nu$,
$\omega$, $\theta$, $\zeta$ and $\kappa$ defined on a past set ${\cal P}$ containing $(0,0)$, with $r$, $\nu$ and $\kappa$ nonzero, such that\/
 {\rm (\ref{r_u})$-$(\ref{kappa_at_u})} are satisfied, with
all the partial derivatives occurring in\/
 {\rm (\ref{r_u})$-$(\ref{kappa_u})} continuous. Furthermore the
initial conditions {\rm (I$_u$)} and {\rm (I$_v$)} are satisfied on the intersection of the past set with $[0,U]\times\{0\}\cup\{0\}\times[0,+\infty[$.
\end{Def}

Let us make an observation concerning the gluing of solutions defined on rectangles, which will be used throughout this work. Suppose that we have a solution of~\eqref{r_u}$-$\eqref{kappa_at_u} in a closed rectangle ${\cal R}_1$ and another solution in a closed rectangle ${\cal R}_2$, with ${\cal R}_1$ and ${\cal R}_2$ disjoint except for a common edge. Suppose also that all functions coincide on ${\cal R}_1\cap {\cal R}_2$. Then the obvious extension determines a solution of~\eqref{r_u}$-$\eqref{kappa_at_u} on ${\cal R}_1\cup {\cal R}_2$, since the extended functions are clearly continuous and the equations imply the continuity of the relevant partial derivatives across the common edge.

\begin{Thm}[Existence of solution in some rectangles]\label{existence} 
Given $0<V<\infty$, there exists $0<\tilde{U}\leq U$ such that the characteristic initial value problem with
initial data\/~{\rm (I$_u$)} and\/~{\rm (I$_v$)} satisfying\/ {\rm (h1)}$-${\rm (h3)}
has a unique solution in the rectangle\/ $[0,\tilde{U}]\times[0,V]$.
Also, there exists $0<\tilde{V}<\infty$ such that the characteristic initial value problem with
initial data satisfying\/ {\rm (h1)}$-${\rm (h3)}
has a unique solution in the rectangle\/ $[0,U]\times[0,\tilde{V}]$.
The values of $\tilde{U}$ and $\tilde{V}$ depend only on the initial data.
\end{Thm}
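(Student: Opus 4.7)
The plan is to apply the Banach fixed-point theorem to an integral reformulation of the evolution subsystem \eqref{r_v}, \eqref{lambda_u}--\eqref{kappa_u}. Define an operator $T$ sending a $7$-tuple $f=(r,\lambda,\nu,\omega,\theta,\zeta,\kappa)$ of continuous functions to $Tf$, each component being obtained by integrating the corresponding right-hand side in the null direction dictated by its evolution equation, starting from the prescribed initial data:
\begin{align*}
(Tr)(u,v) &= r_0(u) + \int_0^v \lambda(u,v')\,dv', \\
(T\lambda)(u,v) &= \lambda_0(v) - 2\int_0^u \frac{\nu\kappa}{r^2}\left(\vlinha\right)(u',v)\,du',
\end{align*}
and analogously for the remaining five components, with $\omega,\theta,\kappa$ integrated in $u$ and $\nu,\zeta$ integrated in $v$. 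A fixed point of $T$ in an appropriate class of continuous functions gives a solution of \eqref{r_u}--\eqref{kappa_u}, and uniqueness in that class is built into the theorem.

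First I would construct the complete metric space $X$ on which $T$ acts; I work on the rectangle $[0,\tilde U]\times[0,V]$, the argument for $[0,U]\times[0,\tilde V]$ being symmetric. Using (h1), (h2) and the compactness of $[0,V]$, fix positive constants $r_m,\kappa_m$ and a strictly negative $\nu_M$ bounding $r_0$, $\tilde r_0$, $\kappa_0$, $\nu_0$ away from their critical values, and uniform upper bounds on all initial functions. Take $X$ to be the closed subset of $C([0,\tilde U]\times[0,V])^7$ of tuples that agree with the prescribed data on $\{u=0\}$ and $\{v=0\}$ (where applicable) and satisfy pointwise bounds such as $r\geq r_m/2$, $\nu\leq\nu_M/2<0$, $\kappa\geq\kappa_m/2>0$, plus suitable uniform upper bounds on the remaining quantities. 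To verify $T(X)\subseteq X$, I would bound every right-hand side uniformly on $X$; the four components integrated in $u$ then deviate from their $v$-axis data by $O(\tilde U)$, which for $\tilde U$ small preserves the required bounds. For $\nu$ and $\kappa$, the logarithmic-derivative form of \eqref{nu_v} and \eqref{kappa_u} yields identities such as
$$\nu(u,v) = \nu_0(u)\exp\left(-2\int_0^v\frac{\kappa}{r^2}\left(\vlinha\right)(u,v')\,dv'\right),$$
which keep $\nu$ strictly negative and $\kappa$ strictly positive with uniform bounds on $[0,V]$. For $r$ and $\zeta$, the upper bounds must be chosen large enough to absorb the $O(V)$ growth coming from the $v$-integrals, which is harmless since $V$ is fixed, and the lower bound $r\geq r_m/2$ is preserved by taking $\tilde U$ further small, so that $(Tr)(u,v)$ is $O(\tilde U\,V)$-close to $r_0(u)+\tilde r_0(v)-r_0(0)$.

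Next I would establish contraction in the weighted norm
$$\|f\|_w := \sup_{(u,v)\in[0,\tilde U]\times[0,V]} e^{-\alpha u - \beta v}|f(u,v)|$$
for sufficiently large $\alpha,\beta>0$. On $X$ each right-hand side is Lipschitz in the full $7$-tuple with a constant $L$ depending only on the bounds defining $X$; the standard calculation then shows that a $u$-integral in $T$ contributes a factor at most $L/\alpha$ in $\|\cdot\|_w$, and a $v$-integral at most $L/\beta$. Taking $\alpha,\beta$ large gives a contraction constant strictly less than $1$, so the Banach fixed-point theorem produces a unique fixed point in $X$. Reading off from the integrated equations, one recovers the continuity of every partial derivative appearing in \eqref{r_u}--\eqref{kappa_u} and the sign restrictions of Definition~\ref{solution}; the constraints \eqref{r_u}, \eqref{omega_v} and \eqref{kappa_at_u}, which hold on the initial rays by (h3), can then be verified by differentiating them and using the evolution equations to show that each difference satisfies a homogeneous linear ODE vanishing on the initial rays.

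The main difficulty I anticipate is the asymmetric role of the two null directions: with $V$ fixed and possibly large, a uniform sup-norm contraction does not close, because the $v$-integrations are not short. The weighted norm decouples the two directions at the cost of a careful verification that $X$ is stable under $T$, and here the logarithmic-derivative form of the equations for $\nu$ and $\kappa$ is essential to propagate the sign conditions over a long $v$-interval. The second existence statement (small $\tilde V$, arbitrary $U$) follows by the symmetric argument with the roles of $u$ and $v$ interchanged.
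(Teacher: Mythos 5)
Your proposal is correct, and its overall skeleton coincides with the paper's: an integral reformulation of the evolution subsystem, an invariant set of continuous $7$-tuples encoding the sign conditions $r>0$, $\nu<0$, $\kappa>0$ (with the integrating-factor form of \eqref{nu_v} and \eqref{kappa_u} used to propagate the signs over the long $v$-interval), the Banach fixed point theorem, and finally propagation of the three constraints \eqref{r_u}, \eqref{omega_v}, \eqref{kappa_at_u} via homogeneous linear equations vanishing on the initial rays. Where you genuinely diverge is in the mechanism that produces the contraction despite $V$ being fixed and large. The paper uses a staggered (Gauss--Seidel-type) iteration: the $u$-integrated quantities $\lambda,\varpi,\theta,\kappa$ at step $n+1$ are computed from step $n$, and then the $v$-integrated quantities $r,\nu,\zeta$ at step $n+1$ are computed from the \emph{already updated} step-$(n+1)$ quantities; since the latter differ from their predecessors by $O(\tilde U d_n)$, the factor $V$ coming from the $v$-integrals is harmlessly multiplied by $\tilde U$, and the plain sup norm suffices. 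You instead keep the naive simultaneous (Jacobi-type) Picard operator and obtain the contraction from a Bielecki weight $e^{-\alpha u-\beta v}$ with both $\alpha$ and $\beta$ large; this is a clean, standard device that works here because the weighted norm is equivalent to the sup norm on the compact rectangle, so completeness of $X$ is unaffected, and it removes the ordering subtlety that the paper's introduction explicitly flags. Two loose ends you should tighten: (i) for $T(X)\subseteq X$ you must take the radius $R_\lambda$ of the $\lambda$-ball itself small (tied to $\tilde U$, as the paper does), not merely observe that $T\lambda$ deviates by $O(\tilde U)$, since $Tr$ is computed from an arbitrary $\lambda\in X$ and the lower bound on $r$ costs $R_\lambda V$; (ii) the uniqueness asserted in the theorem is among all solutions in the sense of Definition~\ref{solution}, not only among fixed points in $X$, so an extra localization argument (the paper defers this to Proposition~\ref{uniqueness}, partitioning into thin strips) is still needed; likewise, justifying the interchange of mixed partials when propagating $\partial_u r=\nu$ and \eqref{omega_v} deserves the explicit continuity checks the paper carries out.
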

\begin{proof}
We define, by induction, the sequence of functions
\begin{eqnarray*}
\lambda_{n+1}(u,v)&=&\lambda_0(v)-\int_0^u\left[2\nu\kappa\frac{1}{r^2}\left(\vlinha\right)\right]_n(u',v)\,du',\\
\omega_{n+1}(u,v)&=&\omega_0(v)e^{-\int_0^u\bigl[\frac{\zeta^2}{r\nu}\bigr]_n(u',v)\,du'}\\ &&+
\int_0^ue^{-\int_s^u\bigl[\frac{\zeta^2}{r\nu}\bigr]_n(u',v)\,du'}
\left[\frac{1}{2}\left(1+\frac{e^2}{r^2}-\frac{\Lambda}{3}r^2\right)\frac{\zeta^2}{\nu}\right]_n(s,v)\,ds,\\
\theta_{n+1}(u,v)&=&\theta_0(v)-\int_0^u\left[\frac{\zeta\lambda}{r}\right]_n(u',v)\,du',\\
\kappa_{n+1}(u,v)&=&\kappa_0(v)e^{\int_0^u\bigl[\frac{\zeta^2}{r\nu}\bigr]_n(u',v)\,du'},\\
r_{n+1}(u,v)&=&r_0(u)+\int_0^v\lambda_{n+1}(u,v')\,dv',\\
\nu_{n+1}(u,v)&=&\nu_0(u)e^{-\int_0^v\bigl[2\kappa\frac{1}{r^2}\left(\vlinha\right)\bigr]_{n+1}(u,v')\,dv'},\\
\zeta_{n+1}(u,v)&=&\zeta_0(u)-\int_0^v\left[\frac{\theta\nu}{r}\right]_{n+1}(u,v')\,dv',
\end{eqnarray*}
in this order. Note that the integrals in $u$ involve functions computed in the previous iteration $n$, whereas the integrals in $v$
use functions already computed in the current iteration $n+1$.

We start with $\lambda_1(u,v)=\lambda_0(v)$, $\omega_1(u,v)=\omega_0(v)$, $\theta_1(u,v)=\theta_0(v)$ and $\kappa_1(u,v)=\kappa_0(v)$.
We use the last three equations in the iteration scheme to compute $r_1(u,v)$, $\nu_1(u,v)$ and $\zeta_1(u,v)$ (in this order).
Let $V$ be arbitrary but fixed in $]0,+\infty[$. Let $0<\tilde{U}\leq U$. Consider the (metric) space ${\cal S}$ of functions
$(r,\nu,\lambda,\omega,\theta,\zeta,\kappa)$ defined as
$$I_{[{\underline{r}},{\overline{r}}]}(r_0) \times B_{R_\nu}(\nu_0) \times B_{R_\lambda}(\lambda_0)\times B_{R_\omega}(\omega_0) \times
B_{R_\theta}(\theta_0)\times B_{R_\zeta}(\zeta_0)\times B_{R_\kappa}(\kappa_0)$$
equipped with the product of $L^\infty$ norms.
Here
$$
 I_{[{\underline{r}},{\overline{r}}]}(r_0)=\{r\in C^0([0,\tilde{U}]\times[0,V]):{\underline{r}}\leq r(u,v)-r_0(u) \leq {\overline{r}}\},
$$
the ball $B_{R_\nu}(\nu_0)$ is defined by
$$
B_{R_\nu}(\nu_0)=\{\nu\in C^0([0,\tilde{U}]\times[0,V]):\|\nu(u,v)-\nu_0(v)\|_{L^\infty([0,\tilde{U}]\times[0,V])}\leq R_\nu\},
$$
and the other balls are defined similarly.

We wish to choose $\tilde{U}$, the radii of the six balls in the definition of ${\cal S}$,
$\underline{r}$ and $\overline{r}$
so that the sequence constructed by the iteration scheme
lies in ${\cal S}$.
Assume that
$$
(\lambda_n,\omega_n,\theta_n,\kappa_n)\in B_{R_\lambda}(\lambda_0)\times B_{R_\omega}(\omega_0) \times
B_{R_\theta}(\theta_0)\times B_{R_\kappa}(\kappa_0),
$$
where the values of $R_\omega$, $R_\theta$ and $R_\kappa$ are fixed freely and $R_\lambda$ is specified below.
The function $\tilde{r}_0(v)$ has a positive minimum in $[0,V]$ equal to $\tilde{r}_{\min}$,
and a positive maximum equal to $\tilde{r}_{\max}$.
We have
\begin{eqnarray*}
r_{n}(u,v)&=&r_0(u)+\int_0^v\lambda_{n}(u,v')\,dv'\\
&=&r_0(u)-r_0(0)+\tilde{r}_0(v)+\int_0^v(\lambda_{n}(u,v')-\lambda_0(v'))\,dv'\\
&\geq&r_0(u)-r_0(0)+\tilde{r}_{\min}-R_\lambda V.
\end{eqnarray*}
Since $r_0$ is continuous, we can choose $\tilde{U}$ and $R_\lambda$ sufficiently small so that
$r_{n}(u,v)\geq\frac{\tilde{r}_{\min}}{2}$ for $(u,v)\in[0,\tilde{U}]\times[0,V]$.
Analogously,
\begin{eqnarray*}
r_{n}(u,v)
&\leq&r_0(u)-r_0(0)+\tilde{r}_{\max}+R_\lambda V.
\end{eqnarray*}
We choose ${\underline{r}}=-r_0(0)+\tilde{r}_{\min}-R_\lambda V$ and we choose
${\overline{r}}=-r_0(0)+\tilde{r}_{\max}+R_\lambda V$. This guarantees that $r_{n}$ lies in $I_{[{\underline{r}},{\overline{r}}]}(r_0)$.
We now choose $R_\nu$ sufficiently large
(depending on $V$, $R_\omega$, $R_\kappa$ and the upper and lower bounds on $r$)
so that the iteration scheme produces functions $\nu_{n}$ that lie in
$B_{R_\nu}(\nu_0)$. Moreover, the function $\nu_n$ is bounded away from zero by a constant not depending on $n$.
At this point we may choose $R_\zeta$ sufficiently large so that the iteration scheme produces functions $\zeta_{n}$ that lie in
$B_{R_\zeta}(\zeta_0)$.
We further reduce $\tilde{U}$, if necessary, so that the iteration scheme produces functions
$\lambda_{n+1}$,
$\omega_{n+1}$,
$\theta_{n+1}$ and
$\kappa_{n+1}$ that lie in $B_{R_\lambda}(\lambda_0)$, $B_{R_\omega}(\omega_0)$, $B_{R_\theta}(\theta_0)$ and $B_{R_\kappa}(\kappa_0)$,
respectively. This shows that the operator ${\cal T}$ that sends $(r,\nu,\lambda,\omega,\theta,\zeta,\kappa)_n$ to
$(r,\nu,\lambda,\omega,\theta,\zeta,\kappa)_{n+1}$ is an operator from ${\cal S}$ to ${\cal S}$.

Since the integrals in $u$ involve functions computed in the iteration $n$, by using the mean value theorem, it is clear that the norms of
$\lambda_{n+1}-\lambda_n$,
$\omega_{n+1}-\omega_n$,
$\theta_{n+1}-\theta_n$ and
$\kappa_{n+1}-\kappa_n$ in $L^\infty([0,\tilde{U}]\times[0,V])$ are bounded by $C\tilde{U}d_n$, for some positive constant $C$, with
$$d_n:=\|(r,\nu,\lambda,\omega,\theta,\zeta,\kappa)_{n}-(r,\nu,\lambda,\omega,\theta,\zeta,\kappa)_{n-1}\|_{{\cal S}}.$$
We now estimate the difference of consecutive iterates of $r$, $\nu$ and $\zeta$, in this order.
We have
$$\|r_{n+1}-r_n\|\leq V\|\lambda_{n+1}-\lambda_n\|\leq VC\tilde{U}d_n.$$
Then, there exists another positive constant $C$ such that
$$\|\nu_{n+1}-\nu_n\|\leq C(\|\omega_{n+1}-\omega_n\|+\|\kappa_{n+1}-\kappa_n\|+\|r_{n+1}-r_n\|)
\leq C\tilde{U}d_n.$$
Finally, for some other constant $C$, we have
$$\|\zeta_{n+1}-\zeta_n\|\leq C(\|\theta_{n+1}-\theta_n\|+\|r_{n+1}-r_n\|+\|\nu_{n+1}-\nu_n\|)
\leq C\tilde{U}d_n.$$
In conclusion, $d_{n+1}\leq C\tilde{U}d_n$.
By further reducing $\tilde{U}$ if necessary, ${\cal T}$ is a contraction. By the Banach fixed point theorem, ${\cal T}$
has a unique fixed point in ${\cal S}$.

\vspace{4mm}

We will check below that this fixed point of ${\cal T}$ is the solution to our first order system in a rectangle whose projection
on the $v$ axis is fixed, the interval $[0,V]$, and whose projection on the $u$ axis is small, $[0,\tilde{U}]$.
A similar argument allows us to construct a solution in a rectangle whose projection
on the $u$ axis is fixed, the interval $[0,U]$, and whose projection on the $v$ axis is small, $[0,\tilde{V}]$.
The argument runs as follows. Define, by induction, the sequence of functions
\begin{eqnarray}
r_{n+1}(u,v)&=&r_0(u)+\int_0^v\lambda_{n}(u,v')\,dv',\nonumber\\
\nu_{n+1}(u,v)&=&\nu_0(u)e^{-\int_0^v\bigl[2\kappa\frac{1}{r^2}\left(\vlinha\right)\bigr]_{n}(u,v')\,dv'},\nonumber\\
\zeta_{n+1}(u,v)&=&\zeta_0(u)-\int_0^v\left[\frac{\theta\nu}{r}\right]_{n}(u,v')\,dv',\nonumber\\
\omega_{n+1}(u,v)&=&\omega_0(v)e^{-\int_0^u\bigl[\frac{\zeta^2}{r\nu}\bigr]_{n+1}(u',v)\,du'}\label{omega_mod}\\ &&+
\int_0^ue^{-\int_s^u\bigl[\frac{\zeta^2}{r\nu}\bigr](u',v)\,du'}
\left[\frac{1}{2}\left(1+\frac{e^2}{r^2}-\frac{\Lambda}{3}r^2\right)\frac{\zeta^2}{\nu}\right]_{n+1}(s,v)\,ds,\nonumber\\
\kappa_{n+1}(u,v)&=&\kappa_0(v)e^{\int_0^u\bigl[\frac{\zeta^2}{r\nu}\bigr]_{n+1}(u',v)\,du'},\label{kappa_mod}\\
\lambda_{n+1}(u,v)&=&\lambda_0(v)-\int_0^u\left[2\nu\kappa\frac{1}{r^2}\left(\vlinha\right)\right]_{n+1}(u',v)\,du',\label{lambda_mod}\\
\theta_{n+1}(u,v)&=&\theta_0(v)-\int_0^u\left[\frac{\zeta\lambda}{r}\right]_{n+1}(u',v)\,du',\label{theta_mod}
\end{eqnarray}
in this order. Notice that it is the integrals in $v$ that now involve functions computed in the iteration $n$, while the integrals in $u$
use the functions in the current iteration $n+1$.

Start with $r_1(u,v)=r_0(u)$, $\nu_1(u,v)=\nu_0(u)$ and $\zeta_1(u,v)=\zeta_0(u)$. Use the last four equations in
the iteration scheme to compute $\omega_1(u,v)$, $\kappa_1(u,v)$, $\lambda_1(u,v)$, $\theta_1(u,v)$ (in this order).
Let $0<\tilde{V}<\infty$. Consider the same space ${\cal S}$ of functions 
as above, but now with domain $[0,U]\times[0,\tilde{V}]$.

We wish to choose $\tilde{V}$, the radii of the six balls in the definition of ${\cal S}$,
$\underline{r}$ and $\overline{r}$
so that the sequence constructed by the iteration scheme
lies in ${\cal S}$.
Assume that
$$
(r_n,\nu_n,\zeta_n)\in
I_{[{\underline{r}},{\overline{r}}]}(r_0) \times B_{R_\nu}(\nu_0) \times B_{R_\zeta}(\zeta_0),$$
where the values of $R_\nu$ and $R_\zeta$ are fixed freely and ${\underline{r}}$ and ${\overline{r}}$ are specified below.

Let $r_{\min}$ be the minimum of $r_0(u)$ on $[0,U]$ (that is $r_0(U)$).
Fix $-r_{\min}<\underline{r}<0<\overline{r}$, so that functions $r$ in $I_{[\underline{r},\overline{r}]}(r_0)$ are bounded below by a
strictly positive constant.
Now choose $R_\omega$ and $R_\kappa$ sufficiently large so that the iteration scheme produces functions $\omega_{n}$
and $\kappa_{n}$ that lie in
$B_{R_\omega}(\omega_0)$ and $B_{R_\kappa}(\kappa_0)$.
At this point choose $R_\lambda$ sufficiently large so that the iteration scheme produces functions $\lambda_{n}$ that lie in
$B_{R_\lambda}(\lambda_0)$. Having done so, choose $R_\theta$ sufficiently large so that the iteration scheme produces functions $\theta_{n}$ that lie in
$B_{R_\theta}(\theta_0)$. Choose $\tilde{V}$ sufficiently small so that the iteration scheme produces functions $r_{n+1}$, $\nu_{n+1}$
and $\zeta_{n+1}$ that lie in $I_{[\underline{r},\overline{r}]}(r_0)$, $B_{R_\nu}(\nu_0)$ and $B_{R_\zeta}(\zeta_0)$, respectively.
By further reducing $\tilde{V}$, if necessary,  the operator that sends $(r,\nu,\lambda,\omega,\theta,\zeta,\kappa)_n$ to
$(r,\nu,\lambda,\omega,\theta,\zeta,\kappa)_{n+1}$ has a unique fixed point in ${\cal S}$.

\vspace{4mm}

Obviously, in both cases, functions with domain $[0,\tilde{U}]\times[0,V]$ and functions with domain $[0,U]\times[0,\tilde{V}]$,
the fixed points satisfy
\begin{eqnarray} 
\kappa(u,v)&=&\kappa_0(v)e^{\int_0^u\bigl(\frac{\zeta^2}{r\nu}\bigr)(u',v)\,du'},\label{kappa_final}\\
\nu(u,v)&=&\nu_0(u)e^{-\int_0^v\left(2\kappa\frac{1}{r^2}\left(\vlinha\right)\right)(u,v')\,dv'},\label{nu_final}\\
\lambda(u,v)&=&\lambda_0(v)-\int_0^u\left(2\nu\kappa\frac{1}{r^2}\left(\vlinha\right)\right)(u',v)\,du',\label{lambda_final}\\
\theta(u,v)&=&\theta_0(v)-\int_0^u\left(\frac{\zeta\lambda}{r}\right)(u',v)\,du',\label{theta_final}\\
\zeta(u,v)&=&\zeta_0(u)-\int_0^v\left(\frac{\theta\nu}{r}\right)(u,v')\,dv',\label{zeta_final}\\
\omega(u,v)&=&\omega_0(v)e^{-\int_0^u\bigl(\frac{\zeta^2}{r\nu}\bigr)(u',v)\,du'}\nonumber\\ &&+
\int_0^ue^{-\int_s^u\frac{\zeta^2}{r\nu}(u',v)\,du'}\left(\frac{1}{2}\left(1+\frac{e^2}{r^2}
-\frac{\Lambda}{3}r^2\right)\frac{\zeta^2}{\nu}\right)(s,v)\,ds,\label{omega_final}\\
r(u,v)&=&r_0(u)+\int_0^v\lambda(u,v')\,dv'.\label{r_final}
\end{eqnarray}

As a consequence, (\ref{r_v})$-$(\ref{omega_u}) and (\ref{theta_u})$-$(\ref{kappa_u}) hold in the sense of Definition~\ref{solution}.

\vspace{4mm}

We now show that $\partial_u r=\nu$, which we imposed at $v=0$ in (\ref{initial_rp}), propagates to all values of $v$.
Clearly, from (\ref{lambda_final}) and (\ref{r_final}), $\partial_u\partial_v r=\partial_u\lambda$ is continuous.
 Equation (\ref{r_final})
 implies that $\partial_u r$ is continuous, and so $r$ is $C^1$.
Therefore the derivative $\partial_v\partial_u r$ exists and $\partial_v\partial_u r=\partial_u\partial_v r$. So, by \eqref{lambda_u} and \eqref{nu_v},
$$
\partial_v(\partial_u r-\nu)=\partial_u\lambda-\partial_v\nu=0.
$$
This establishes (\ref{r_u}) in the whole rectangle.

Next we check that (\ref{kappa_at_u}), which we imposed at $u=0$ in (\ref{kappa_at_zero}), propagates to all values of $u$. Define $\mu$ using~\eqref{mu_0}.
Equation (\ref{kappa_at_u}) is established in the whole domain by checking that
$$
\partial_u(\lambda-\kappa(1-\mu))=0,
$$
which follows from \eqref{r_u}, \eqref{lambda_u}, \eqref{omega_u} and \eqref{kappa_u}.

Finally we show that (\ref{omega_v}), which we imposed at $u=0$ in (\ref{initial_v}), propagates to all values of $u$.
From (\ref{omega_final}),
$\partial_v\omega$ is continuous as $\nu$ and $\zeta$ have continuous derivative with respect to $v$ from \eqref{nu_final} and \eqref{zeta_final}.
Using (\ref{omega_u}), this in turn implies that
$
\partial_v\partial_u\omega
$
is continuous. Therefore the derivative $\partial_u\partial_v \omega$ exists and  $\partial_u\partial_v \omega=\partial_v\partial_u \omega$. So,
a straightforward computation yields
\begin{eqnarray*}
\partial_u\left(\partial_v\omega-\frac 12\frac{\theta^2}{\kappa}\right)&=&\partial_v\partial_u\omega-\partial_u\frac 12\frac{\theta^2}{\kappa}\\
&=&-\,\frac 1r\left(\frac{\zeta}{\nu}\right)^2\nu\left(\partial_v\omega-\frac 12\frac{\theta^2}{\kappa}\right).
\end{eqnarray*}
From (\ref{initial_v}), we conclude that (\ref{omega_v}) holds for all $u$.

Notice that we have only established uniqueness of the solution in ${\cal S}$.
The uniqueness assertion in the statement of the theorem is a special case of Proposition~\ref{uniqueness} below.
\end{proof}

\begin{Prop}[Uniqueness]\label{uniqueness}
Two solutions of the characteristic initial value problem with the same initial conditions coincide on the intersection of their domains.
\end{Prop}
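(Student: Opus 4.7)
The plan is to reduce global uniqueness to a standard Grönwall-type argument on arbitrary compact sub-rectangles of the common past set, exploiting the integral reformulation of the system obtained in the proof of Theorem~\ref{existence}.

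Let $(r^{(i)}, \nu^{(i)}, \lambda^{(i)}, \omega^{(i)}, \theta^{(i)}, \zeta^{(i)}, \kappa^{(i)})$, $i = 1,2$, be two solutions defined on past sets ${\cal P}^{(i)}$. The intersection ${\cal P} := {\cal P}^{(1)} \cap {\cal P}^{(2)}$ is itself a past set containing $(0,0)$, and any point of ${\cal P}$ lies in some closed rectangle $[0, u_*] \times [0, v_*] \subset {\cal P}$. It therefore suffices to prove that the two solutions coincide on every such rectangle.

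Next I would exploit the observation, already used at the end of the proof of Theorem~\ref{existence}, that any solution in the sense of Definition~\ref{solution} automatically satisfies the integral equations \eqref{kappa_final}--\eqref{r_final}. Since $r^{(i)}, \nu^{(i)}, \kappa^{(i)}$ are continuous and nowhere zero on the compact rectangle and the remaining components are continuous, all seven unknowns admit common uniform upper bounds there, while $r^{(i)}, |\nu^{(i))}|, \kappa^{(i)}$ admit common positive lower bounds. Setting
\[
D(u,v) := \sum_{X} |X^{(1)}(u,v) - X^{(2)}(u,v)|,
\]
where $X$ ranges over $\{r, \nu, \lambda, \omega, \theta, \zeta, \kappa\}$, I would subtract \eqref{kappa_final}--\eqref{r_final} pairwise. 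The mean value theorem applied to the exponential factors, together with the uniform upper and lower bounds, allows each resulting difference to be estimated linearly in $D$, producing an inequality of the form
\[
D(u,v) \leq C \int_0^u D(u',v)\,du' + C \int_0^v D(u,v')\,dv',
\]
valid throughout $[0,u_*]\times[0,v_*]$, with $C$ depending only on the uniform bounds.

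The final step is a standard iteration of this two-variable linear integral inequality: substituting it into itself $n$ times yields $D(u,v) \leq M\,(2C\max(u_*,v_*))^n/n!$ for every $n \in \mathbb{N}$, which forces $D \equiv 0$. The main technical obstacle is justifying the linear bound in $D$ for each of the seven differences; the most delicate contributions come from the exponentials appearing in $\kappa$, $\nu$ and $\omega$, whose exponents involve quotients of the unknowns. It is precisely the nonvanishing of $r$, $\nu$ and $\kappa$ built into Definition~\ref{solution} that renders those quotients uniformly Lipschitz on the compact rectangle and allows the Grönwall estimate to close.
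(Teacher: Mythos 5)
Your proof is correct, but it takes a genuinely different route from the paper. The paper does not set up a Gr\"onwall estimate at all: it covers the rectangle $[0,u]\times[0,v]$ by finitely many strips of width $\tilde{U}$, where $\tilde{U}$ depends only on the uniform bounds enjoyed by both solutions on that compact rectangle, and on each strip it invokes the \emph{local} uniqueness of the fixed point in the metric space ${\cal S}$ constructed in the proof of Theorem~\ref{existence}; the one point requiring care there is that the restrictions of both given solutions actually lie in ${\cal S}$, so that the Banach fixed point theorem identifies them. Your argument instead works directly with the integral identities \eqref{kappa_final}$-$\eqref{r_final} --- which, as you correctly note, hold for \emph{any} solution in the sense of Definition~\ref{solution} because $r$, $\nu$ and $\kappa$ are continuous and nonvanishing, so the relevant equations can be integrated as linear ODEs along the characteristics --- and closes a two--variable Gr\"onwall inequality by Picard-type iteration. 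Both arguments are sound. The paper's version buys economy: it recycles the contraction already built for existence and requires no new estimates. Yours buys self-containedness and robustness: it avoids the (mildly delicate) verification that both solutions sit inside the particular ball ${\cal S}$, and it is essentially the $\Delta_0=0$ case of the continuous dependence lemma that the paper proves a few pages later, where the same subtraction of the integral formulas and a (one-variable) Gr\"onwall inequality appear; in that sense your approach anticipates, and could be merged with, that lemma. Your iteration bound $D\leq M\,(2C\max(u_*,v_*))^n/n!$ is correct up to an inessential constant, since the monotone operator $T[D](u,v)=C\int_0^u D(u',v)\,du'+C\int_0^v D(u,v')\,dv'$ satisfies $T^n[M]\leq M\,(2C)^n(u+v)^n/n!$, which indeed forces $D\equiv 0$.
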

\begin{proof}
The intersection of the domains is a past set.
Let $(u,v)$ be a point on this intersection. Then all functions are bounded on the rectangle $[0,u]\times[0,v]$.
Furthermore, there exists a positive lower bound for $r$ and $|\nu|$ for both solutions.
Let $u_0\in[0,u[$.
Theorem~\ref{existence} guarantees the existence of a $\tilde{U}$, depending only on these uniform bounds, such
that the characteristic initial value problem with initial data on $\{u_0\}\times[0,v]$ and on $[u_0,u_0+\tilde{U}]\times\{0\}$
has a unique solution on $[u_0,u_0+\tilde{U}]\times[0,v]$ in the set ${\cal S}$ defined in that theorem,
which includes the restrictions of both solutions to $[u_0,u_0+\tilde{U}]\times[0,v]$.
 Partitioning the rectangle $[0,u]\times[0,v]$ in a finite number
of strips of width at most $\tilde{U}$, uniqueness follows.
\end{proof}

An immediate consequence of Theorem~\ref{existence} and Proposition~\ref{uniqueness} is
\begin{Thm}[Maximal development and its domain $\mysigma$]\label{maximal} The characteristic initial value problem, with initial conditions\/
{\rm (I$_u$)} and\/ {\rm (I$_v$)} given for $u\in[0,\myumax]$ and $v\in[0,\infty[$, respectively, and with initial data satisfying {\rm (h1)$-$(h3)},
has a unique solution defined on a maximal past set $\mysigma$ containing a neighborhood of $[0,\myumax]\times\{0\}\cup\{0\}\times[0,\infty[$.\end{Thm}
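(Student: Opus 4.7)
The plan is to construct the maximal development by taking the union of the domains of all local solutions and using Proposition~\ref{uniqueness} to glue them into a single coherent object.

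First, let ${\cal F}$ denote the family of all solutions (in the sense of Definition~\ref{solution}) of the characteristic initial value problem with the prescribed initial conditions {\rm (I$_u$)} and {\rm (I$_v$)}. By Theorem~\ref{existence}, ${\cal F}$ is nonempty. Define $\mysigma := \bigcup_{S\in{\cal F}} {\cal P}_S$. Since any union of past sets in $[0,\myumax]\times[0,\infty[$ is again a past set, $\mysigma$ is a past set. By Proposition~\ref{uniqueness}, any two $S_1,S_2\in{\cal F}$ agree on ${\cal P}_{S_1}\cap {\cal P}_{S_2}$, so we may unambiguously define the tuple $(r,\nu,\lambda,\omega,\theta,\zeta,\kappa)$ at each $(u,v)\in\mysigma$ as the common value coming from any $S$ whose domain contains $(u,v)$. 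Continuity of the unknowns, nonvanishing of $r$, $\nu$, $\kappa$, continuity of the partial derivatives appearing in \eqref{r_u}$-$\eqref{kappa_u}, and the equations themselves are all local conditions, so they transfer from each ${\cal P}_S$ to $\mysigma$; thus the glued tuple is a solution on $\mysigma$.

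Next, I would exhibit a relatively open (in $[0,\myumax]\times[0,\infty[$) neighborhood of $[0,\myumax]\times\{0\}\cup\{0\}\times[0,\infty[$ inside $\mysigma$. The second assertion of Theorem~\ref{existence} supplies a solution on $[0,\myumax]\times[0,\tilde V]$, a relatively open neighborhood of $[0,\myumax]\times\{0\}$. For each $n\in\N$, the first assertion of Theorem~\ref{existence} applied with $V=n$ produces a solution on $[0,\tilde U_n]\times[0,n]$ with $\tilde U_n>0$; the union of these rectangles over $n$ is a relatively open neighborhood of $\{0\}\times[0,\infty[$. Both neighborhoods sit in $\mysigma$ by the very definition of ${\cal F}$.

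Maximality and uniqueness are then essentially tautological: any solution $\tilde S$ of the characteristic IVP extending the initial data is, by definition, an element of ${\cal F}$, so ${\cal P}_{\tilde S}\subset\mysigma$ and it agrees with our glued solution on its domain; conversely, if a strictly larger past set carried a solution extending ours, it would already have been absorbed by the union. Uniqueness of the maximal development on $\mysigma$ is Proposition~\ref{uniqueness} applied pointwise.

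The only mildly delicate step is verifying that the glued tuple really satisfies Definition~\ref{solution} on all of $\mysigma$, specifically the continuity of the partial derivatives in \eqref{r_u}$-$\eqref{kappa_u} across the boundary between two overlapping members of ${\cal F}$. But this is exactly the observation recorded in the gluing remark preceding Theorem~\ref{existence}: once the functions coincide on the overlap, equations \eqref{r_u}$-$\eqref{kappa_u} themselves force continuity of the relevant derivatives, and so the only potential obstacle dissolves into a routine check.
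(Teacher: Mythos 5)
Your proposal is correct and follows essentially the same route as the paper: define $\mysigma$ as the union of the domains of all solutions, observe it is a past set containing the required neighborhood by Theorem~\ref{existence}, and use Proposition~\ref{uniqueness} to glue the solutions into a single well-defined one. The extra details you supply (the union of rectangles $[0,\tilde U_n]\times[0,n]$ and the continuity of derivatives on overlaps) are fine elaborations of what the paper leaves implicit.
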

\begin{proof}
Consider the union $\mysigma$ of the domains of all solutions of the characteristic initial value problem.
The set $\mysigma$ is a past set because it is a union of past sets and contains a neighborhood of $[0,\myumax]\times\{0\}\cup\{0\}\times[0,\infty[$
because of Theorem~\ref{existence} (see Figure~\ref{fig1_altb}).
Given $(u,v)\in\mysigma$ we define the solution on $(u,v)$ to be the solution on any past set containing $(u,v)$.
Due to Proposition~\ref{uniqueness} the solution is well defined.
\end{proof}

\begin{figure}[h!]
\begin{center}
\begin{turn}{45}
\psfrag{p}{\footnotesize $\cal{P}$}
\psfrag{u}{\tiny $v$}
\psfrag{0}{\tiny $0$}
\psfrag{v}{\tiny $u$}
\psfrag{h}{\tiny \!\!\!\!\!$U$}
\psfrag{l}{\tiny $r_0(u), \nu_0(u), \zeta_0(u)$}
\psfrag{n}{\tiny $\lambda_0(v), \varpi_0(v), \theta_0(v), \kappa_0(v)$}
\includegraphics[scale=.9]{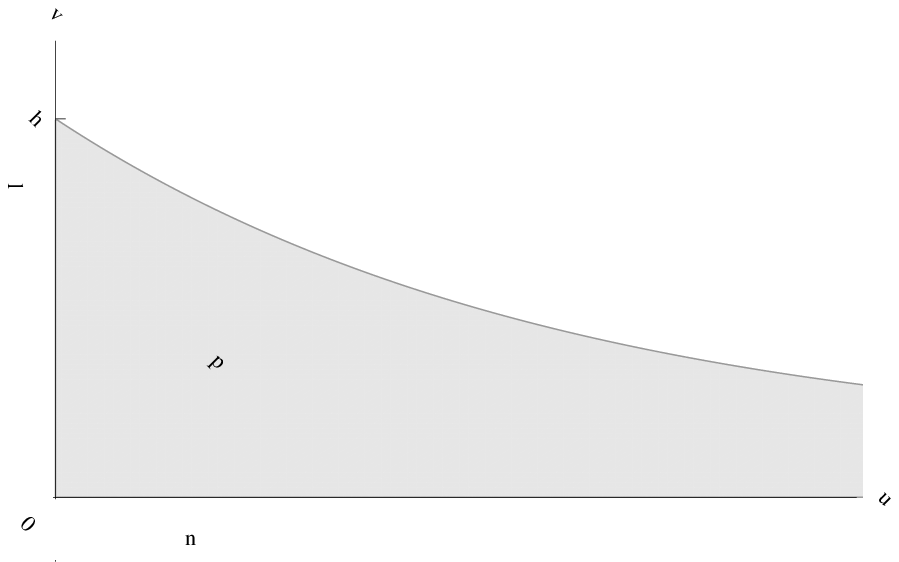}
\end{turn}
\end{center}
\vspace{-.5cm}
\caption{Domain of the maximal development.}\label{fig1_altb}
\end{figure}

\vspace{4mm}

The following version of Theorem~\ref{maximal} will be needed in Part~3.

\begin{Thm}[Maximal development for the backwards problem] The characteristic initial value problem
with initial conditions\/~{\rm (I$_u$)} given for $u\in]0,\myumax]$ and
$$({\rm I}^v)\qquad
\left\{
\begin{array}{lcl}
 \lambda(U,v)&=&\lambda_0(v),\\
 \omega(U,v)&=&\omega_0(v),\\
 \theta(U,v)&=&\theta_0(v),\\
 \kappa(U,v)&=&\kappa_0(v),
\end{array}
\right.\qquad{\rm for}\ v\in[0,V],
$$
satisfying\/ {\rm (h1)}$-${\rm (h3)},
has a unique solution defined on a maximal reflected past set\footnote{By reflected past set we mean a set $\mathcal{R}$ such that if $(u,v)\in \mathcal{R}$ then $[u,U]\times[0,v]\subset\mathcal{R}$.} $\mathcal{R}$ containing a neighborhood of $]0,\myumax]\times\{0\}\cup\{U\}\times[0,V]$.\end{Thm}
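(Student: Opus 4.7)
My plan is to reduce the backwards problem to the forwards one already solved in Theorem~\ref{maximal} via the orientation-reversing change of variable $\tilde u := U - u$ on the $u$-axis. Setting $\tilde r(\tilde u, v) := r(U - \tilde u, v)$, $\tilde \nu := -\nu$, $\tilde \zeta := -\zeta$, and $\tilde \lambda := \lambda$, $\tilde \omega := \omega$, $\tilde \theta := \theta$, $\tilde \kappa := \kappa$, I would first check by direct substitution that the new tuple satisfies the same first-order system~\eqref{r_u}$-$\eqref{kappa_at_u}: equations~\eqref{r_u} and~\eqref{theta_u} are preserved because they contain a single factor of $\nu$ (respectively $\zeta$) on the right and $\partial_u$ flips sign on the left; equations~\eqref{lambda_u} and~\eqref{kappa_u} are preserved because the minus signs from $\partial_u \to -\partial_{\tilde u}$ and from $\tilde\zeta^2/\tilde\nu = -\zeta^2/\nu$ (or $\tilde\nu\tilde\kappa = -\nu\kappa$) cancel; equations~\eqref{nu_v}, \eqref{omega_u}, \eqref{omega_v} and~\eqref{zeta_v} are handled similarly. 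The data~(I$^v$) on $\{u = U\}$ then becomes initial data on $\{\tilde u = 0\}$, and~(I$_u$) for $u\in ]0,U]$ becomes data on $\{v=0\}$ in the $(\tilde u, v)$ coordinates.

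The wrinkle is that (h2) imposes $\nu_0 < 0$, so after the change of variable $\tilde \nu_0 > 0$, and Theorem~\ref{maximal} does not apply verbatim. However, inspecting the proof of Theorem~\ref{existence} reveals that the sign of $\nu$ is never essentially invoked; what is used is that $|\nu|$ remains bounded away from zero, which is guaranteed by the representation~\eqref{nu_final} (as $\nu$ is $\nu_0$ times an exponential factor, it retains its sign and is uniformly bounded below in absolute value on compact sets). Positivity of $\kappa$ is analogously preserved by~\eqref{kappa_final}. Therefore I would re-run the Banach fixed point iteration from Theorem~\ref{existence} with $\tilde \nu_0 > 0$ --- the estimates on $R_\nu$, $R_\lambda$, $R_\omega$, etc., depend only on $|\tilde\nu|$ and the bounds on $\tilde r, \tilde \kappa$ --- and obtain local existence and uniqueness in rectangles $[0, \tilde U]\times [0, V]$ and $[0, U]\times[0, \tilde V]$.

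Translating back via $u = U - \tilde u$ yields local existence and uniqueness for the backwards problem in rectangles $[U - \tilde U, U]\times[0, V]$ and $[0, U]\times[0, \tilde V]$. Uniqueness on the intersection of two domains follows by the strip-partitioning argument of Proposition~\ref{uniqueness}, now with $u$ decreasing from $U$. Defining $\mathcal R$ to be the union of the domains of all solutions to the backwards problem, one obtains a unique maximal solution. The reflected past set property --- $(u,v)\in\mathcal R$ implies $[u, U]\times[0,v]\subset \mathcal R$ --- is built in: the local rectangles produced by the iteration are precisely of this shape (anchored at $u = U$ in the backwards direction and $v = 0$ in the characteristic direction), and the union of such reflected rectangles is again reflected.

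The only non-routine point is bookkeeping: one must check that the sign reversal $\tilde u = U - u$ really does leave every step of the iteration, uniqueness, and gluing arguments intact, and that the orientation of the ``reflected past set'' is consistent with the domain of dependence inherited from the structure of the first-order system. I do not expect any genuinely new analytic difficulty, since the proof is a direct transcription of those already given for the forwards problem.
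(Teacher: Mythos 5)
Your proposal is correct and is essentially the paper's own argument: the paper's proof consists precisely of the observation that the forward theory goes through verbatim when the hypothesis $\nu_0<0$ is replaced by $\nu_0>0$ (only $|\nu|$ being bounded away from zero is ever used), followed by the substitution $u\mapsto U-u$. You carry out the same reduction, merely in the opposite order of presentation and with the sign bookkeeping made explicit.
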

\begin{proof}
We just have to check that all the above holds if one substitutes the hypothesis $\nu_0(u)<0$ by the hypothesis $\nu_0(u)>0$, and then replace $u$ by $U-u$.
\end{proof}

In Part~3, we will also need
\begin{Lem}[Continuous dependence of the solution on the initial data]
Let $(r,\nu,\lambda,\varpi,\theta,\zeta,\kappa)$ and $(\tilde r,\tilde\nu,\tilde\lambda,\tilde\varpi,\tilde\theta,\tilde\zeta,\tilde\kappa)$
be the solutions of the characteristic initial value problem with initial data $(r_0,\nu_0,\lambda_0,\varpi_0,\theta_0,\zeta_0,\kappa_0)$
and $(\tilde r_0,\tilde\nu_0,\tilde\lambda_0,\tilde\varpi_0,\tilde\theta_0,\tilde\zeta_0,\tilde\kappa_0)$, respectively. Assume that the solutions are defined in $[0,U]\times[0,V]$. Furthermore, let
\begin{eqnarray*}
d(U,V)&=&(\|r-\tilde r\|+\|\nu-\tilde\nu\|+\|\lambda-\tilde\lambda\|+\|\varpi-\tilde\varpi\|\\
&&\ \ +\|\theta-\tilde\theta\|+\|\zeta-\tilde\zeta\|+\|\kappa-\tilde\kappa\|)_{L^\infty([0,U]\times[0,V])}.
\end{eqnarray*}
Then there exists a constant $C>0$, depending on $U$, $V$ and
\begin{eqnarray*}
D&:=&\Bigl(\|r\|+\|\nu\|+\|\lambda\|+\|\varpi\|
+\|\theta\|\\
&&\ \ 
+\|\zeta\|+\|\kappa\|+\left\|\frac1{r}\right\|+\left\|\frac1{\nu}\right\|\Bigr)_{L^\infty([0,U]\times[0,V])},
\end{eqnarray*}
such that, if
\begin{eqnarray*}\Delta_0&:=&(\|r_0-\tilde r_0\|+\|\nu_0-\tilde\nu_0\|+
\|\zeta_0-\tilde\zeta_0\|)_{L^\infty([0,U])}\\
&&\ \ +
(\|\lambda_0-\tilde\lambda_0\|+
\|\varpi_0-\tilde\varpi_0\|+\|\theta_0-\tilde\theta_0\|
+\|\kappa_0-\tilde\kappa_0\|)_{L^\infty([0,V])}
\end{eqnarray*}
is sufficiently small, then
\begin{eqnarray}
d(U,V)&\leq&C\Delta_0. \label{cdcd}
\end{eqnarray}
\end{Lem}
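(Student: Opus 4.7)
The plan is to exploit the integral formulation \eqref{kappa_final}--\eqref{r_final} that the unique fixed point of $\mathcal{T}$ must satisfy, and to show that the map from initial data to solution is Lipschitz by a two-dimensional Gronwall argument combined with the strip-decomposition idea already used in the proofs of Theorem~\ref{existence} and Proposition~\ref{uniqueness}. The fact that $r$, $\nu$, $\lambda$, $\varpi$, $\theta$, $\zeta$, $\kappa$, $1/r$ and $1/\nu$ are uniformly bounded by $D$ in $[0,U]\times[0,V]$ (and, if $\Delta_0$ is small enough, the corresponding bound holds for $(\tilde r,\ldots,\tilde\kappa)$ with constant $2D$) will guarantee that all the nonlinear right-hand sides occurring in \eqref{kappa_final}--\eqref{r_final}---products, quotients, and exponentials of integrals---are Lipschitz in the seven unknowns, with a Lipschitz constant $L=L(D,U,V,e,\Lambda)$.

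I would first subtract the integral equations for the two solutions, giving relations of the schematic form
\begin{align*}
|r-\tilde r|(u,v) &\leq \Delta_0 + L\int_0^v\bigl(|\lambda-\tilde\lambda|\bigr)(u,v')\,dv', \\
|\nu-\tilde\nu|(u,v) &\leq L\,\Delta_0 + L\int_0^v\bigl(|r-\tilde r|+|\kappa-\tilde\kappa|+|\varpi-\tilde\varpi|\bigr)(u,v')\,dv', \\
|\zeta-\tilde\zeta|(u,v) &\leq \Delta_0 + L\int_0^v\bigl(|r-\tilde r|+|\theta-\tilde\theta|+|\nu-\tilde\nu|\bigr)(u,v')\,dv',
\end{align*}
and analogous inequalities for $|\lambda-\tilde\lambda|$, $|\varpi-\tilde\varpi|$, $|\theta-\tilde\theta|$, $|\kappa-\tilde\kappa|$ in which the integral is in $u$ over $[0,u]$ and involves $|r-\tilde r|,|\nu-\tilde\nu|,|\zeta-\tilde\zeta|$. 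Writing
\[
\Delta(u,v):=\bigl(|r-\tilde r|+|\nu-\tilde\nu|+|\lambda-\tilde\lambda|+|\varpi-\tilde\varpi|+|\theta-\tilde\theta|+|\zeta-\tilde\zeta|+|\kappa-\tilde\kappa|\bigr)(u,v),
\]
one obtains
\[
\Delta(u,v)\leq C\Delta_0+C\int_0^u \Delta(u',v)\,du'+C\int_0^v \Delta(u,v')\,dv',
\]
which is a standard two-dimensional Gronwall inequality (of Wendroff/Perov type). Setting $E(u,v):=\sup_{u'\leq u,\,v'\leq v}\Delta(u',v')$, this yields $E(u,v)\leq C\Delta_0+C(U+V)E(u,v)$ locally and, iterating, $\Delta(u,v)\leq C\Delta_0 e^{C(u+v)}$ on $[0,U]\times[0,V]$; taking the supremum gives \eqref{cdcd}.

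Alternatively, and more in line with the existence proof, I would partition $[0,U]$ into finitely many subintervals of length at most $\tilde U$, where $\tilde U$ is the contraction radius from Theorem~\ref{existence} associated with the uniform bounds in~$D$. On the first strip $[0,\tilde U]\times[0,V]$ the contraction estimate $d_{n+1}\leq C\tilde U\,d_n$ with $C\tilde U<1$ directly yields a bound of the form $\|(r,\ldots,\kappa)-(\tilde r,\ldots,\tilde\kappa)\|_{\mathcal S}\leq (1-C\tilde U)^{-1}\Delta_0'$, where $\Delta_0'$ is the corresponding sum of $L^\infty$-differences of the initial data for that strip. Passing to the next strip, the new ``initial data'' along $\{u=\tilde U\}\times[0,V]$ differ by at most a constant multiple of $\Delta_0$ by the previous step, so chaining the estimate through finitely many strips produces the final bound \eqref{cdcd}.

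The main obstacle, as usual for hyperbolic characteristic problems, is the mixed $u$/$v$-integration structure in \eqref{kappa_final}--\eqref{r_final}: the integrals in $u$ and in $v$ are coupled, so a single one-dimensional Gronwall will not suffice. This is precisely what the two-dimensional Gronwall inequality above, or equivalently the finite strip iteration based on the uniform $\tilde U$ from Theorem~\ref{existence}, is designed to overcome. A secondary technical point is to ensure that on each strip the second solution lies in a set of functions comparable to $\mathcal S$, so that the Lipschitz constants are uniform; this is precisely where the smallness assumption on $\Delta_0$ is used.
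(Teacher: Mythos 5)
Your proposal is correct and follows essentially the same route as the paper: subtract the integral equations \eqref{kappa_final}$-$\eqref{r_final}, use Lipschitz estimates on the nonlinearities with constants controlled by $D$ (invoking the smallness of $\Delta_0$, exactly as you do, to ensure the second solution obeys comparable bounds so the constants are uniform), and close with a Gronwall argument. The only difference is in the closing device: the paper exploits the triangular structure of the system --- the $u$-integrated unknowns $\lambda,\varpi,\theta,\kappa$ are driven only by the initial data and by $r,\nu,\zeta$, so their differences are absorbed into sup-norms over the full rectangle and the three $v$-integrated unknowns close a one-dimensional Gronwall inequality in $v$ --- whereas you treat all seven unknowns symmetrically and appeal to a two-dimensional Wendroff-type inequality (or, equivalently, the strip iteration); both are valid.
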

\begin{proof}
 Using~\eqref{kappa_final}$-$\eqref{r_final},
\begin{eqnarray*}
\|\varpi-\tilde\varpi\|_{L^\infty([0,U]\times[0,V])}&\leq& C\|\varpi_0-\tilde\varpi_0\|_{L^\infty([0,V])}+C(\|r-\tilde r\|\\
&&+\|\nu-\tilde\nu\|+\|\zeta-\tilde\zeta\|)_{L^\infty([0,U]\times[0,V])},\\
\|\kappa-\tilde\kappa\|_{L^\infty([0,U]\times[0,V])}&\leq& C\|\kappa_0-\tilde\kappa_0\|_{L^\infty([0,V])}+C(\|r-\tilde r\|\\
&&+\|\nu-\tilde\nu\|+\|\zeta-\tilde\zeta\|)_{L^\infty([0,U]\times[0,V])},\\
\|\lambda-\tilde\lambda\|_{L^\infty([0,U]\times[0,V])}&\leq& 
\|\lambda_0-\tilde\lambda_0\|_{L^\infty([0,V])}+C(\|r-\tilde r\|+\|\nu-\tilde\nu\|\\
&&+\|\varpi-\tilde\varpi\|+\|\kappa-\tilde\kappa\|)_{L^\infty([0,U]\times[0,V])}\\
&\leq& (\|\lambda_0-\tilde\lambda_0\|+C\|\varpi_0-\tilde\varpi_0\|+C\|\kappa_0-\tilde\kappa_0\|)_{L^\infty([0,V])}\\
&&+C(\|r-\tilde r\|+\|\nu-\tilde\nu\|+\|\zeta-\tilde\zeta\|)_{L^\infty([0,U]\times[0,V])},\\
\|\theta-\tilde\theta\|_{L^\infty([0,U]\times[0,V])}&\leq& \|\theta_0-\tilde\theta_0\|_{L^\infty([0,V])}+
C(\|r-\tilde r\|\\ &&+\|\lambda-\tilde\lambda\|+\|\zeta-\tilde\zeta\|)_{L^\infty([0,U]\times[0,V])}\\
&\leq& (C\|\lambda_0-\tilde\lambda_0\|+C\|\varpi_0-\tilde\varpi_0\|\\
&&+\|\theta_0-\tilde\theta_0\|+C\|\kappa_0-\tilde\kappa_0\|
)_{L^\infty([0,V])}
\\ &&+
C(\|r-\tilde r\|+\|\nu-\tilde\nu\|+\|\zeta-\tilde\zeta\|)_{L^\infty([0,U]\times[0,V])},\\
\|r-\tilde r\|_{L^\infty([0,U]\times[0,V])}&\leq&
\|r_0-\tilde r_0\|_{L^\infty([0,U])}\\
&&+
\int_0^V\|\lambda-\tilde\lambda\|_{L^\infty([0,U]\times[0,V'])}\,dV',\\
\|\nu-\tilde\nu\|_{L^\infty([0,U]\times[0,V])}&\leq& 
C\|\nu_0-\tilde\nu_0\|_{L^\infty([0,U])}+
C\int_0^V(\|r-\tilde r\|\\
&&\ +\|\varpi-\tilde\varpi\|+\|\kappa-\tilde\kappa\|)_{L^\infty([0,U]\times[0,V'])}\,dV',\\
\|\zeta-\tilde\zeta\|_{L^\infty([0,U]\times[0,V])}&\leq& \|\zeta_0-\tilde\zeta_0\|_{L^\infty([0,U])}+C\int_0^V(\|r-\tilde r\|\\ 
&&\ +\|\nu-\tilde\nu\|+\|\theta-\tilde\theta\|)_{L^\infty([0,U]\times[0,V'])}\,dV'.
\end{eqnarray*}
Combining the above inequalities, we get
\begin{eqnarray*}
d(U,V)&\leq&C(\|r_0-\tilde r_0\|+\|\nu_0-\tilde\nu_0\|+
\|\zeta_0-\tilde\zeta_0\|)_{L^\infty([0,U])}\\
&&+C
(\|\lambda_0-\tilde\lambda_0\|+
\|\varpi_0-\tilde\varpi_0\|\\
&&\ \ \ \ +\|\theta_0-\tilde\theta_0\|
+\|\kappa_0-\tilde\kappa_0\|)_{L^\infty([0,V])}\\
&&+C\int_0^Vd(U,V')\,dV'.
\end{eqnarray*}
Using Gronwall's inequality, we obtain~\eqref{cdcd}. The constant $C$ in~\eqref{cdcd} depends a priori on $D$ and
\begin{eqnarray*}
\tilde D&:=&\Bigl(\|\tilde r\|+\|\tilde\nu\|+\|\tilde\lambda\|+\|\tilde\varpi\|
+\|\tilde\theta\|\\
&&\ \ +\|\tilde\zeta\|+\|\tilde\kappa\|+\left\|\frac1{\tilde r}\right\|+\left\|\frac1{\tilde\nu}\right\|\Bigr)_{L^\infty([0,U]\times[0,V])}.
\end{eqnarray*}
However, the proof of Theorem~\ref{existence} shows that the norm of a solution is controlled by its initial data.
Using this fact together with~\eqref{cdcd}, we conclude that
if $\Delta_0$ is small, then
$\tilde D$ is controlled by, say, $D+1$. Therefore,
 the constant $C$ can be chosen depending only on $D$, as long as
$\Delta_0$ is sufficiently small.
The proof of the lemma is complete.
\end{proof}

With regard to the sign and monotonicity of the functions we can state
\begin{Lem}[Sign and monotonicity]\label{sign} 
Suppose that $(r,\nu,\lambda,\omega,\theta,\zeta,\kappa)$ is the solution of the characteristic initial value problem, with initial data satisfying\/ {\rm (h1)$-$(h3)}.
Then:
\begin{itemize}
 \item $\kappa$ is positive;
\item $\nu$ is negative;
\item $r$ is decreasing with $u$;
\item $\omega$ is nondecreasing with $v$;
\item if $\lambda(u,v)$ is negative (respectively, nonpositive), then $\lambda(u,v')$ is negative (respectively, nonpositive) for $v'>v$.
\end{itemize}
\end{Lem}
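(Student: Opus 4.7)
The first four items follow directly from the integral and differential equations already established in Theorem~\ref{existence}; only the fifth requires a short computation, essentially the Raychaudhuri identity in the $v$-direction rewritten inside our first order system.

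Positivity of $\kappa$ and negativity of $\nu$ are read off the representation formulas~\eqref{kappa_final} and~\eqref{nu_final}, which express each of these unknowns as its initial datum multiplied by a strictly positive exponential; the signs are therefore inherited from (h2). Once $\nu<0$ is available, monotonicity of $r$ in $u$ is immediate from~\eqref{r_u}, which, by the argument in the proof of Theorem~\ref{existence}, propagates through the whole rectangle. Once $\kappa>0$ is available, the nondecreasing character of $\varpi$ in $v$ is immediate from~\eqref{omega_v}, since the right-hand side is $\tfrac12\theta^2/\kappa\geq 0$.

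For the propagation of the sign of $\lambda$, I would work with the auxiliary quantity
\[
h:=\frac{1-\mu}{\nu}=\frac{\lambda}{\kappa\nu},
\]
where the second equality uses the constraint~\eqref{kappa_at_u}. Since $\kappa>0$ and $\nu<0$, the sign of $h$ is opposite to that of $\lambda$, so it suffices to show that $h$ is nondecreasing in $v$. Direct differentiation, using $\partial_v r=\lambda$ from~\eqref{r_v} and $\partial_v\varpi=\tfrac12\theta^2/\kappa$ from~\eqref{omega_v}, gives
\[
\partial_v(1-\mu)=-\,\frac{\theta^2}{r\kappa}-\frac{2\lambda}{r^2}\left(\vlinha\right).
\]
Combining this with~\eqref{nu_v} in the quotient $h$ and invoking $\lambda=\kappa(1-\mu)$ one last time, the two terms containing $\vlinha$ cancel exactly, leaving
\[
\partial_v h=-\,\frac{\theta^2}{r\kappa\nu}\geq 0.
\]

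Thus $h$ is nondecreasing in $v$. If $\lambda(u,v)<0$ (resp.~$\leq 0$), then $h(u,v)>0$ (resp.~$\geq 0$); monotonicity forces the same inequality at $(u,v')$ for every $v'>v$, and reading the sign back through $h=\lambda/(\kappa\nu)$ returns $\lambda(u,v')<0$ (resp.~$\leq 0$). The only slightly delicate point in the argument is the cancellation of the $\vlinha$-terms in the computation of $\partial_v h$. This cancellation is forced by the algebraic constraint~\eqref{kappa_at_u} and is nothing but the incarnation, within the first order system, of the Raychaudhuri equation~\eqref{ray_v}; up to a positive factor, $h$ coincides with the classical Raychaudhuri variable $\lambda/\Omega^2$, whose monotonicity is the geometric content of item~(v).
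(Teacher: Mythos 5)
Your proof is correct and follows essentially the same route as the paper: the signs of $\kappa$ and $\nu$ come from their exponential representations (the paper instead reads them off Definition~\ref{solution} together with continuity, a trivial difference), and the last item is established in the paper by exactly your computation, namely $\partial_v\bigl(\frac{1-\mu}{\nu}\bigr)=-\frac{\theta^2}{\nu r\kappa}\geq 0$, which is its equation~\eqref{Ray}. Nothing further is needed.
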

\begin{proof}
Using Definition~\ref{solution} and the sign of the initial data,
the function $\kappa$ is positive and the function $\nu$ is negative. From~(\ref{r_u}), the function $r$ is decreasing with $u$.
Then (\ref{omega_v})
shows $\omega$ is nondecreasing with $v$.

 If $\lambda(u,v)$ is negative, then since $\kappa$ is positive, equality (\ref{kappa_at_u}) shows that
$(1-\mu)(u,v)$ is negative (recall the definition of $\mu$ in~\eqref{mu_0}). On the other hand, using equations (\ref{r_v}), (\ref{nu_v}), (\ref{omega_v}) and (\ref{kappa_at_u}), we get
\begin{equation}\label{Ray} 
\partial_v\left(\frac{1-\mu}{\nu}\right)=-\,\frac{\theta^2}{\nu r\kappa}\geq 0.
\end{equation}
Therefore $(1-\mu)(u,v')$ is negative for $v'>v$.
From (\ref{kappa_at_u}), $\lambda(u,v')$ is negative for $v'>v$. The assertion regarding nonpositive $\lambda$ is proved in the same way.
\end{proof}

\begin{Rmk}
Notice that \eqref{kappa_u} and \eqref{Ray} recover the Raychaudhuri equations \eqref{ray_u} and \eqref{ray_v} from the solution of the first order system. 
\end{Rmk}

\section{Criterion for breakdown}

We now proceed to establish a breakdown criterion for the solution, beginning with the following crucial lemma.

\begin{Lem}[$r \to 0$ if and only if $\varpi\to +\infty$]\label{blow_up_b_semSinal}
Suppose that $(r,\nu,\lambda,\omega,\theta,\zeta,\kappa)$ is the solution of the characteristic initial value problem on $[0,U[\times[0,V[$
with initial data satisfying\/ {\rm (h1)$-$(h3)}.
Assume that $r$ is bounded. Then
$$
\inf_{(u,v)\in[0,U[\times[0,V[} r(u,v)=0\Leftrightarrow \sup_{(u,v)\in[0,U[\times[0,V[} \omega(u,v)=+\infty\;.
$$
\end{Lem}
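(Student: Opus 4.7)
Both implications are proved by contrapositive. A key role is played by the fact that the charge $e$ is non-vanishing (as built into Problem~\ref{problema}), together with the identity
\[
\varpi \;=\; \frac{r}{2} + \frac{e^2}{2r} - \frac{\Lambda r^3}{6} - \frac{r\lambda}{2\kappa},
\]
which follows from \eqref{bar_rafaeli} and $\Omega^2=-4\nu\kappa$ (equivalently, from $\lambda/\kappa=1-\mu$ and \eqref{mu_0}).

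\emph{Step 1: $\sup\varpi<\infty\Longrightarrow\inf r>0$.} Set $W:=\sup\varpi$ and $R:=\sup r$. Since $\varpi\leq W$ gives $-2\varpi/r\geq-2W/r$, and $r\leq R$ gives $-\Lambda r^2/3\geq-|\Lambda|R^2/3$, we find (crucially using $e\neq 0$) that
\[
1-\mu \;=\; 1-\frac{2\varpi}{r}+\frac{e^2}{r^2}-\frac{\Lambda r^2}{3}\;\geq\;\frac{e^2}{2r^2}\;>\;0
\]
whenever $r\leq r_*$, for a suitable $r_*=r_*(W,R,e,\Lambda)>0$. By Lemma~\ref{sign} and the constraint \eqref{kappa_at_u}, this implies $\lambda=\kappa(1-\mu)>0$ throughout $\{r\leq r_*\}$. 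A barrier argument in $v$ at fixed $u$ then shows that $g(v):=r(u,v)$ cannot cross the level $r_*$ downward: at any hypothetical crossing $v^*$ with $g(v^*)=r_*$ and $g<r_*$ just to the right, one would need $g'(v^*)\leq 0$, while in fact $g'(v^*)=\lambda(u,v^*)>0$. Consequently $g(v)\geq\min\{r_0(u),r_*\}\geq\min\{\min_{[0,U]}r_0,\,r_*\}>0$, so $\inf r>0$.

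\emph{Step 2: $\inf r>0\Longrightarrow\sup\varpi<\infty$.} Suppose now $r_{\min}\leq r\leq R$ throughout. From \eqref{kappa_u} and $\nu<0$, $\kappa$ is nonincreasing in $u$, so $\kappa(u,v)\leq\kappa_0(v)\leq K_0:=\sup_{[0,V]}\kappa_0<\infty$. I would then close a Gr\"onwall-type bootstrap on the integral system \eqref{kappa_final}--\eqref{r_final}: with $r$ and $\kappa$ already controlled, one estimates in turn $|\nu|$ via \eqref{nu_final}, $|\lambda|$ via \eqref{lambda_final}, $|\theta|$ and $|\zeta|$ via \eqref{theta_final}--\eqref{zeta_final}, and $|\varpi|$ via \eqref{omega_final}, while simultaneously tracking positive lower bounds for $|\nu|$ and $\kappa$ so that no divisor in those integral equations degenerates. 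Collecting the resulting system of coupled integral inequalities and applying Gr\"onwall on $[0,U]\times[0,V]$ yields uniform a priori bounds for all seven unknowns; in particular $\sup\varpi<\infty$.

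\emph{Main obstacle.} Step~1 is an essentially one-dimensional barrier/monotonicity argument, and it works precisely because of the Coulombic term $e^2/r^2$ in $1-\mu$. The more delicate part is Step~2: the first order system is nonlinearly coupled, so the individual estimates must be ordered carefully, and the lower bounds on $|\nu|$ and $\kappa$ must be preserved throughout the iteration to ensure that the integrands in \eqref{kappa_final}--\eqref{r_final} remain bounded. The bootstrap has to be set up in such a way that invoking Gr\"onwall at the end is not circular.
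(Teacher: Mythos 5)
Your Step 1 is correct: it is the contrapositive form of the paper's first implication, and it isolates the same key mechanism (the Coulomb term $e^2/r^2$ forces $1-\mu>0$, hence $\lambda=\kappa(1-\mu)>0$, at small $r$ when $\varpi$ is bounded; the paper runs the same monotonicity observation in the direct direction, concluding $1-\mu<0$ and hence $\varpi>\frac r2+\frac{e^2}{2r}-\frac{\Lambda}{6}r^3\to\infty$ along any sequence with $r\to0$). The barrier phrasing is fine, modulo taking $v^*$ to be the first downward crossing.

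Step 2, however, has a genuine gap, and it is exactly at the point you flag as the ``main obstacle'' without resolving it. Your proposed ordering begins by estimating $|\nu|$ via \eqref{nu_final}, but the exponent there is $\int_0^v 2\kappa\frac{1}{r^2}\bigl(\frac{e^2}{r}+\frac{\Lambda}{3}r^3-\varpi\bigr)\,dv'$, which requires an a priori bound on $\int_0^v|\varpi|\,dv'$ --- the very quantity you propose to bound last. The resulting coupled inequalities are not of Gr\"onwall type: schematically one gets $W\lesssim\exp\bigl(\exp\bigl(C\int W\bigr)\bigr)$ for $W=\sup|\varpi|$ (the inner exponential coming from $1/|\nu|$ in \eqref{omega_final}, the outer from \eqref{nu_final}), and such inequalities do not close. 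The paper breaks this circularity with a specific device you are missing: choose $A_\varpi$ so large that $\varpi\geq A_\varpi$ forces $1-\mu<0$ and hence $\lambda<0$, split the rectangle into $\{\varpi\leq A_\varpi\}$ and $\{\varpi\geq A_\varpi\}$ along the curve $v=\hat v(u)$, and observe that on the second region the constraint \eqref{kappa_at_u} lets one rewrite the integrand of \eqref{nu_final} as $\lambda\, f(r,\varpi)$ with $f=\frac{2}{r^2}\frac{e^2/r+\Lambda r^3/3-\varpi}{1-\mu}$ uniformly bounded (it tends to $1/r$ as $\varpi\to\infty$), so that
\begin{equation*}
\int_{\hat v(u)}^{v}|\lambda|\,|f|\,dv'\;\leq\;-C_f\int_{\hat v(u)}^{v}\lambda\,dv'\;=\;C_f\bigl[r(u,\hat v(u))-r(u,v)\bigr]\;\leq\;C_fC_r,
\end{equation*}
i.e.\ the dangerous integral is controlled by the total variation of $r$ in $v$, independently of any bound on $\varpi$. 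Only after $\nu$ is bounded above and away from zero in this way can one bound $\lambda$, then $\theta$ and $\zeta$ by a genuine (linear) Gr\"onwall argument, then $\kappa$ from below, and finally $\varpi$ by integrating \eqref{omega_v}. Without this substitution, or an equivalent device, your bootstrap does not close.
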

\begin{proof}
Suppose that $(u,v)\in[0,U[\times[0,V[$ is such that
$$
 r(u,v)<\min_{u'\in[0,U]} r(u',0).
$$
Then there exists $0<v_*<v$ such that $\lambda(u,v_*)<0$. From Lemma~\ref{sign},
 we must have
$\lambda(u,v)< 0$.
Suppose there exists a sequence $(u_n,v_n)$ such that $r(u_n,v_n)\rightarrow 0$. Because $r$ has a positive minimum on
$[0,U]\times\{0\}$,
then $\lambda(u_n,v_n)<0$, for all $n$ sufficiently large. Thus $1-\mu(u_n,v_n)<0$, implying
\begin{equation}\label{bananaPequena}
\omega(u_n,v_n)>\frac{r(u_n,v_n)}{2}+\frac{e^2}{2r(u_n,v_n)}-\frac{\Lambda}{6}r^3(u_n,v_n)\rightarrow +\infty.
\end{equation}
Therefore, if $r$ goes to zero then $\omega$ goes to infinity.

To prove the converse, we will show that if there
exist constants $c_r$ and $C_r$ such that
\begin{equation}
\label{rBounds}
 0<c_r\leq r\leq C_r\;,
\end{equation}
in the rectangle $[0,U[\times[0,V[$, then $\omega$ is bounded in that rectangle. So $\omega$ cannot go to infinity without $r$ going to zero (recall that we are assuming that $r$ is bounded).

We start by showing that $\Aw$ can be chosen large enough so that
\begin{enumerate}[(i)]
\item $|\omega|<\Aw$\;,  on the initial segments  $[0,U]\times\{0\}\cup\{0\}\times [0,V]$;
\item $f(r,\omega):=\frac{2}{r^2}\frac{\vlinha}{\muu}$ is  bounded in absolute value by a constant $C_f$ in the region
$\{(u,v)\in[0,U[\times[0,V[:\omega(u,v)\geq \Aw\}$;
\item $\omega(u,v)\geq \Aw\  \Rightarrow\   \lambda(u,v)<0$.
\end{enumerate}
It is obvious that we can choose $A_\omega$ satisfying (i). By increasing $A_\omega$, if necessary, we can
guarantee (ii) because
$$ 
\lim_{\omega\rightarrow+\infty}f(r,\omega)=\frac{1}{r}\leq\frac{1}{c_r}.
$$ 
Finally, \eqref{mu_0} and \eqref{rBounds} show that if $A_\omega$ is sufficiently large and $\omega(u,v)>A_\omega$, then
$1-\mu(u,v)$ is negative.  For such an $A_\omega$, (iii) holds.

Let $\hat v:[0,U[\rightarrow [0,V]$ be defined by
\begin{equation}\label{defHatv}
\hat v(u):=\left\{
\begin{array}{l}
 \min\{v\in [0,V[ \, :\, \omega(u,v)=\Aw\}\; \text{ if the set is nonempty},\\ \\
 V \; \text{ otherwise}.
\end{array}
\right.
\end{equation}
We partition $[0,U[\times[0,V[$ in
$$
{\cal R_-}=\{(u,v)\in[0,U[\times[0,V[:0\leq v\leq \hat{v}(u)\}
$$
and
$$
{\cal R_+}=\{(u,v)\in[0,U[\times[0,V[:v\geq \hat{v}(u)\}
$$
(see Figure~\ref{fig2_altb}).

\begin{figure}[h!]
\begin{center}
\begin{turn}{45}
\begin{psfrags}
\psfrag{g}{\tiny $\cal{R}_-$}
\psfrag{j}{\tiny $\cal{R}_+$}
\psfrag{v}{\tiny $u$}
\psfrag{u}{\tiny $v$}
\psfrag{h}{\tiny $v=\hat{v}(u)$}
\psfrag{(0,0)}{$(0,0)$}
\psfrag{x}{\tiny \,$U$}
\psfrag{b}{\tiny \!$V$}
\includegraphics[scale=.8]{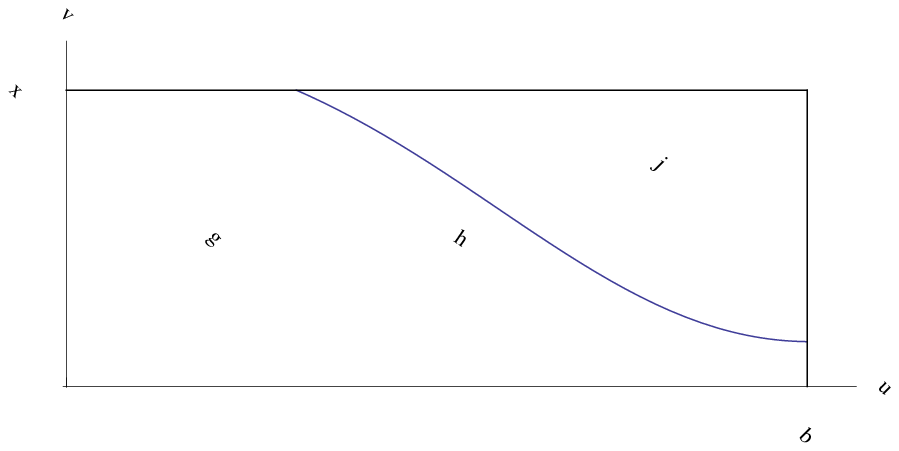}
\end{psfrags}
\end{turn}
\end{center}
\vspace{-.5cm}
\caption{The regions ${\cal R_-}$ and ${\cal R_+}$.}\label{fig2_altb}
\end{figure}

By the monotonicity of $\omega$ established in Lemma~\ref{sign},
we have
$$|\omega(u,v)| \leq \Aw\  {\rm if}\ (u,v)\in{\cal R}_- $$
and
\begin{equation}\label{quase}
\omega(u,v) \geq \Aw\  {\rm if}\ (u,v)\in{\cal R}_+.
\end{equation}

By \eqref{kappa_final}, we have $0<\kappa(u,v)\leq\kappa_0(v)$. On ${\cal R}_-$,
there exists a constant $C$ such that
\begin{equation}\label{nereida}
\left|2\kappa\frac{1}{r^2}\left(\vlinha\right)\right|\leq C.
\end{equation}
From \eqref{nu_final} and \eqref{nereida}, we have
\begin{equation}
\label{boundsNu}
 0<-\nu_0(u)e^{-CV}\leq -\nu(u,v)\leq -\nu_0(u)e^{CV}.
\end{equation}
On ${\cal R}_+$, we
 have $1-\mu<0$ and so we can use \eqref{kappa_at_u} and \eqref{nu_final} and the definition of $f$ in (ii) to write
$$\nu(u,v)=\nu(u,\hat v(u))\exp{\left(-\int_{\hat v(u)}^v\lambda(u,v') f(r(u,v'),\omega(u,v'))\,dv'\right)}\;.$$
Using (ii), (iii) and~\eqref{rBounds} we see that
\begin{eqnarray*}
\int_{\hat v(u)}^v|\lambda(u,v')| |f(r(u,v'),\omega(u,v'))|\,dv'
&\leq& -C_f\int_{\hat v(u)}^v \lambda(u,v')\,dv'
\\
&=& -C_f[r(u,v)-r(u,\hat v(u))]
\\
&\leq&  C_fC_r.
\end{eqnarray*}
This, together with~\eqref{boundsNu}, leads to
\begin{equation}
 \label{boundsNu2}
 0<-\nu_0(u)e^{-CV}e^{-C_fC_r}\leq -\nu(u,v)\leq -\nu_0(u)e^{CV}e^{C_fC_r}\ {\rm in}\ {\cal R}_+.
\end{equation}
The inequalities \eqref{boundsNu} and \eqref{boundsNu2} show that $\nu$ is bounded and bounded away from zero in $[0,U[\times[0,V[$.

We now wish to prove that $\lambda$  is bounded. Similarly to the definition of $\hat{v}$ in \eqref{defHatv},
we define $\hat u:[0,V[\rightarrow [0,U]$ by $$
\hat u(v):=\left\{
\begin{array}{l}
 \min\{u\in [0,U[ \, :\, \omega(u,v)=\Aw\}\; \text{ if the set is nonempty},\\ \\
 U \; \text{ otherwise}.
\end{array}
\right.
$$
Fix $v\in[0,V[$.
Let $u_1\in[\hat u(v), U[$ be such that $\omega(u_1,v)\geq \Aw$. Then, from (iii), there exists $\delta_1>0$ such that in $I_1:=[u_1,u_1+\delta_1[$ we have
\begin{eqnarray*}
\lambda(u,v)<0
&\Rightarrow&
1-\mu(u,v) <0
\\
&\Rightarrow&
\partial_u\omega(u,v)\geq 0\qquad({\rm by}\ \eqref{omega_u})
\\
&\Rightarrow&
\omega(u,v)\geq \Aw\;, \text{ for all } u\in I_1\;.
\end{eqnarray*}
From this it follows that
\begin{equation}
\label{omegaGrande}
 \omega(u,v) \geq \Aw\  {\rm if}\ u\in[\hat u(v),U[.
\end{equation}
Together \eqref{quase} and \eqref{omegaGrande} imply ${\cal R_+}$ is a future set and ${\cal R_-}$ is a past set, and so we may rewrite
$$
{\cal R_-}=\{(u,v)\in[0,U[\times[0,V[:0\leq u\leq \hat{u}(v)\}
$$
and
$$
{\cal R_+}=\{(u,v)\in[0,U[\times[0,V[:u\geq \hat{u}(v)\}.
$$
 From~\eqref{lambda_final}, \eqref{nereida} and the fact that $\nu$ is bounded,
the function $\lambda$ is bounded on ${\cal R_-}$. On ${\cal R_+}$ we have $1-\mu<0$ and so we are allowed to
substitute (\ref{kappa_at_u}) in (\ref{lambda_u}). This leads to
$$ 
\partial_u\lambda=\lambda\left(-\,\frac{2\nu}{1-\mu}\frac{1}{r^2}\left(\vlinha\right)\right)=\lambda\big(-\nu f(r,\omega)\big)
$$ 
and
$$
\lambda(u,v)=\lambda(\hat u(v),v)\exp\left\{-\int_{\hat u(v)}^u \nu(u',v) f(r(u',v),\omega(u',v))\,du'\right\}.
$$
Since $\nu$ is bounded, $f$ is bounded on ${\cal R_+}$, and $\lambda$ is bounded on ${\cal R_-}$, we conclude that
$\lambda$ is bounded on ${\cal R_+}$.

Now we combine (\ref{theta_final}) and (\ref{zeta_final}) to bound the pair
$(\theta,\zeta)$. We obtain
\begin{eqnarray*}
\theta(u,v)&=&\theta_0(v)-\int_0^u\zeta_0(u')\left(\frac{\lambda}{r}\right)(u',v)\,du'\\
&&+\int_0^u\left(\frac{\lambda}{r}\right)(u',v)\int_0^v\left(\frac{\theta\nu}{r}\right)(u',v')\,dv'du'.
\end{eqnarray*}
Let $0<\eps<V$
and
$\Theta_\eps(u)=\max_{v\in[0,V-\eps]}|\theta(u,v)|$. We have
\begin{eqnarray*}
\Theta_\eps(u)&\leq&\Theta_0+C+C\int_0^u\Theta_\eps(u')\,du',
\end{eqnarray*}
where $\Theta_0=\max_{v\in[0,V]}|\theta_0(v)|$ and $C$ is a positive constant independent of $\eps$.
Using Gronwall's inequality,
$$
\Theta_\eps(u)\leq (\Theta_0+C)e^{Cu}\leq (\Theta_0+C)e^{CU}.
$$
This shows that $\sup_{(u,v)\in[0,U[\times[0,V[}|\theta(u,v)|$ is finite and a similar computation shows that
$\sup_{(u,v)\in[0,U[\times[0,V[}|\zeta(u,v)|$ is finite.

Since  $\nu$ is bounded away from zero, \eqref{kappa_final} shows $\kappa$ is also bounded away from zero.
Integrating~\eqref{omega_v}, we conclude that
 $\omega$ is bounded.
\end{proof}

\begin{Rmk}
The proof that $r\to 0$ implies $\varpi \to +\infty$ requires $e \neq 0$ and $\kappa > 0$. The Schwarzschild black hole and the region $r<r_-$ of the Reissner-Nordstr\"{o}m solution, where $r \to 0$ but $\varpi$ is constant, show that these conditions are indeed necessary. The boundedness of $r$ is only needed in the proof of the converse implication; this hypothesis cannot be removed, as shown by one of the Wyman solutions\/ {\rm \cite{Wyman}} for the Einstein-scalar field equations (with $e=0$, but charge does not play any role here):
\[
g = - r^2 dt^2 + 2dr^2 + r^2 \sigma_{\mathbb{S}^2}, \qquad \phi = t.
\]
For this metric $\varpi = \frac{r}4$ tends to $+ \infty$ as $r \to +\infty$.
\end{Rmk}

The proof of the previous lemma also yields
\begin{Lem}[If $0<c_r\leq r\leq C_r$, then $\nu$, $\lambda$, $\omega$, $\theta$, $\zeta$ and $\kappa$ are bounded]\label{blow_up_a}
Under the conditions of the previous lemma, suppose
that $r$ is bounded below by a strictly positive constant $c_r$ (and above by a constant $C_r$). Then
$\nu$, $\lambda$, $\omega$, $\theta$, $\zeta$ and $\kappa$ are bounded in $[0,U[\times[0,V[$. Moreover,
$\nu$ and $\kappa$ are  bounded away from zero.
The bounds depend only on $c_r$ and $C_r$ (as well as on the initial data, on $U$ and on $V$).
\end{Lem}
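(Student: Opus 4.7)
The plan is to extract the estimates already assembled in the proof of Lemma~\ref{blow_up_b_semSinal} and verify that each constant produced depends only on $c_r$, $C_r$, the initial data, $U$ and $V$. The structural device is the same as before: choose $A_\omega$ large (depending on $c_r$, $C_r$ and $\|\omega_0\|_\infty$) so that conditions (i)--(iii) of that proof hold, and partition $[0,U[\,\times[0,V[$ into the past set $\mathcal{R}_-=\{\omega<A_\omega\}$ and the future set $\mathcal{R}_+=\{\omega\geq A_\omega\}$. Note that the upper bound $0<\kappa\leq\|\kappa_0\|_{L^\infty([0,V])}$ is immediate from (\ref{kappa_final}), since $\zeta^2/(r\nu)\leq 0$ by Lemma~\ref{sign}.

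First I would bound $\nu$. On $\mathcal{R}_-$, the integrand appearing in (\ref{nu_final}) is dominated by a constant depending only on $c_r,C_r,A_\omega,\|\kappa_0\|_\infty$ and the charge/cosmological parameters, so $\nu$ is pinched between $\nu_0(u)e^{\pm CV}$. On $\mathcal{R}_+$ I would invoke the algebraic constraint (\ref{kappa_at_u}) to rewrite (\ref{nu_v}) as $\partial_v\nu=-\nu\lambda f(r,\omega)$; since $|f|\leq C_f$ on $\mathcal{R}_+$ by construction, the $v$-integral of $\lambda$ collapses into the telescoping difference $r(u,v)-r(u,\hat v(u))$, bounded in absolute value by $C_r$, yielding uniform upper and lower bounds. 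The same split argument bounds $\lambda$: the integral formula (\ref{lambda_final}) works directly on $\mathcal{R}_-$, and on $\mathcal{R}_+$ the constraint substitution turns (\ref{lambda_u}) into $\partial_u\lambda=-\lambda\nu f(r,\omega)$, integrable in $u$ starting from the boundary curve $u=\hat u(v)$, where the value of $\lambda$ is already controlled by the $\mathcal{R}_-$ estimate.

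Once $\nu$ and $\lambda$ are bounded (and $\nu$ bounded away from zero), the pair of integral identities (\ref{theta_final})--(\ref{zeta_final}) becomes a linear system in $(\theta,\zeta)$ with bounded coefficients, and a standard iterated-Gronwall estimate yields the desired uniform bounds on $\theta$ and $\zeta$. The lower bound on $\kappa$ then follows from (\ref{kappa_final}), since its exponent is now controlled by $\|\zeta\|_\infty^2 U / (c_r\inf|\nu|)$, and finally integrating (\ref{omega_final}) in $u$ produces a uniform bound on $\omega$. The step I expect to be the main obstacle is the $\mathcal{R}_+$ bound on $\nu$ (and symmetrically on $\lambda$): the naive integrand $2\kappa r^{-2}(\varpi-e^2/r-\Lambda r^3/3)$ cannot be controlled a priori there, and one must exploit the algebraic identity $\lambda=\kappa(1-\mu)$ to trade its growth for the bounded geometric quantity $C_r$ via the telescoping in $r$.
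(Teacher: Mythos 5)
Your proposal is correct and follows essentially the same route as the paper, which gives no separate argument for this lemma but simply observes that the proof of Lemma~\ref{blow_up_b_semSinal} already yields it: the same choice of $A_\omega$, the same partition into ${\cal R}_-$ and ${\cal R}_+$, the same constraint substitution $\lambda=\kappa(1-\mu)$ producing the telescoping bound $-C_f\int\lambda\leq C_fC_r$, and the same Gronwall step for $(\theta,\zeta)$ followed by the lower bound on $\kappa$ and the bound on $\omega$. Your explicit tracking of the dependence of each constant on $c_r$, $C_r$, the initial data, $U$ and $V$ is exactly the content the paper leaves implicit.
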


Using Lemmas~\ref{blow_up_b_semSinal} and~\ref{blow_up_a}, we can prove
\begin{Thm}[Breakdown criterion]\label{breakdown}
Suppose that $(r,\nu,\lambda,\omega,\theta,\zeta,\kappa)$ is the maximal solution of the characteristic initial value problem with
initial data satisfying\/ {\rm (h1)$-$(h3)}, as in\/ {\rm Theorem \ref{maximal}}.
If $(\myU,V')$ is a point on the boundary of $\mysigma$ with $0<\myU<\myumax$ and $V'>0$, then
for all sequences $(u_n,v_n)$ in $\mysigma$ converging to $(\myU,V')$, we have
$$
r(u_n,v_n)\to 0\quad {\rm and}\quad \omega(u_n,v_n)\to\infty.
$$
\end{Thm}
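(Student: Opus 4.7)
The proof proceeds by contradiction. Assume some sequence $(u_n,v_n)\in\mathcal{P}$ converges to $(U',V')$ with either $r(u_n,v_n)\not\to 0$ or $\varpi(u_n,v_n)\not\to\infty$. First observe that since $\mathcal{P}$ is a past set, $[0,U'[\,\times\,[0,V'[\,\subset\mathcal{P}$: for any such $(u,v)$, large $n$ gives $u_n>u$ and $v_n>v$, hence $(u,v)\in J^-(u_n,v_n)\subset\mathcal{P}$.

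The first step is to reduce both alternatives to the single statement that, along a subsequence, $r(u_n,v_n)\geq\epsilon>0$. The $r$-alternative is immediate. For the $\varpi$-alternative, extract a subsequence with $\varpi(u_n,v_n)\leq M$; if a further subsequence had $r\to 0$, then eventually $r<\min_{[0,U]}r_0$, so by Lemma~\ref{sign} $\lambda<0$ and thus $1-\mu<0$ at these points, and inequality~\eqref{bananaPequena} would force $\varpi\to+\infty$, contradicting $\varpi\leq M$.

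The second (and most delicate) step propagates this pointwise bound to a uniform lower bound $r\geq c>0$ on the whole rectangle $[0,U'[\,\times\,[0,V'[$. By Lemma~\ref{sign}, $r$ is decreasing in $u$, and $r(u,\cdot)$ is unimodal in $v$: non-decreasing on $[0,\hat v(u)]$ and non-increasing on $[\hat v(u),V_{\max}(u)[$, where $\hat v(u)$ denotes the first $v$ at which $\lambda(u,v)\leq 0$ (set $\hat v(u)=+\infty$ if no such $v$ exists). For any fixed $(u,v)$ in the rectangle, choose $n$ large so that $u_n>u$ and $v_n>\max\{v,\min(\hat v(u),V')\}$. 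The $u$-monotonicity gives $r(u,v_n)\geq r(u_n,v_n)\geq\epsilon$. If $v\geq\hat v(u)$, the $v$-monotonicity of $r(u,\cdot)$ on $[\hat v(u),V_{\max}(u)[$ yields $r(u,v)\geq r(u,v_n)\geq\epsilon$; if $v<\hat v(u)$, unimodality gives $r(u,v)\geq r(u,0)=r_0(u)\geq r_0(U')$. Hence
\[
\inf_{[0,U'[\,\times\,[0,V'[} r \ \geq\ \min\{\epsilon,r_0(U')\}=:c>0.
\]
Since also $r(u,v)\leq \tilde r_0(v)$ is bounded above on $[0,V']$ by continuity of $\tilde r_0$, Lemma~\ref{blow_up_a} delivers uniform bounds on $\nu,\lambda,\varpi,\theta,\zeta,\kappa$ throughout $[0,U'[\,\times\,[0,V'[$, with $\nu$ and $\kappa$ bounded away from zero.

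The last step produces the contradiction by extending the solution past $(U',V')$. The uniform bounds and the integral equations~\eqref{kappa_final}$-$\eqref{r_final} show that all unknowns are uniformly Lipschitz, so they extend continuously to the closed rectangle $[0,U']\times[0,V']$. For any $v_0\in[0,V'[$, take the restriction to $[0,U']\times\{v_0\}$ as new $u$-axis data and the original data on $\{0\}\times[v_0,\infty[$ as new $v$-axis data; the regularity and sign conditions (h1)$-$(h2) are satisfied, and the compatibility conditions (h3) follow because \eqref{r_u}, \eqref{omega_v} and \eqref{kappa_at_u} hold throughout $\mathcal{P}$ by Theorem~\ref{existence}. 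After the translation $\tilde v=v-v_0$, the second part of Theorem~\ref{existence} applied with $U=U'$ yields a solution on $[0,U']\times[v_0,v_0+\tilde V]$. The key point is that the size $\tilde V$ depends only on bounds of the initial data, which are uniform in $v_0$ by Lemma~\ref{blow_up_a}; hence $\tilde V\geq\tilde V_*>0$ for some $v_0$-independent $\tilde V_*$. Choosing $v_0>V'-\tilde V_*$ and gluing via Proposition~\ref{uniqueness} produces a solution on a rectangle containing $(U',V')$, so $(U',V')\in\mathcal{P}$, contradicting its being a boundary point. The main obstacle is precisely this non-degeneration of $\tilde V$ as $v_0\to V'$, which is exactly what the uniformity in Lemma~\ref{blow_up_a} secures.
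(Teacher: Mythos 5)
Your overall strategy coincides with the paper's: argue by contradiction, reduce both alternatives to a subsequence along which $r\geq\epsilon>0$ (using \eqref{bananaPequena}), upgrade this to a uniform positive lower bound for $r$ via the $u$-monotonicity and the $v$-unimodality coming from Lemma~\ref{sign}, invoke Lemma~\ref{blow_up_a} for uniform bounds on the remaining quantities, and finally use Theorem~\ref{existence} to extend the solution and contradict maximality. The first two thirds of your argument are correct and essentially identical to the paper's proof.

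The problem is in the final step. A technical point first: under (h1) the data $\nu_0$, $\zeta_0$, $\lambda_0$, $\theta_0$, $\kappa_0$ are merely continuous, so the solution is not Lipschitz in general --- for instance $\nu(u,v)=\nu_0(u)e^{-\int_0^v\cdots}$ inherits exactly the modulus of continuity of $\nu_0$ in the $u$-variable. What you actually have, and what suffices for the continuous extension to the closed rectangle, is uniform continuity with modulus controlled by the data and the uniform bounds; this needs to be argued rather than asserted. More seriously, your construction only extends the solution in the $v$-direction: you end with $[0,U']\times[0,v_0+\tilde{V}_*]\subset{\cal P}$ and deduce a contradiction from $(U',V')\in{\cal P}$. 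But ${\cal P}$ is a union of past sets (e.g.\ closed rectangles) and need not be open, so membership in ${\cal P}$ is compatible with lying on the topological boundary of ${\cal P}$. To genuinely contradict the hypothesis you must place a full neighborhood of $(U',V')$ inside ${\cal P}$, and for that you must also extend past $u=U'$. This is precisely where the hypothesis $0<U'<U$ --- which your proof never uses --- enters: it provides initial data on $[u_n,u_n+\tilde{U}]\times\{0\}$ with which to run the first half of Theorem~\ref{existence} in the $u$-direction. The paper does this symmetrically, and moreover avoids the continuous-extension step altogether by working on the closed rectangles $[0,u_n]\times[0,v_n]$, which already lie in ${\cal P}$: it attaches strips of uniform widths $\tilde{U}$ and $\tilde{V}$ in the $u$- and $v$-directions, then applies Theorem~\ref{existence} once more with data on $[u_n,u_n+\tilde{U}]\times\{v_n\}\cup\{u_n\}\times[v_n,v_n+\tilde{V}]$ to fill in the corner, obtaining $[0,u_n+\delta]\times[0,v_n+\delta]\subset{\cal P}$ with $\delta$ independent of $n$; for $n$ large this contains a neighborhood of $(U',V')$. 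You should repair your last step along these lines.
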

\begin{proof}
According to \eqref{bananaPequena},  $${\rm if}\ r(u_n,v_n)\to 0,\ {\rm then}\ \omega(u_n,v_n)\to\infty.$$
Suppose, then, that the statement is not true. Then, after possibly extracting a subsequence, there exists a sequence $(u_n,v_n)$ in $\mysigma$ converging to $(\myU,V')$ such that
$r(u_n,v_n)$ is bounded below by a strictly positive constant.
Without loss of generality, we may assume $v_n\leq V'+1$.
We now show $$\min_{(u,v)\in[0,u_n]\times[0,v_n]}r(u,v)=\min_{v\in[0,v_n]}r(u_n,v)=\min\left\{r(u_n,0),r(u_n,v_n)\right\}.$$
Indeed, $r$ is decreasing with $u$.
If there existed $v'\in]0,v_n[$ with
\begin{equation}\label{below}
r(u_n,v')<\min\left\{r(u_n,0),r(u_n,v_n)\right\},
\end{equation}
then, by the mean value theorem, $\lambda(u_n,v'')<0$ for some $0<v''<v'$. By Lemma~\ref{sign}, this would imply $\lambda(u_n,v)<0$ for $v>v''$ and
hence $r(u_n,v')>r(u_n,v_n)$, contradicting \eqref{below}.
As
$$
r(u_n,0)\geq\min_{u'\in[0,\myumax]}r(u',0),
$$
we thus have uniform bounds (in $n$) for $r$ from below. Also, $r$ is uniformly bounded above by $\max_{v\in[0,V'+1]}r(0,v)$.

Fix a value of $n$. From the uniform boundedness of $r$ and Lemma~\ref{blow_up_a},
 we also have uniform bounds
for $\nu$, $\lambda$, $\omega$, $\theta$, $\zeta$ and $\kappa$
on $[0,u_n]\times[0,v_n]$.
By Theorem~\ref{existence},
there exist $\tilde{U}$ and $\tilde{V}$, independent of $n$, such that
the set $[0,u_n+\tilde{U}]\times[0,v_n]\cup [0,u_n]\times[0,v_n+\tilde{V}]$
is contained in ${\cal P}$, with uniform bounds on the solutions.
Applying again Theorem~\ref{existence}, now with initial data on $[u_n,u_n+\tilde{U}]\times\{v_n\}\cup
\{u_n\}\times[v_n,v_n+\tilde{V}]$, we see that for some positive $\delta$ independent of $n$
the region $[u_n,u_n+\delta]\times[v_n,v_n+\delta]$ is contained in ${\cal P}$.
For sufficiently large $n$, the domain of existence of the solution would contain $(\myU,V')$, thus contradicting the fact that $(\myU,V')$ belongs to the boundary of $\mysigma$.
The proof is complete.
\end{proof}

Alternatively, the previous result can be viewed as an extendibility condition in the following
way.

\begin{Cor}[Extendibility condition]\label{prolonga} 
Suppose $(r,\nu,\lambda,\omega,\theta,\zeta,\kappa)$ is the solution of the characteristic initial value problem on $[0,\myU[\times[0,V'[$ with $0<\myU<\myumax$
with initial data satisfying\/ {\rm (h1)$-$(h3)}.
If
$r$ is bounded above and below, then there exists a positive $\delta$ such that the solution can be extended to
$[0,\myU+\delta]\times[0,V'+\delta]$.
\end{Cor}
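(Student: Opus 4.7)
The plan is to view Corollary \ref{prolonga} as the contrapositive of the breakdown criterion (Theorem \ref{breakdown}), combined with a local application of the existence result Theorem \ref{existence}. By hypothesis there are constants $0<c_r\leq r\leq C_r$ on $[0,\myU[\times[0,V'[$, so Lemma \ref{blow_up_a} yields uniform bounds on $|\nu|$, $|\lambda|$, $|\varpi|$, $|\theta|$, $|\zeta|$, $|\kappa|$ on this set, together with strictly positive lower bounds on $|\nu|$ and on $\kappa$.

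Next I would show that the closed rectangle $[0,\myU]\times[0,V']$ is contained in the maximal past set $\mysigma$. Let $(u^*,v^*)$ be any point of the closed rectangle and $(u_n,v_n)$ a sequence in $[0,\myU[\times[0,V'[\subset\mysigma$ converging to it (for instance $(u_n,v_n)=(u^*(1-1/n),v^*(1-1/n))$). If $u^*=0$ or $v^*=0$, then $(u^*,v^*)\in\mysigma$ by the explicit neighborhood of the initial axes provided by Theorem \ref{maximal}. Otherwise $0<u^*\leq \myU<\myumax$ and $0<v^*\leq V'$; if $(u^*,v^*)$ lay on the boundary of $\mysigma$, Theorem \ref{breakdown} would force $r(u_n,v_n)\to 0$, contradicting $r\geq c_r$. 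Hence $[0,\myU]\times[0,V']\subset\mysigma$.

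To finish, the uniform bounds obtained in the first step extend by continuity to the closed rectangle $[0,\myU]\times[0,V']$. For each point of the top-and-right edge $\{\myU\}\times[0,V']\cup[0,\myU]\times\{V'\}$ one applies Theorem \ref{existence}, translated to a new origin, to the characteristic IVP whose initial data come from restricting the known solution to the two characteristic segments emanating from that point. Because the data are uniformly bounded and $r$, $|\nu|$, $\kappa$ are uniformly bounded away from zero, the local existence time produced by Theorem \ref{existence} can be chosen independently of the basepoint—this is exactly the uniformity argument used in the final paragraph of the proof of Theorem \ref{breakdown}. Covering the two edges by finitely many such local extensions and invoking the gluing remark after Definition \ref{solution} to patch them, one obtains an extension to $[0,\myU+\delta]\times[0,V'+\delta]$ for some $\delta>0$.

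The main obstacle is essentially absent: this corollary is a repackaging of what the proof of Theorem \ref{breakdown} already establishes. The only mildly delicate point is ensuring the uniformity of the local existence time along the top and right edges, which follows from the uniform upper and lower bounds furnished by Lemma \ref{blow_up_a} and continuity on the closed rectangle.
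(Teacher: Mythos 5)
Your proposal is correct and follows essentially the same route as the paper, which offers no separate proof precisely because the corollary is a repackaging of the final paragraphs of the proof of Theorem~\ref{breakdown}: uniform bounds from Lemma~\ref{blow_up_a}, uniform local existence from Theorem~\ref{existence}, and gluing. The only cosmetic difference is that you cover the top and right edges pointwise, whereas the paper's argument extends in one stroke to $[0,\myU+\tilde{U}]\times[0,V']\cup[0,\myU]\times[0,V'+\tilde{V}]$ and then fills in the corner rectangle; both work.
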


\begin{Rmk}
A rectangle in the region $r_+ < r < r_c$ of the Reissner-Nordstr\"om-de Sitter, with open edges on the event horizon $r=r_+$ and the cosmological horizon $r=r_c$, would appear to be a counter-example to the extendibility condition above, as the singularity prevents the extension; however, it does not satisfy the hypothesis $\kappa_0(v)>0$ in\/ {\rm (h2)}, since the null segments where the initial data is prescribed must necessarily extend past the event and cosmological horizons.
\end{Rmk}

\section{Derivation of the Einstein equations from the first order system}\label{Miranda}

We assume now the additional regularity condition
\[
\text{(h4)} \qquad \nu_0, \kappa_0 \text{ and } \lambda_0 \text{ are continuously differentiable.}
\]
Note that since $\lambda_0$ is continuous, $\tilde{r}_0$ is continuously differentiable. By (\ref{kappa_at_zero}), if $\kappa_0$
is continuously differentiable, then $\lambda_0$ is continuously differentiable. Conversely, if $\lambda_0$ is continuously differentiable
and the term inside the brackets in (\ref{kappa_at_zero}) is different from zero, then $\kappa_0$ is continuously differentiable.
\begin{Lem}[Hypotheses (h1)$-$(h4) imply that $r$ is $C^2$]\label{regular} 
Suppose that $(r,\nu,\lambda,\omega,\theta,\zeta,\kappa)$ is the solution of the characteristic initial value problem
with initial data satisfying\/ {\rm (h1)} to\/ {\rm (h4)}.
Then the function $r$ is $C^2$.
\end{Lem}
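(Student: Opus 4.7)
The goal is to verify continuity of $\partial_u r$, $\partial_v r$, $\partial_u^2 r$, $\partial_u\partial_v r$ and $\partial_v^2 r$ on $\mysigma$. The first two equal $\nu$ and $\lambda$, continuous by Definition~\ref{solution}. Both mixed partials exist and coincide with the continuous right-hand side of \eqref{lambda_u}=\eqref{nu_v}, which as a byproduct also gives $\partial_u\partial_v r = \partial_v\partial_u r$. The task therefore reduces to showing that $\partial_u \nu$ and $\partial_v \lambda$ are continuous.

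For $\partial_u \nu$, I would differentiate the Volterra-type representation \eqref{nu_final} in $u$ via Leibniz's rule. The prefactor $\nu_0'$ is continuous thanks to the strengthened hypothesis (h4). The $u$-derivative of the integrand $2\kappa r^{-2}\bigl(\vlinha\bigr)$ is continuous, since $\partial_u r = \nu$, $\partial_u \omega$ equals the continuous right-hand side of \eqref{omega_u}, and $\partial_u \kappa$ equals the continuous right-hand side of \eqref{kappa_u}. Hence Leibniz applies and $\partial_u \nu$ is continuous.

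For $\partial_v \lambda$ the situation is more delicate, because the first-order system supplies $\partial_u \kappa$ but not $\partial_v \kappa$. I would first establish the continuity of $\partial_v \kappa$ by differentiating \eqref{kappa_final} in $v$: the prefactor $\kappa_0'$ is continuous by (h4), and the $v$-derivative of the integrand $\zeta^2/(r\nu)$ is continuous, since $\partial_v r = \lambda$, $\partial_v \nu$ is the right-hand side of \eqref{nu_v}, and $\partial_v \zeta$ is the right-hand side of \eqref{zeta_v}. With $\partial_v \kappa$ in hand, I would exploit the already propagated algebraic constraint $\lambda = \kappa(1-\mu)$ from \eqref{kappa_at_u} and differentiate it in $v$:
\[
\partial_v \lambda = (\partial_v \kappa)(1-\mu) + \kappa\,\partial_v(1-\mu),
\]
where $\partial_v(1-\mu)$ is continuous because $\partial_v \omega$ is the right-hand side of \eqref{omega_v} and $\partial_v r = \lambda$. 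This closes the argument and yields $r \in C^2$.

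The main obstacle is precisely the asymmetry just exposed: the evolution equations of the system directly deliver $\partial_u \kappa$ but not $\partial_v \kappa$, so one must detour through the integral representation \eqref{kappa_final} and a differentiation under the integral sign, and it is here that the $C^1$ assumption on $\kappa_0$ in (h4) becomes indispensable. The $C^1$ assumption on $\nu_0$ plays the analogous role for $\partial_u \nu$, while continuous differentiability of $\lambda_0$ is essentially forced by that of $\kappa_0$ through \eqref{kappa_at_zero}, as noted just before the lemma.
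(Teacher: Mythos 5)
Your proposal is correct and follows essentially the same route as the paper: both arguments read off the mixed partials from \eqref{lambda_u}--\eqref{nu_v}, obtain $\partial_u\nu$ by differentiating \eqref{nu_final} using $\nu_0\in C^1$, and obtain $\partial_v\kappa$ by differentiating \eqref{kappa_final} using $\kappa_0\in C^1$. The only (cosmetic) difference is the last step: the paper gets $\partial_v\lambda$ from the integral representation \eqref{lambda_final} (which also requires $\partial_v\kappa$ and uses $\lambda_0\in C^1$), whereas you differentiate the propagated constraint \eqref{kappa_at_u}; both are valid.
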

\begin{proof}
Since $\partial_u\partial_v r=\partial_u\lambda$ and $\partial_v\partial_u r=\partial_v\nu$, the mixed second derivatives of $r$ are continuous.
Assume $\kappa_0$ and $\lambda_0$ are continuously differentiable. From~(\ref{kappa_final}), $\partial_v k$ exists and is continuous.
Then (\ref{lambda_final}) implies $\partial_v\lambda$ exists and is continuous. Therefore, $\partial_v\partial_v r$ exists and is continuous.
On the other hand, assume $\nu_0$ is continuously differentiable. Then (\ref{nu_final}) implies $\partial_u\nu$ exists and is continuous.
Therefore, $\partial_u\partial_u r$ exists and is continuous.
\end{proof}

The type of arguments used in the proof of the previous lemma allow us to establish 
\begin{Prop}[Regularity of the solution of the first order system]\ \label{regularity}
\begin{enumerate}[{\rm (i)}]
\item If the initial data satisfy\/ {\rm (h1)} to\/ {\rm (h4)}, that is, $r_0\in C^2$, $\nu_0$, $\lambda_0$, $\varpi_0$ and $\kappa_0\in C^1$, and $\theta_0$ and $\zeta_0\in C^0$, then
$r\in C^2$, $\nu$, $\lambda$, $\varpi$ and $\kappa\in C^1$, and $\theta$ and $\zeta\in C^0$.
\item If $r_0\in C^3$, $\nu_0$, $\lambda_0$, $\varpi_0$ and $\kappa_0\in C^2$, and $\theta_0$ and $\zeta_0\in C^1$, then
$r\in C^3$, $\nu$, $\lambda$, $\varpi$ and $\kappa\in C^2$, and $\theta$ and $\zeta\in C^1$.
\end{enumerate}
\end{Prop}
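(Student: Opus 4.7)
The plan is to bootstrap regularity from the equations~\eqref{r_u}--\eqref{kappa_u} together with the integral representations~\eqref{kappa_final}--\eqref{r_final}, along exactly the lines already used in Lemma~\ref{regular}. The guiding observation is that every unknown satisfies an evolution equation in one null direction and an integral identity in the other, so for each function one of the two partial derivatives is furnished \emph{for free} by the PDE, while the transverse derivative has to be extracted from the initial data via Leibniz's rule.

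First I would record what follows from (h1)--(h3) alone. From~\eqref{r_u}--\eqref{r_v}, $r\in C^1$. From~\eqref{omega_u}--\eqref{omega_v}, the continuity of $r,\nu,\lambda,\varpi,\theta,\zeta,\kappa$, together with the fact that $r,|\nu|,\kappa$ are bounded away from zero, gives $\varpi\in C^1$. Equations~\eqref{lambda_u}, \eqref{nu_v}, \eqref{theta_u}, \eqref{zeta_v}, \eqref{kappa_u} then yield continuity of $\partial_u\lambda$, $\partial_v\nu$, $\partial_u\theta$, $\partial_v\zeta$, $\partial_u\kappa$. What is missing for part~(i) is continuity of $\partial_v\kappa$, $\partial_u\nu$, $\partial_v\lambda$ (and hence the two remaining second derivatives of $r$).

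Under (h4), I would carry these out in the order $\kappa\to\nu\to\lambda$, at each step differentiating an integral representation under the integral sign and using the previously established continuity of the $u$- and $v$-derivatives of the integrand. Concretely, from~\eqref{kappa_final}, $\kappa_0\in C^1$ together with continuity of $\partial_v\zeta$, $\partial_v r=\lambda$ and $\partial_v\nu$ justifies Leibniz's rule and gives $\partial_v\kappa\in C^0$. Feeding this into~\eqref{nu_final}, the integrand $2\kappa r^{-2}\bigl(\vlinha\bigr)$ is now $C^1$ in $u$ (using $\partial_u\kappa,\partial_u\varpi\in C^0$ and $\partial_ur=\nu$), so $\nu_0\in C^1$ yields $\partial_u\nu\in C^0$. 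Finally,~\eqref{lambda_final} with $\lambda_0\in C^1$ and the now-available $\partial_v\kappa,\partial_v\nu,\partial_v\varpi,\partial_v r$ gives $\partial_v\lambda\in C^0$. Since $\partial_{uu}r=\partial_u\nu$, $\partial_{vv}r=\partial_v\lambda$ and the mixed derivatives $\partial_u\lambda=\partial_v\nu$ are continuous (and equal), we conclude $r\in C^2$. This settles part~(i).

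For part~(ii), I would repeat the same bootstrap one level higher: assuming the initial data in the classes $C^3$/$C^2$/$C^1$ as stated, the equations~\eqref{r_u}--\eqref{kappa_u} together with part~(i) immediately promote one partial derivative of each unknown to $C^1$, and the transverse derivative is then recovered from the integral representations~\eqref{kappa_final}--\eqref{r_final} by differentiating twice under the integral sign, using the strengthened regularity of $\nu_0,\lambda_0,\varpi_0,\kappa_0$ to handle the boundary terms and of $\theta_0,\zeta_0$ to handle the newly needed $\partial_u\zeta$ and $\partial_v\theta$. The main nuisance, rather than any conceptual obstacle, is keeping track of the order of the chain: one must verify $\partial_v\kappa$ before attempting $\partial_v\lambda$, and one must verify $\partial_u\zeta,\partial_v\theta$ (from differentiating~\eqref{zeta_final} and~\eqref{theta_final}) before estimating their mixed second derivatives. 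Since $r$ and $|\nu|$ are bounded away from zero throughout, no denominators cause trouble and the Leibniz rule applies in the standard form.
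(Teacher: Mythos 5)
Your proposal is correct and follows essentially the same route as the paper: one null derivative of each unknown is read off from the evolution equations \eqref{r_u}--\eqref{kappa_u}, and the transverse derivative is obtained by differentiating the integral representations \eqref{kappa_final}--\eqref{r_final} under the integral sign, bootstrapping in the order $\kappa\to\nu,\lambda\to r$ exactly as in Lemma~\ref{regular} and its extension to second derivatives (including the need for $\partial_u\zeta$ and $\partial_v\theta$ in part~(ii)). No gaps.
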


\begin{proof}\
\begin{enumerate}[{\rm (i)}]
\item Taking into account the proof of Lemma~\ref{regular}, we just have to notice that $\partial_u\varpi$, $\partial_v\varpi$ and  $\partial_u\kappa$
are continuous according to~\eqref{omega_u}, \eqref{omega_v} and~\eqref{kappa_u}.
\item We just provide a sketch of the proof. We start by checking that $\partial_v\kappa$ is $C^1$.
The expression for $\partial_v\kappa$ involves $\kappa_0$, $\kappa_0'$, $r$, $\nu$, $\zeta$, $\partial_v\nu$ (which in turn involves  $r$, $\nu$, $\varpi$ and $\kappa$), $\lambda$ and $\partial_v\zeta$ 
(which in turn involves  $r$, $\nu$ and $\theta$).
These functions have continuous derivatives with respect to $u$ and $v$.
Indeed, $\kappa_0$ is $C^2$,
$\partial_u\nu$ is obtained from~\eqref{nu_final} using the fact that $\nu_0$ is $C^1$,
$\partial_u\zeta$ is obtained from~\eqref{zeta_final} using the fact that $\zeta_0$ is $C^1$,
$\partial_v\lambda$ is obtained from~\eqref{lambda_final} using the fact that $\lambda_0$ is $C^1$, and
$\partial_v\theta$ is obtained from~\eqref{theta_final} using the fact that $\theta_0$ is $C^1$.
The expression for $\partial_u\kappa$ involves $r$, $\nu$, $\zeta$ and $\kappa$. These functions have continuous derivatives with respect to $u$ and $v$.
This shows that $\kappa$ is $C^2$. 

Using the fact that $\kappa$ is $C^2$, one shows in a similar way that $\partial_u\nu$, $\partial_v\nu$, $\partial_u\lambda$ and $\partial_v\lambda$ are $C^1$.
This guarantees that $r$ is $C^3$. The remaining assertions follow easily.
\end{enumerate}
\end{proof}

\vspace{4mm}

Using \eqref{theta_u} and \eqref{zeta_v}, one immediately checks that
$$
\partial_u\Bigl(\frac\theta r\Bigr)=\partial_v\Bigl(\frac\zeta r\Bigr).
$$
Since ${\cal P}$ is simply connected, there exists a scalar field $\phi$ such that
\begin{eqnarray} 
 \theta&=&r\partial_v\phi,\label{jolie}\\
 \zeta&=&r\partial_u\phi.\label{angelina}
\end{eqnarray}
In terms of the variables of our first order system, the metric coefficient $\Omega^2$ is given by
$$
\Omega^2=-\,\frac{4\nu\lambda}{1-\mu}=-4\nu\kappa.
$$
This is equation~\eqref{omega_sq}. Recall that this equation is equivalent to~\eqref{bar_rafaeli}.

\begin{Prop}\label{einstein-2} 
 The functions $r$, $\phi$ and $\Omega$ satisfy the Einstein equations~\eqref{wave_r}, \eqref{wave_phi}, \eqref{r_uu} and~\eqref{r_vv}.
\end{Prop}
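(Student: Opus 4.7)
The strategy is direct verification: using the relations $\theta = r\partial_v\phi$, $\zeta = r\partial_u\phi$, and $\Omega^2 = -4\nu\kappa$, together with the first-order system \eqref{r_u}--\eqref{kappa_at_u} (all of which are now known to hold on $\mathcal{P}$), each of the four Einstein equations reduces to an algebraic identity among quantities already controlled.

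The plan is as follows. First I would dispatch \eqref{wave_phi}: differentiating $\theta = r\partial_v\phi$ in $u$ gives $\partial_u\theta = \nu\,\partial_v\phi + r\,\partial_u\partial_v\phi$ by \eqref{r_u}; comparing with \eqref{theta_u} written as $\partial_u\theta = -\lambda\,\partial_u\phi$ and dividing by $r$ yields \eqref{wave_phi} at once. (Equivalently, one may start from \eqref{zeta_v}.)

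Next I would handle the two Raychaudhuri equations. For \eqref{r_uu}, the identity $\Omega^2 = -4\nu\kappa$ gives $\partial_u r/\Omega^2 = \nu/(-4\nu\kappa) = -1/(4\kappa)$, so by \eqref{kappa_u}
\[
\partial_u\!\left(\frac{\partial_u r}{\Omega^2}\right) = \frac{\partial_u\kappa}{4\kappa^2} = \frac{\nu\,\zeta^2}{4\kappa r\nu^2} = \frac{\zeta^2}{4r\nu\kappa} = -\,\frac{r(\partial_u\phi)^2}{\Omega^2}.
\]
For \eqref{r_vv} the same manoeuvre works after replacing $\lambda$ by $\kappa(1-\mu)$ via the constraint \eqref{kappa_at_u}: then $\partial_v r/\Omega^2 = -(1-\mu)/(4\nu)$, and the identity \eqref{Ray} (already obtained in the proof of Lemma~\ref{sign} from the first-order system) gives
\[
\partial_v\!\left(\frac{\partial_v r}{\Omega^2}\right) = -\frac{1}{4}\,\partial_v\!\left(\frac{1-\mu}{\nu}\right) = \frac{\theta^2}{4r\nu\kappa} = -\,\frac{r(\partial_v\phi)^2}{\Omega^2}.
\]

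Finally I would verify \eqref{wave_r}. Since $\partial_u\partial_v r = \partial_v\partial_u r$ and the propagation \eqref{r_u} has already been established, I have $\partial_u\partial_v r = \partial_u\lambda$, for which the first-order system \eqref{lambda_u} gives
\[
\partial_u\partial_v r = -\,\frac{2\nu\kappa}{r^2}\left(\frac{e^2}{r}+\frac{\Lambda}{3}r^3-\omega\right).
\]
Substituting $\Omega^2 = -4\nu\kappa$ and eliminating $\omega$ through $\lambda = \kappa(1-\mu)$, i.e.\ $2\nu\kappa\omega/r^2 = \nu\kappa/r - \nu\lambda/r - \nu\kappa e^2/r^3 + \nu\kappa\Lambda r/3$, regroups the right-hand side into the form
\[
-\,\frac{\Omega^2}{4r} - \frac{\nu\lambda}{r} + \frac{\Omega^2 e^2}{4r^3} + \frac{\Omega^2 \Lambda r}{4},
\]
which is exactly \eqref{wave_r}.

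The computations are all routine; the only point requiring care is bookkeeping of signs and the judicious use of the algebraic constraint \eqref{kappa_at_u} to trade $\lambda$ for $\kappa(1-\mu)$ (and hence for $\omega$). I do not anticipate any conceptual obstacle, since all the necessary regularity (continuity of the first derivatives appearing in \eqref{wave_r}--\eqref{r_vv}) has been provided by Proposition~\ref{regularity}(i) under the additional hypothesis (h4) invoked above.
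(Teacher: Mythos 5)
Your proposal is correct and takes essentially the same route as the paper: \eqref{wave_phi} by comparing the two expressions for $\partial_u\theta$, the Raychaudhuri equations from $\tfrac{\partial_ur}{\Omega^2}=-\tfrac{1}{4\kappa}$ together with \eqref{kappa_u} and \eqref{Ray}, and \eqref{wave_r} by substituting $\Omega^2=-4\nu\kappa$ into \eqref{lambda_u} and eliminating $\varpi$. The only blemish is a sign slip in your intermediate identity, which should read $2\nu\kappa\varpi/r^2=\nu\kappa/r-\nu\lambda/r+\nu\kappa e^2/r^3-\nu\kappa\Lambda r/3$; your stated final regrouping is nevertheless the correct one.
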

\begin{proof}
Writing (\ref{theta_u}) in terms of $\phi$, using~\eqref{r_v}, we obtain
$$
\partial_u\theta=-\partial_v r\,\partial_u\phi.
$$
On the other hand,
using~\eqref{jolie}, we get
$$
\partial_u\theta=\partial_ur\,\partial_v\phi+r\partial_u\partial_v\phi.
$$
Combining these two expressions for $\partial_u\theta$, we arrive at the wave equation for $\phi$:
$$
\partial_u\partial_v\phi=-\,\frac{\partial_ur\,\partial_v\phi+\partial_vr\,\partial_u\phi}{r}.
$$

In terms of $r$ and $\Omega$, both equation (\ref{lambda_u}) and (\ref{nu_v}) become the wave equation for $r$:
$$
\partial_u\partial_vr=\frac{\Omega^2}{2}\frac{1}{r^2}\left(\vlinha\right).
$$
Using~\eqref{kappa_u}, \eqref{Ray}, \eqref{jolie} and \eqref{angelina}, we get
$$
\partial_u\left(\frac{\partial_ur}{\Omega^2}\right)=-\,\frac 14\partial_u\left(\frac{1-\mu}{\lambda}\right)=\frac{\zeta^2}{4r\nu\lambda}(1-\mu)=-r\frac{(\partial_u\phi)^2}{\Omega^2}
$$
and
$$
\partial_v\left(\frac{\partial_vr}{\Omega^2}\right)=-\,\frac 14\partial_u\left(\frac{1-\mu}{\nu}\right)=\frac{\theta^2}{4r\nu\lambda}(1-\mu)=-r\frac{(\partial_v\phi)^2}{\Omega^2}.
$$
\end{proof}

\begin{Prop}\label{einstein-3} 
The first order system~\eqref{r_u}$-$\eqref{kappa_at_u} implies the wave equation~\eqref{wave_Omega} for $\ln\Omega$.
\end{Prop}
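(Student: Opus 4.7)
My plan is to compute the mixed derivative of $\ln\Omega$ directly, using the representation $\Omega^2=-4\nu\kappa$ coming from \eqref{omega_sq} together with \eqref{kappa_at_u}. Writing $2\ln\Omega=\ln 4+\ln(-\nu)+\ln\kappa$, it suffices to compute $\partial_u\partial_v\ln(-\nu)$ and $\partial_u\partial_v\ln\kappa$ separately and add them. I will denote, as in the system, $V'=\vlinha$, so that \eqref{nu_v} reads $\partial_v\nu/\nu=-2\kappa V'/r^2$ and \eqref{kappa_u} reads $\partial_u\kappa/\kappa=\zeta^2/(r\nu)$.

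For $\partial_u\partial_v\ln\kappa$, I take $\partial_v$ of $\partial_u\kappa/\kappa=\zeta^2/(r\nu)$, using \eqref{r_v}, \eqref{nu_v} and \eqref{zeta_v} to differentiate $r$, $\nu$ and $\zeta$ in $v$. This produces three terms: a term $-2\zeta\theta/r^2$ (from $\partial_v\zeta$), a term $-\zeta^2\lambda/(r^2\nu)$ (from $\partial_v r$), and a term $+2\zeta^2\kappa V'/(r^3\nu)$ (from $\partial_v\nu$). For $\partial_u\partial_v\ln(-\nu)$, I take $\partial_u$ of $-2\kappa V'/r^2$. Using \eqref{r_u}, \eqref{kappa_u} and \eqref{omega_u} to differentiate $r$, $\kappa$ and $\varpi$ in $u$, and using $\partial_u V'=-e^2\nu/r^2+\Lambda r^2\nu-\partial_u\varpi$ with $\partial_u\varpi=(1-\mu)\zeta^2/(2\nu)$, I obtain four contributions; the key step is to rewrite $\kappa(1-\mu)=\lambda$ via \eqref{kappa_at_u} in the term coming from $\partial_u\varpi$, which converts it into $\lambda\zeta^2/(\nu r^2)$.

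Adding the two computations, the four $\zeta^2$-terms pair up and cancel exactly: the pair $\pm 2\kappa\zeta^2 V'/(r^3\nu)$ vanishes directly, and the pair $\pm\lambda\zeta^2/(r^2\nu)$ vanishes by inspection. What remains is
\begin{equation*}
2\,\partial_u\partial_v\ln\Omega \;=\; \frac{2\kappa\nu e^2}{r^4}-2\kappa\nu\Lambda+\frac{4\kappa\nu V'}{r^3}-\frac{2\zeta\theta}{r^2}.
\end{equation*}
Expanding $V'$ and collecting, I get $-\kappa\nu\Lambda+2\kappa\nu\Lambda/3=-\kappa\nu\Lambda/3$ and the $e^2$ terms combine to $3\kappa\nu e^2/r^4$, so that dividing by $2$ yields
\begin{equation*}
\partial_u\partial_v\ln\Omega=\kappa\nu\left(\frac{3e^2}{r^4}-\frac{2\varpi}{r^3}-\frac{\Lambda}{3}\right)-\frac{\zeta\theta}{r^2}.
\end{equation*}
Finally, rewriting the right-hand side of \eqref{wave_Omega} in the variables of the first order system, with $\Omega^2=-4\nu\kappa$, $\partial_u\phi=\zeta/r$, $\partial_v\phi=\theta/r$, $\partial_u r=\nu$ and $\partial_v r=\lambda=\kappa(1-\mu)$, a direct algebraic check (using $\lambda/\kappa=1-\mu$ to combine $\nu\lambda/r^2-\nu\kappa/r^2$ with the $e^2$ and $\Lambda$ terms) shows that it equals precisely the expression above, completing the proof.

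The only real obstacle is bookkeeping: one must choose to compute $\partial_u\partial_v$ in the order that lets \eqref{kappa_at_u} be invoked at exactly the right moment (to eliminate $1-\mu$ in favor of $\lambda/\kappa$), so that the $\zeta^2$-terms produced by the two halves cancel cleanly. No further input beyond the first order equations and the algebraic constraint \eqref{kappa_at_u} is needed.
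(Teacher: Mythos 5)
Your proof is correct and follows essentially the same route as the paper: both start from $\Omega^2=-4\nu\kappa$, compute the mixed derivative of $\ln\Omega$ from the first-order equations, and use \eqref{kappa_at_u} to trade $\kappa(1-\mu)$ for $\lambda$ so that the $\zeta^2$-terms cancel — your bookkeeping (differentiating $\ln(-\nu)$ in $v$ first and $\ln\kappa$ in $u$ first, so that each factor is differentiated in the direction for which the system gives an equation) is a slightly cleaner variant that avoids the $\partial_u\nu$ terms the paper has to cancel by hand. The one point you pass over silently, and which the paper makes explicit, is the equality of mixed partials needed to add $\partial_u\bigl(\partial_v\ln(-\nu)\bigr)$ and $\partial_v\bigl(\partial_u\ln\kappa\bigr)$ into a single mixed derivative of $2\ln\Omega$ in a consistent order (the paper shows $\partial_u\partial_v\nu$ is continuous and $\nu$ is $C^1$, hence $\partial_v\partial_u\nu=\partial_u\partial_v\nu$); this is available under the standing hypothesis (h4) of the section, but should be stated.
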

\begin{proof}
We start by
differentiating both sides of~\eqref{omega_sq} with respect to $u$. Using~\eqref{kappa_u} and then~\eqref{omega_sq},
\begin{eqnarray}
2\Omega\partial_u\Omega&=&-4\partial_u\nu\kappa-4\nu\left[\kappa\nu\frac 1r\left(\frac{\zeta}{\nu}\right)^2\right] \nonumber\\
&=&\Omega^2\left(\frac{\partial_u\nu}{\nu}+\frac{\zeta^2}{r\nu}\right). \label{egg_art}
\end{eqnarray}
Equation (\ref{nu_v}), written in terms of $\Omega^2$, becomes
\begin{equation}
\partial_v\nu=\frac{\Omega^2}{2}\frac{1}{r^2}\left(\vlinha\right).
\label{nu_v_altalt}
\end{equation}
Differentiating both sides with respect to $u$, using \eqref{egg_art}, we obtain
\begin{eqnarray}
\partial_u\partial_v\nu&=&\partial_u\left(\frac{\Omega^2}{2}\frac{1}{r^2}\left(\vlinha\right)\right)\nonumber\\
&=&\frac{\Omega^2}{2}\left(\frac{\partial_u\nu}{\nu}+\frac{\zeta^2}{r\nu}\right)
\frac{1}{r^2}\left(\vlinha\right)\nonumber\\
&&-\frac{\Omega^2}{r^3}\nu\left(\vlinha\right)\nonumber\\
&&+\frac{\Omega^2}{2r^2}\left(-\,\frac{e^2}{r^2}\nu+\Lambda r^2\nu+\frac{2\lambda\zeta^2}{\Omega^2}\right). \label{compenetration}
\end{eqnarray}
This shows that $\partial_u\partial_v\nu$ exists and is continuous, and so, since $\nu$ is $C^1$, $\partial_v\partial_u\nu = \partial_u\partial_v\nu$ also exists.
So, using~\eqref{zeta_v}, \eqref{jolie}, \eqref{angelina} and \eqref{egg_art}
\begin{eqnarray}
&&\partial_v\partial_u\ln\Omega=\partial_v\left(\frac{\partial_u\Omega}{\Omega}\right)\\
&&\qquad=\frac{1}{2}\partial_v\left(\frac{\partial_u\nu}{\nu}
+\frac{\zeta^2}{r\nu}\right)\nonumber\\
&&\qquad=\frac{1}{2}\left(\frac{\partial_v\partial_u\nu}{\nu}-\,\frac{\partial_u\nu\partial_v\nu}{\nu^2}+\frac{2\zeta\partial_v\zeta}{r\nu}
-\,\frac{\zeta^2}{r^2\nu}\lambda-\,\frac{\zeta^2}{r\nu^2}\partial_v\nu\right)\nonumber\\
&&\qquad=\frac{1}{2}\left(\frac{\partial_v\partial_u\nu}{\nu}-\,\frac{\partial_u\nu\partial_v\nu}{\nu^2}-\frac{2\zeta\theta}{r^2}
-\,\frac{\zeta^2}{r^2\nu}\lambda-\,\frac{\zeta^2}{r\nu^2}\partial_v\nu\right)\nonumber\\
&&\qquad=\frac{1}{2}\left(\frac{\partial_v\partial_u\nu}{\nu}-\,\frac{\partial_u\nu\partial_v\nu}{\nu^2}-2\partial_u\phi\,\partial_v\phi
-\,\frac{\zeta^2}{r^2\nu}\lambda-\,\frac{\zeta^2}{r\nu^2}\partial_v\nu\right).\label{O_uv}
\end{eqnarray}
We now replace \eqref{nu_v_altalt} and \eqref{compenetration} in \eqref{O_uv}.
After doing so, we are left with two terms that involve $\partial_u\nu$, which add up to zero,
and four other terms that involve $\zeta^2$, which also add up to zero:
\begin{eqnarray*}
\frac{1}{2}\left(\frac{\Omega^2}{2}\frac{\zeta^2}{r^3\nu^2}\left(\vlinha\right)+\frac{\lambda\zeta^2}{r^2\nu}\right.\ \ \ \ \ \ &&\\
\left.-\,\frac{\Omega^2}{2}\frac{\zeta^2}{r^3\nu^2}\left(\vlinha\right)-\,\frac{\lambda\zeta^2}{r^2\nu}\right)&=&0.
\end{eqnarray*}
Besides the term $-\partial_u\phi\partial_v\phi$, there remain three terms in the final expression. Their sum is
\begin{equation}\label{sum_three}
\frac{\Omega^2}{2}\left(-\,\frac{1}{r^3}\left(\vlinha\right)-\,\frac{e^2}{2r^4}+\frac{\Lambda}{2}\right).
\end{equation}
Replacing formula~\eqref{bar_rafaeli} for $\omega$ in (\ref{sum_three}), we get
$$
\frac{\Omega^2}{2}\left(-\,\frac{e^2}{r^4}-\frac{\Lambda}{3}+\frac{e^2}{2r^4}
+\frac{1}{2r^2}-\frac{\Lambda}{6}+\frac{2}{\Omega^2r^2}\nu\lambda
-\,\frac{e^2}{2r^4}+\frac{\Lambda}{2}\right).
$$
This simplifies to
$$
\frac{\Omega^2}{2}\left(-\,\frac{e^2}{r^4}
+\frac{1}{2r^2}+\frac{2}{\Omega^2r^2}\nu\lambda\right)=-\,\frac{\Omega^2e^2}{2r^4}+\frac{\Omega^2}{4r^2}+\frac{\partial_ur\,\partial_vr}{r^2},
$$
and so we obtain equation~\eqref{wave_Omega}.
\end{proof}

\begin{Rmk}
Since equations~\eqref{wave_r}$-$\eqref{r_vv} imply the first order system,
equations~\eqref{wave_r}$-$\eqref{r_vv} also imply~\eqref{wave_Omega}.
\end{Rmk}

Regularity for the solution of the first order system implies regularity of the metric $g$ and of the field $\phi$:
\begin{Rmk}[Regularity of the metric and the field]\ \label{rmk-reg}
\begin{enumerate}[{\rm (i)}]
\item
In the case of\/~{\rm Proposition~\ref{regularity}\,(i)}, the metric~$g$ is~$C^1$ and the field~$\phi$  is~$C^1$. 
\item
In the case of\/~{\rm Proposition~\ref{regularity}\,(ii)}, the metric~$g$ is~$C^2$ and the field~$\phi$ is~$C^2$. 
\end{enumerate}
\end{Rmk}
One can easily generalize Proposition~\ref{regularity} and Remark~\ref{rmk-reg} to higher orders of regularity.
Note that all these results also hold for the backwards problem.


\begin{thebibliography}{99}

  
\bibitem{BradyCosmic}
  {P.~Brady, I.~Moss and R.~Myers}, \emph{{Cosmic
    censorship: as strong as ever}}, Phys.\ Rev.\ Lett.~\textbf{80} (1998),
    3432--3435.
  
\bibitem{BradyCauchyMass}
P.~Brady, D.~Nunez and S.~Sinha, \emph{{Cauchy horizon
  singularity without mass inflation}}, Phys.\ Rev.~\textbf{D47} (1993),
  4239--4243.

\bibitem{BradyCauchy}
P.~Brady and E.~Poisson, \emph{{Cauchy horizon instability for
  Reissner-Nordstr\"{o}m black holes in de Sitter space}}, Class. Quantum Grav.\
  \textbf{9} (1992), 121--125.


\bibitem{Choquet:1969}
{Y.~Choquet-Bruhat and R.~Geroch}, \emph{{Global aspects of the Cauchy
  problem in general relativity}}, Comm.\ Math.\ Phys.~\textbf{14} (1969),
  329--335.

\bibitem{ChristodoulouBounded}
{D.~Christodoulou}, \emph{{Bounded variation solutions of the
  spherically symmetric Einstein-scalar field equations}}, Comm.\ Pure Appl.\ Math.~\textbf{46} (1993), 1131--1220.

\bibitem{ChristodoulouTwo}
D.~Christodoulou, \emph{Self-gravitating relativistic fluids: a
  two-phase model}, Arch.\ Rational Mech.\ Anal.~\textbf{130} (1995), 343--400.

\bibitem{ChristodoulouNaked}
{D.~Christodoulou}, \emph{{The instability of naked singularities in the gravitational collapse of a scalar field}}, Ann.\ Math.~\textbf{149} (1999), 183--217.

\bibitem{ChristodoulouGlobalnew}
{D.~Christodoulou}, \emph{{On the global initial value problem and the
  issue of singularities}}, Class.\ Quantum Grav.~\textbf{16 A} (1999), 23--35.

\bibitem{Christodoulou:2008}
{D.~Christodoulou}, \emph{{The formation of black holes in general relativity}}, EMS
  Monographs in Mathematics (2009).

\bibitem{ChruscielSCC}
{P.~Chru\'sciel}, \emph{{On uniqueness in the large of solutions of
  Einstein's equations ("strong cosmic censorship")}}, Proceedings of the Centre
  for Mathematical Analysis, Australian National University \textbf{27} (1991).

 \bibitem{ChruscielGrant}
{P.~Chru\'sciel and J.~Grant}, \emph{{On Lorentzian causality with continuous metrics}}, Class.\ Quantum Grav.~\textbf{29} (2012), 145001, 32pp.

\bibitem{relIst2}
{J.~Costa, P.~Gir\~ao, J.~Nat\'ario and J.~Silva},
  \emph{{On the global uniqueness for the Einstein-Maxwell-scalar field system
  with a cosmological constant. Part 2. Structure of the solutions and stability of the Cauchy horizon}}, 
   \href{http://arxiv.org/abs/1406.7253}{arXiv:1406.7253}.

\bibitem{relIst3}
{J.~Costa, P.~Gir\~ao, J.~Nat\'ario and J.~Silva},
  \emph{{On the global uniqueness for the Einstein-Maxwell-scalar field system
  with a cosmological constant. Part 3. Mass inflation and extendibility of the solutions}}, 
  \href{http://arxiv.org/abs/1406.7261}{arXiv:1406.7261}.
  

\bibitem{Dafermos1}
{M.~Dafermos}, \emph{{Stability and instability of the Cauchy horizon for
  the spherically symmetric Einstein-Maxwell-scalar field equations}}, Ann.\ Math.~\textbf{158} (2003), 875--928.

\bibitem{Dafermos2}
{M.~Dafermos}, \emph{{The interior of charged black holes and the problem of
  uniqueness in general relativity}}, Comm.\ Pure Appl.\ Math.~\textbf{58}
  (2005), 445--504.

\bibitem{DafermosBlack}
{M.~Dafermos}, \emph{{Black holes without spacelike singularities}},
 \href{http://arxiv.org/abs/1201.1797}{arXiv:1201.1797}.

\bibitem{DafermosProof}
{M.~Dafermos and I.~Rodnianski}, \emph{{A proof of Price's law for the
  collapse of a self-gravitating scalar field}}, Invent.\ Math.~\textbf{162}
  (2005), 381--457.


\bibitem{DafermosWavedeSitter}
M.~Dafermos and I.~Rodnianski, \emph{{The wave equation on
  Schwarzschild-de Sitter spacetimes}}, 
  \href{http://arxiv.org/abs/0709.2766}{arXiv:0709.2766}.

\bibitem{Choquet:1952}
{Y.~Four\`es-Bruhat}, \emph{{Th\'eor\`eme d'existence pour certains
  syst\`emes d'\'equations aux d\'eriv\'ees partielles non lin\'eaires}}, Acta
  Math.~\textbf{88} (1952), 141--225.

\bibitem{Vasy2014}
P.~Hintz and A.~Vasy, {\em Global analysis of quasilinear wave equations on asymptotically Kerr-de Sitter spaces},
\href{http://arxiv.org/abs/1404.1348}{arXiv:1404.1348}.



\bibitem{HolzegelSelf}
G.~Holzegel and J.~Smulevici, \emph{{Self-gravitating Klein-Gordon fields in asymptotically
  anti-de Sitter spacetimes}}, Ann.\ Henri Poincar\'e \textbf{13} (2012),
  991--1038.
  
  \bibitem{HolzegelDecay}
  G.~Holzegel and J.~Smulevici, \emph{{Decay properties of Klein-Gordon
    fields on Kerr-AdS spacetimes}}, 
    Comm.\ Pure Appl.\ Math.~{\bf 66} (2013), 1751–-1802.
  

\bibitem{HolzegelQuasimodes}
G.~Holzegel and J.~Smulevici, \emph{{Quasimodes and a lower bound on the uniform energy decay rate
  for Kerr-AdS spacetimes}}, 
  \href{http://arxiv.org/abs/1303.5944}{arXiv:1303.5944}.

\bibitem{KhanPenrose}
K.~Khan and R.~Penrose, \emph{{Scattering of two impulsive gravitational
  plane waves}}, Nature \textbf{229} (1971), 185--186.


\bibitem{KlainermanL2}
S.~Klainerman, I.~Rodnianski and J.~Szeftel, \emph{{Overview of the proof of the bounded $L^2$
  curvature conjecture}}, 
  \href{http://arxiv.org/abs/1204.1772}{arXiv:1204.1772}.

\bibitem{KommemiGlobal}
{J.~Kommemi}, \emph{{The global structure of spherically symmetric
  charged scalar field spacetimes}},
  Comm.\ Math.\ Phys.~{\bf 323} (2013), 35–-106. 

\bibitem{LeFloch2007}
P.~LeFloch and C.~Mardare,
{\em Definition and stability of Lorentzian manifolds with distributional curvature},
Port.\ Math.\ {\bf 64} (2007), 535--573. 

\bibitem{LeFloch2014}
{P.~LeFloch and J.~Smulevici}, \emph{{Weakly regular
  $T^2$--symmetric spacetimes. The future causal geometry of Gowdy spaces}},
  \href{http://arxiv.org/abs/1403.6252}{arXiv:1403.6252}.

\bibitem{LukWeak}
J.~Luk, \emph{{Weak null singularities in general relativity}},
  \href{http://arxiv.org/abs/1311.4970}{arXiv:1311.4970}.

\bibitem{LukLocal}
J.~Luk and I.~Rodnianski, \emph{{Local propagation of impulsive
  gravitational waves}}, 
   \href{http://arxiv.org/abs/1209.1130}{arXiv:1209.1130}.

\bibitem{LukLocal2}
J.~Luk and I.~Rodnianski, \emph{{Nonlinear interaction of impulsive gravitational waves for the vacuum Einstein equations}}, 
   \href{http://arxiv.org/abs/1301.1072}{arXiv:1301.1072}.

\bibitem{OriInner}
A.~Ori, \emph{{Inner structure of a charged black hole: an exact
  mass-inflation solution}}, Phys.\ Rev.\ Lett.~\textbf{67} (1991), 789--792.

\bibitem{PenroseSingularities}
R.~Penrose, \emph{{Singularities and time-asymmetry}}, General Relativity,
  an Einstein Century Survey, S. W. Hawking and W. Israel (editors), Cambridge
  University Press (1979).

\bibitem{IsraelPoisson}
E.~Poisson and W.~Israel, \emph{{Inner-horizon instability and mass inflation
  in black holes}}, Phys.\ Rev.\ Lett.~\textbf{63} (1989), 1663--1666.

\bibitem{PriceNonshperical}
R.~Price, \emph{{Nonspherical perturbations of relativistic
  gravitational collapse. 1. Scalar and gravitational perturbations}},
  Phys.\ Rev.~\textbf{D5} (1972), 2419--2438.

\bibitem{RingstromGowdy}
{H.~Ringstr\"om}, \emph{{Strong cosmic censorship in $T^3$--Gowdy spacetimes}},
 Ann.\ Math.~\textbf{170} (2009), 1181--1240.

\bibitem{RingstromCauchy}
{H.~Ringstr\"om}, \emph{{The Cauchy problem in general relativity}},
  Lectures in Mathematics and Physics, European Mathematical Society (2009).

\bibitem{SimpsonInternal}
M.~Simpson and R.~Penrose, \emph{{Internal instability in a Reissner-Nordstr\"{o}m
  black hole}}, Int.\ J.\ Theor.\ Phys.~\textbf{7} (1973), 183--197.

\bibitem{SzekeresColliding}
P.~Szekeres, \emph{{Colliding plane gravitational waves}}, J.\ Math.\ Phys.~\textbf{13} (1972), 286--294.

\bibitem{Wyman}
M.~Wyman, \emph{Static spherically symmetric scalar fields in general
  relativity}, Phys.\ Rev.~\textbf{D24} (1981), 839--841.

\end{thebibliography}
\end{document}